\newtheorem{theorem}{Theorem}[section]
\newtheorem{corollary}[theorem]{Corollary}
\newtheorem{proposition}[theorem]{Proposition}
\newtheorem{lemma}[theorem]{Lemma}
\theoremstyle{definition}
\renewcommand{\theequation}{\arabic{section}.\arabic{equation}}
\newtheorem{Def}{Definition}[section]
\theoremstyle{definition}
\theoremstyle{definition}
\newtheorem{remark}{Remark}
\theoremstyle{definition}
\newcommand{\rd}{\mathrm{d}}
\renewcommand{\epsilon}{\varepsilon}
\renewcommand{\cite}{\citet*}
\begin{document}
\makeatletter
\def\@setauthors{%
\begingroup
\def\thanks{\protect\thanks@warning}%
\trivlist \centering\footnotesize \@topsep30\p@\relax
\advance\@topsep by -\baselineskip
\item\relax
\author@andify\authors
\def\\{\protect\linebreak}%
{\authors}%
\ifx\@empty\contribs \else ,\penalty-3 \space \@setcontribs
\@closetoccontribs \fi
\endtrivlist
\endgroup } \makeatother
 \baselineskip 18pt
 \title[{{\tiny Optimal management of DC pension fund}}]
 {{\tiny
Optimal management of DC pension fund under relative performance ratio and VaR constraint}} \vskip 10pt\noindent
\author[{\tiny  Guohui Guan, Zongxia Liang, Yi xia}]
{\tiny {\tiny  Guohui Guan$^{a,\dag}$, Zongxia Liang$^{b,\ddag}$, Yi Xia$^{b,*}$}
 \vskip 10pt\noindent
{\tiny ${}^a$School of Statistics, Renmin University of China, Beijing 100872, China
\vskip 10pt\noindent\tiny ${}^b$Department of Mathematical Sciences, Tsinghua
University, Beijing 100084, China
}
  \footnote{\\
 $^{\dag}$\ e-mail: guangh@ruc.edu.cn\\
 $^{\ddag}$\ Corresponding author, e-mail:  liangzongxia@mail.tsinghua.edu.cn\\
 $^*$ Corresponding author,\ e-mail:  xia-y20@mails.tsinghua.edu.cn
  }}
\numberwithin{equation}{section}
\maketitle
\noindent
\begin{abstract}
In this paper, we investigate the optimal management of defined contribution (abbr. DC) pension plan under relative performance ratio and Value-at-Risk (abbr. VaR) constraint. Inflation risk is introduced in this paper and the financial market consists of cash, inflation-indexed zero coupon bond and a stock. The goal of the pension manager is to maximize the performance ratio of the real terminal wealth under VaR constraint. An auxiliary process is introduced to transform the original problem into a self-financing problem first. Combining linearization method, Lagrange dual method, martingale method and concavification method, we obtain the optimal terminal wealth under different cases. For convex penalty function, there are fourteen cases while for concave penalty function, there are six cases. Besides, when the penalty function and reward function are both power functions, the explicit  forms of the optimal investment strategies are obtained. Numerical examples are shown in the end of this paper to illustrate the impacts of the performance ratio and VaR constraint.

\vskip 15 pt \noindent
Keywords: Performance ratio; Value-at-Risk constraint; Inflation risk; Martingale method; Lagrange dual method; DC pension plan.
\vskip 5pt  \noindent
2010 Mathematics Subject Classifications:  91G05,  91G10, 93E20.
\vskip 5pt  \noindent
JEL Classifications: G23, G22, G11, C61.
\vskip 5pt  \noindent
Submission Classification: IB13, IB81,  IE43, IE22, IE13, IE53.
\end{abstract}
\vskip15pt
\setcounter{equation}{0}
\section{\bf Introduction}
The management of pension fund is a popular topic in actuarial science. Pension plans serve as a means of financial stability and security after retirement. For an individual, joining in a pension plan is an efficient way to ensure a comfortable retired life. Generally, there are two kinds of pension funds: defined contribution (abbr. DC) pension fund and defined benefit (abbr. DB) pension fund. In the DC pension plan, the contribution rate is fixed in advance and the benefits after retirement depend on the wealth of the fund at retirement. As the risk in the DC pension plan is undertaken by the pension participators, DC pension plan accounts for more and more percentage in the retirement system.
\vskip 5pt
In the DC pension plan, the investment during the accumulation phase influences the benefits after retirement and attracts a lot of attentions. There are many literature concerning the optimal management of DC pension fund during the accumulation phase. The optimal investment strategies maximizing the  Constant Relative Risk Aversion (abbr. CRRA) utility function of terminal wealth are first derived in \cite{boulier2001optimal}. As the time horizon of the pension fund is relatively long, the pension manager is faced with various risks. The interest rate influences the expected returns of the assets in the financial market and then affects the fund wealth heavily. The interest risk for DC pension fund is studied in \cite{CBD}, \cite{zhang2013optimal}, \cite{guan2014optimal}, \cite{njoku2017effect}, etc. Another important financial risk is the inflation risk, which decreases the real purchasing power of the wealth, see  \cite{battocchio2004optimal}, \cite{han2012optimal}, \cite{ZKE}, \cite{ZE}, etc. Besides, empirical studies show that the equity price does not follow the geometric Brownian motion in general. The Constant Elasticity of Variance (abbr. CEV) model is studied in \cite{gao2009optimal}, \cite{he2020optimal}, etc. \cite{guan2014optimal} and \cite{ma2020optimal} consider the Heston's stochastic volatility model for a DC pension fund. \cite{sun2016precommitment} and \cite{mudzimbabwe2019simple} study the jump diffusion process in the pension fund. All the above work aim to present a comprehensive view  about the financial market and help manage the pension fund better.
\vskip 5pt
Besides the characterization of the financial market, the optimization rule also affects the investment strategies of the DC pension fund and is remarkable. However, most of the previous mentioned work concern maximizing the expected utility of terminal wealth. When considering the CRRA or  Constant Absolute Risk Aversion (abbr. CARA) utility function of terminal wealth, stochastic dynamic programming method or martingale method (in a complete market) can be efficiently applied. In the case of CRRA utility function, ignoring the contribution rate, the optimal allocations are proportional to the fund wealth, see \cite{vigna2009mean}, \cite{guan2014optimal}. The optimal investment allocation does not depend on the fund wealth in the case of CARA utility function, see \cite{gao2009optimal}. The main shortcoming of the above optimization rules is that they do not distinguish the preferences for gains and losses. However, as shown in \cite{tversky1974judgment}, the individual acts differently towards gains and losses. \cite{guan2016optimal} employ the S-shaped utility in DC pension fund, and the optimal strategies have a complicate relationship with the fund wealth. Moreover, Omega ratio proposed in \cite{OmegaRatio} is also an efficient tool to distinguish gains and losses, which provides a new perspective on performance assessment of the fund. However, the Omega ratio is defined as the ratio between two expectations, and is non-linear. \cite{2019OptimalB} shows that in the continuous setting, the maximization problem of the Omega ratio is unbounded  and then adds additional constraint in the problem. In \cite{POWPR}, the Omega ratio is modified to contain the reward and penalty functions and the optimization problem is well-posedness. In this work, we concentrate on the relative performance ratio as in \cite{POWPR}. We are interested in the effects of the performance ratio on the economic behaviors of the pension manager.
\vskip 5pt
The pension manager often expects to achieve higher optimization goal. However, in reality, to ensure the stability of the insurance industry, the behaviours of the pension manager should satisfy some regulatory requirements.  Recently, the Solvency II introduces a harmonised, sound and robust prudential framework for the regulatory of  insurance company in Europe. Identifying the risk and calculating the solvency capital requirement is important in Solvency II. A pension plan often lasts for 20-40 years and inflation risk is a very prominent factor influencing the purchasing power of the fund. As such, it is necessary to introduce inflation risk in a pension fund. Although the risk of the DC pension fund is  undertaken by the participators, maintaining a certain solvency ability ensures the stability of the fund and can attract more pension participators. As such, we introduce the VaR constraint for the pension fund at retirement. In the DC management, the VaR constraint has been studied in \cite{guan2016optimal}, \cite{OIWS}. The optimization rule in their work is the expected utility form and the optimal  strategies can be derived by martingale method and dual method.
\vskip 5pt
In this paper, we study the optimal management of DC pension fund under inflation risk. The financial market consists of cash, inflation indexed zero coupon bond and stock. The pension manager receives a continuously stochastic contribution rate from pension participators. On the one hand, the optimization rule in our work is not the expected utility maximization problem while characterized by the performance ratio as in \cite{POWPR}. On the other hand, the manager expects that the terminal fund wealth is above a certain level and has a VaR constraint. We investigate the optimization problem and obtain the optimal terminal wealth as well as the optimal investment strategies in some cases. Different from the work of \cite{POWPR} without VaR constraint, we will see that VaR constraint involves in an additional Lagrange multiplier and the existence of two Lagrange multipliers is not easy to be shown. Besides, the optimal payoff in \cite{POWPR} has one (two) cases for concave (convex) penalty function. The optimal payoff in our paper is more complicated. The financial model in our work contains inflation risk and is non-self-financing, which is also more complex than \cite{POWPR}. Another related work is \cite{OIWS}, which considers S-shaped utility with VaR constraint. The  linearized problem in our paper contains the  S-shaped utility as a special case. However, we also investigate the case of convex penalty function and has a piece-wise concave utility function in the linearized problem. The case of convex penalty function is more complicate and contains more cases, which is interesting.
\vskip 5pt
We have the following contributions in this paper. First, we present the optimization problem under performance ratio and VaR constraint. The optimization goal with performance ratio is not the standard expected utility form and is non-linear. The VaR constraint ensures that the DC fund has a sufficient wealth at retirement and helps manage the fund better. However, combining the non-linear goal with the VaR constraint, the optimization problem becomes very complicated. Second, we introduce inflation risk in our paper and consider a stochastic contribution rate. In order to hedge the inflation risk, an inflation-indexed zero coupon bond is introduced. The wealth process of the fund becomes non-self-financing and we introduce an auxiliary process to transform it into a self-financing one. Third, combining fractional programming, Lagrange dual method, martingale method, concavification method, we obtain the optimal terminal wealth under different cases. When the penalty function is convex (concave), there are fourteen (six) cases. Moreover, when the penalty and reward functions are both linear, the explicit forms of the optimal  investment strategies are obtained.  Fourth, in the Lagrange dual problem, there are two Lagrange multipliers. When there is only one Lagrange multiplier, the existence can be easily revealed. In the case of two Lagrange multipliers, the existence may not hold at the same time and we discuss it in detail. Last, numerical results  are shown to illustrate the impacts of the performance ratio and VaR constraint on the pension manager.
\vskip 5pt
The rest of this paper is organized as follows. Section 2 presents the financial model. The optimization problem under performance ratio with VaR constraint is introduced in Section 3. Section 4 solves the optimization problem. The optimal investment strategies under some specific cases are presented in Section 5. Section 6 shows the numerical results and Section 7 is a conclusion.
\vskip 15pt
\section{\bf Financial model}
Let $ (\Omega,\mathcal{F},\mathbb{F},\mathbb{P}) $ be a complete filtered  probability space with filtration  $ \mathbb{F}\!= \!\{\mathcal{F}_t|0\!\leqslant\! t\!\leqslant\! T\} $ and $ \mathcal{F}_t $ is the information available before time $ t $ in the market. The pension fund starts at initial time 0 and the retirement time is $ T $. The pension fund manager can adjust the strategy within time horizon $[0,T]$. All the processes introduced below are assumed to be well-defined and adapted to $\mathbb{F}$. We suppose that there are no transaction costs and short selling is allowed.
\vskip 10pt
\subsection{\bf Financial market}
In this section, we present the financial market in which the pension fund manager allocates the wealth. Because the investment horizon of a DC pension fund lasts long, often 20-40 years, the wealth of the fund is faced with various risks from the financial market. Especially, inflation risk will decrease the real purchasing power of the fund and plays  a prominent part in the risk management. In our work, we consider the effect of inflation risk on the DC pension fund.  We introduce the financial market with inflation risk and three different assets. To simplify the financial model, we assume that the nominal interest rate and the real interest rate are deterministic. We are mainly concerned with the impact of inflation on the behaviour of the pension fund.  The financial market in our model consists of cash, bond and stock.
\vskip 5pt
The price of the risk-free (i.e., cash) asset $S_0=\{S_0(t)|0\!\leqslant\! t\!\leqslant\! T\}$ is characterized by
\begin{equation}\label{equ-S0}
	\begin{split}
		\frac{\mathrm{d}S_0(t)}{S_0(t)}=r_n(t)\mathrm{d}t,\quad \quad S_0(0)=s_0,
	\end{split}
\end{equation}
where $s_0$ is a constant. $r_n=\{r_n(t)|0\!\leqslant\! t\!\leqslant\! T\}$ is deterministic and represents nominal interest rate in the market.
\vskip 5pt
Next, we present the financial model of the inflation index by the Fisher equation. The Fisher equation describes relationships among the nominal interest rate $r_n$, the real interest rate $r_r=\{r_r(t)|0\!\leqslant\! t\!\leqslant\! T\}$ and the inflation index $I=\{I(t)|0\!\leqslant\! t\!\leqslant\! T\}$. The inflation index $I$ reflects a reduction in purchasing power per unit of money.
\vskip 5pt
We present the following extended continuous-time Fisher equation given by (cf. \cite{ZKE}, \cite{GL}):
\begin{equation*}
	\left\{
	\begin{split}
		&r_n(t)-r_r(t)=\lim_{\Delta t\rightarrow 0^+}\frac{1}{\Delta t}\widetilde{\mathbb{E}}[i(t,t+\Delta t)|\mathcal{F}_t],\\
		&i(t,t+\Delta t)=\frac{I(t+\Delta t)-I(t)}{I(t)},
	\end{split}
	\right.
\end{equation*}
where $\widetilde{\mathbb{E}}$ is the expectation under risk neutral measure $\widetilde{\mathbb{P}}$ and $i(t,t+\Delta t)$ is  the inflation rate from time $t$ to $t+\Delta t$.  Thus, we characterize the risk of the inflation by the Brownian motion $\widetilde{W}_I=\{\widetilde{W}_I(t)|0\!\leqslant\! t\!\leqslant\! T\}$,  and the following model is an efficient model of $I$ to satisfy the extended Fisher equation:
\begin{equation*}
	\begin{split}
		\frac{\mathrm{d}I(t)}{I(t)}=(r_n(t)-r_r(t))\mathrm{d}t+\sigma_I\mathrm{d}\widetilde{W}_I(t),
	\end{split}
\end{equation*}
where $\widetilde{W}_I$ is a standard Brownian motion under the risk-neutral measure  $\widetilde{\mathbb{P}}$. 
\vskip 5pt
Denote the inflation risk under the original probability measure $\mathbb{P}$ by $W_I$ and the related market price of risk by $\lambda_I$. Then, based on Girsanov's theorem, we can derive the stochastic inflation index $I$ w.r.t. the original probability measure $\mathbb{P}$ as follows:
\begin{equation}\label{equ-I}
	\begin{split}
		\frac{\mathrm{d}I(t)}{I(t)}\!=\!(r_n(t)\!-\!r_r(t))\mathrm{d}t\!+\!\sigma_I[\lambda_I\mathrm{d}t\!+\!\mathrm{d}W_I(t)],\quad I(0)=i_0.
	\end{split}
\end{equation}
\vskip 5pt
In order to hedge the risk of inflation, we introduce here an inflation-indexed zero coupon bond.  An  inflation-indexed zero coupon bond $P(t,T)$ is a contract at time $ t$ with final payment of  real money \$$1$ at maturity $T$. Different from the general zero-coupon bond, $P(t,T)$ delivers $I(T)$ at maturity $T$. Based on the pricing formula of derivatives, the price of $P(t,T)$  at time $t$ is $P(t,T)=\widetilde{\mathbb{E}}[\exp(-\int_t^Tr_n(s)\mathrm{d}s)I(T)|\mathcal{F}_t]$. As the nominal interest rate in our model is deterministic, a  simple calculation shows that the explicit form of $P(t,T)$ is
\begin{equation*}
	P(t,T)=I(t)\exp[-\int_t^Tr_r(s)\mathrm{d}s].
\end{equation*}
We  see that $P(t,T)$ is in fact the nominal wealth of the discounted wealth of \$$1$ in the real market. $P(t,T)$ also satisfies the following backward stochastic differential equation:
\begin{equation}\label{equ-P}
	\begin{cases}
		\frac{\mathrm{d}P(t,T)}{P(t,T)}=r_n(t)\mathrm{d}t+\sigma_{I}[\lambda_I\mathrm{d}t+\mathrm{d}W_I(t)],\\
		P(T,T)=I(T).
	\end{cases}
\end{equation}
The third asset is a stock in the market. The price $S_1=\{S_1(t)|0\!\leqslant\! t\!\leqslant\! T\}$ of the stock is as follows:
\begin{equation}\label{equ-S}
	\begin{cases}
		\frac{\mathrm{d}S_1(t)}{S_1(t)}=r_n(t)\mathrm{d}t+\sigma_{S_1}(\lambda_I\mathrm{d}t+\mathrm{d}W_I(t))
		+\sigma_{S_2}(\lambda_S\mathrm{d}t+\mathrm{d}W_S(t)),\\
		S_1(0)=s_1,
	\end{cases}
\end{equation}
where $\sigma_{S_1}$ and $\sigma_{S_2}$ are positive constants and represent the volatilities of the stock. $W_S=\{W_S(t)|0\!\leqslant\! t\!\leqslant\! T\}$ is a standard Brownian motion on the probability space $(\Omega, \mathcal{F}, \mathbb{F}, \mathbb{P}     )$ and independent of $W_I$. Moreover, $\lambda_S$ is the market price of risk of $W_S$.

\vskip 10pt
\subsection{\bf DC pension fund}
Before retirement, the pension participants put part of the salary in the DC pension fund. As the salary is often not deterministic while depends on the macroeconomics, it is realistic to assume that the contribution rate is a stochastic process. Denote the contribution rate at time $t$ by $c(t)$. We suppose that $c=\{c(t)|0\!\leqslant\! t\!\leqslant\! T\}$ satisfies the following equation:
\begin{equation*}
	\frac{\rd c(t)}{c(t)}\!=\!\mu \mathrm{d}t\!+\!\sigma_{C_1}\mathrm{d}W_I(t)\!+\!\sigma_{C_2}\mathrm{d}W_S(t),~ c(0)\!=\!c_0,
\end{equation*}
where $\mu>0$ is a constant, $\sigma_{C_1}\geqslant 0$ and $ \sigma_{C_2}\geqslant0$ are the volatilities of the contribution rate. In order to hedge the risks from the financial market, the pension manager invests in the market continuously within time horizon $[0,T]$. Denote money invested in the cash, inflation-indexed zero coupon bond and stock  at time $t$ by $\pi_0(t)$, $ \pi_P(t)$ and $ \pi_S(t)$, respectively. Denote $ \pi_P\triangleq\{\pi_P(t)|0\leqslant t\leqslant T\}   $, $\pi_S\triangleq\{\pi_S(t)|0\leqslant t\leqslant T\}   $. Then, under the investment strategy $\pi\triangleq(\pi_P, \pi_S   )$, the wealth of the DC pension fund is as follows:
\begin{equation}\label{equ-initialX}
	\begin{cases}
		\mathrm{d}X^\pi(t)=c(t)\rd t+\pi_0(t)\frac{\mathrm{d}S_0(t)}{S_0(t)}
		+\pi_P(t)\frac{\mathrm{d}P(t,T)}{P(t,T)}
		+\pi_S(t)\frac{\mathrm{d}S_1(t)}{S_1(t)},\\
		X^\pi(0)=x_0,
	\end{cases}
\end{equation}
where $x_0\geqslant 0$ is the initial wealth of  the pension fund. Substituting Eqs.~(\ref{equ-S0}), (\ref{equ-P}) and (\ref{equ-S}) into Eq.~(\ref{equ-initialX}), we obtain the compact form of the wealth process $ X^\pi=\{X^\pi(t)|0\leqslant t\leqslant T\}  $ as follows:
\begin{equation}\label{equ-X}
	\begin{cases}
		\mathrm{d}X^\pi(t)=\!&c(t)\rd t+r_n(t)X(t)\mathrm{d}t+\pi_P(t)\sigma_I[\lambda_I\mathrm{d}t+\mathrm{d}W_I(t)]
		\\&+\pi_S(t)\sigma_{S_1}[\lambda_I\mathrm{d}t
		+\mathrm{d}W_I(t)]+\pi_S(t)\sigma_{S_2}[\lambda_S\mathrm{d}t+\mathrm{d}W_S(t)],\\
		X^\pi(0)=&x_0,
	\end{cases}
\end{equation}
where $X^\pi(t)=\pi_0(t)+\pi_P(t)+\pi_S(t)$.  We call $\pi=(\pi_P,\pi_S)$  an admissible strategy if it satisfies the following conditions:\begin{enumerate}
	\item [(i)]$\pi_P\text{~and~} \pi_S$ are progressively measurable w.r.t. the filtration $\mathbb{F}$,
	\item [(ii)]$\mathbb{E}\{\int_0^T[\pi_P(t)^2\sigma_I^2+\pi_S(t)^2\sigma_{S_1}^2+\pi_S(t)^2\sigma_{S_2}^2]\mathrm{d}t\}<+\infty$,
	\item [(iii)]Eq.~(\ref{equ-X}) has a unique solution for the  initial data $(t_0,i_0,x_0)\in [0,T]\times [0,+\infty)^2$.
\end{enumerate}
\vskip 5pt
Denote the set of  all admissible investment strategies of $\pi$ by  $\mathcal{A}(x_0)$. We are only concerned with the admissible strategies. The pension fund manager searches the optimal strategy within the admissible set under some optimization criterion.
\vskip 15pt
\section{\bf Optimization problem}
In this section, we present the optimization problem for the pension fund manager under performance ratio and VaR constraint.
\vskip 10pt
\subsection{\bf Performance ratio}
There are many performance measures to evaluate the investment of the fund. The mean-variance criterion in \cite{Markowitz1952Portfolio} has been widely applied in portfolio selection. However, the variance cannot distinguish investors' gains and losses. Besides, the returns are often supposed to be normally distributed in the mean-variance analysis, which often conflicts with the empirical results over the real data. Another widely applied measure is the Sharpe ratio, which employs variance and can not distinguish gains and losses.
\vskip 5pt
\cite{OmegaRatio} introduce the Omega ratio to evaluate the investment performance: given a reference level value $ \theta $, the Omega ratio of the random variable $ R $ is defined by
\begin{equation}\label{equ:ome}
	\Omega_{\theta}(R)=\frac{\mathbb{E}[(R-{\theta})_+]}{\mathbb{E}[({\theta}-R)_+]},
\end{equation}
where $ x_+\triangleq \max\{x,0\} $ represents the positive part of $ x $. $\theta$ is a given reference level. The Omega ratio captures all of the higher moment information in the distribution and  incorporates sensitivity to return levels. Compared with the mean-variance criterion and Sharpe ratio, the Omega ratio distinguishes the gains over $\theta$ and losses below $\theta$.  It also provides a risk-reward evaluation of the returns distribution which incorporates the beneficial impact of gains as well as the detrimental effect of losses, relative to any individual's loss threshold.
\vskip 5pt
As inflation exists in the financial market, the pension manager is concerned with the real terminal wealth at retirement. Therefore, we use the Omega ratio of the real terminal wealth to rate the investment performance of the pension fund. The Omega ratio w.r.t. the real terminal wealth is as follows:
\begin{equation}\label{equ:ome1}
\Omega_{\theta}(\frac{X^{\pi}(T)}{I(T)})\triangleq \frac{\mathbb{E}[(\frac{X^{\pi}(T)}{I(T)}-{\theta})_+]}
		{\mathbb{E}[({\theta}-\frac{X^{\pi}(T)}{I(T)})_+]}.
\end{equation}
We should note here that the Omega ratio of Eq.~(\ref{equ:ome}) in \cite{OmegaRatio} is related with the return of the portfolio. In the case of the simple returns, the Omega ratio of Eq.~(\ref{equ:ome1}) described by the terminal wealth is equivalent to the description by the simple returns.
\vskip 5pt
However, as shown in \cite{2019OptimalB} and \cite{POWPR}, the maximization problem of the Omega ratio is an ill-posed problem. We also show the ill-posedness in Remark \ref{Nonconvex}.  \cite{2019OptimalB} add additional constraints to make the problem well-posedness. In this work, we follow the framework in  \cite{POWPR} to make the optimization problem bounded. We introduce  two weighting functions $ U:\mathbb{R}_+\mapsto \mathbb{R}$ and $ D:\mathbb{R}_+\mapsto \mathbb{R}$ over gains and losses, respectively. $U$ and $D$ are both monotonically increasing measurable. Then the performance ratio in our work is
\begin{equation}\label{equ:ome2}
	R(X^{\pi}(T))\triangleq \frac{\mathbb{E}\left[U\left(\left(\frac{X^{\pi}(T)}{I(T)}-{\theta}\right)_+\right)\right]}{\mathbb{E}\left[D\left(\left({\theta}-\frac{X^{\pi}(T)}{I(T)}\right)_+\right)\right]},
\end{equation}
where the numerator $\mathbb{E}\left[U\left(\left(\frac{X^{\pi}(T)}{I(T)}-{\theta}\right)_+\right)\right]$ represents the income when the real wealth at retirement exceeds the reference level,  the denominator $\mathbb{E}\left[D\left(\left({\theta}-\frac{X^{\pi}(T)}{I(T)}\right)_+\right)\right]$ represents the penalty when the real wealth at retirement is lower than the reference level. Here, the function $U$ is called the reward function, and the function $D$ is called the penalty function.
\vskip 10pt
\subsection{\bf Value-at-Risk}
In this subsection, we formulate the concept of VaR management for the pension manager. VaR is an extremely popular risk measure and many financial companies have successfully used it to manage their risk, particularly for European Bank and Insurance industry, see \cite{dos2010solvency}. VaR is often used to calculate the solvency capital requirement in risk management. For the pension manager, it is also important to introduce VaR in risk management. In particular, for the DB pension fund, if the terminal wealth is lower than some level with large probability, the manager is faced with great insolvency risk. For the DC pension fund, the risk is undertaken by the participants, as such, introducing VaR in the risk management can make the strategy more stable and attract more pension participants.
\vskip 5pt
Now we introduce the VaR for the real terminal wealth $\frac{X(T)}{I(T)}$:
\begin{Def}
	For $\varepsilon \in[0,1]$, the VaR for the real terminal wealth $\frac{X^{\pi}(T)}{I(T)}$ under probability measure $\mathbb{P}$ is
	\begin{equation*}
		\text{VaR}^{\varepsilon}\left(\frac{X^{\pi}(T)}{I(T)}\right)\triangleq-\inf\left\{x\in \mathbb{R}\mid\mathbb{P}\left(\frac{X^{\pi}(T)}{I(T)}\leqslant x\right)>\epsilon\right\}.
	\end{equation*}
\end{Def}
The pension manager uses VaR to measure the risk of the portfolio and require that the VaR of the portfolio can not exceed some level $L\in\mathbb{R}$:
\begin{equation}\label{var-cons}
\operatorname{VaR}^{\varepsilon}\left(\frac{X^{\pi}(T)}{I(T)}\right) \leqslant-L.
\end{equation}
\vskip 5pt
A simple calculation  shows that the VaR constraint (\ref{var-cons}) is equivalent to the following constraint:
\begin{align*}
	\mathbb{P}\left(\frac{X^{\pi}(T)}{I(T)}\geqslant L\right)\geqslant 1-\epsilon, \ \epsilon\in [0,1].
\end{align*}
The VaR constraint requires that the probability of the terminal wealth higher than $L$ is not less than  $1-\epsilon$. Especially, the constraint is not binding for $\epsilon=1$; the VaR constraint requires that the terminal wealth should always be higher than the level $L$ when $\varepsilon=0$, which has been studied in many literature.
\vskip 5pt
\subsection{\bf Optimization rule}
On the one hand, the pension manager is concerned with the modified Omega ratio given by Eq.~(\ref{equ:ome2}) of the real terminal wealth. On the other hand, the manager requires that the VaR of the portfolio does not exceed some given level. As such, the optimization problem of the pension manager is as follows:
\begin{equation}\label{PSV}
	\left\{\begin{aligned}
		\max_{\pi \in \mathcal{A}(x_0)}\quad&\frac{\mathbb{E}\left[U\left(\left(\frac{X^{\pi}(T)}{I(T)}-{\theta}\right)_+\right)\right]}{\mathbb{E}\left[D\left(\left({\theta}-\frac{X^{\pi}(T)}{I(T)}\right)_+\right)\right]},\\
		\text{s.t.}\quad&{X^{\pi}(t)}\text{ satisfies Eq.~(\ref{equ-X}),}\\&{I(t)}\text{ satisfies Eq.~(\ref{equ-I}),}\\
		&\mathbb{P}(\frac{X^{\pi}(T)}{I(T)}\geqslant L)\geqslant1-\varepsilon.
	\end{aligned}\right.
\end{equation}
We see that when $ \varepsilon=1 $, the constraint is equivalent to no constraint, and when $ \varepsilon=0 $, the wealth is required to be higher than a certain level at retirement, which is the optimization problem considered by \cite{Basak}.
\vskip 15pt
\section{\bf Optimal solution}\label{4}
As Eq.~(\ref{equ-X}) evolves a continuously cash flow, the state process $X^\pi$ is not self-financing. Problem (\ref{PSV}) is a complicated optimization problem with non-linear goal, VaR constraint and non-self-financing state process. Problem (\ref{PSV}) can not be directly solved. In order to transform the original problem into a self-financing problem, we first introduce an auxiliary process. Then for the self-financing problem with non-linear goal and VaR constraint, we apply the fractional programming to transform it into a problem with expected utility over terminal wealth under VaR constraint. However, after the linearization, the utility of the terminal wealth may not be globally concave. We employ the concavification method and Lagrange dual method to disentangle the equivalent problem.
\vskip 10pt
\subsection{Auxiliary process}
Because of the existence of contribution rate, the wealth process $ X^{\pi} $ is not a self-financing process. For the convenience of processing, we introduce an auxiliary wealth process $ \tilde{X}^{\tilde{\pi}}\triangleq \{ \tilde{X}^{\tilde{\pi}}(t)|0\leqslant t\leqslant T\} $ to convert the model into an equivalent self-financing one.  First, we introduce the value $D(t,s)$ of a derivative at time $t$ with final payment $c(s)$ at maturity $s$. Based on the theory of derivative pricing, \[D(t,s)=\widetilde{\mathbb{E}}[\exp(-\int_t^sr_n(u)\rd u)c(s)|\mathcal{F}_t].\]
By simple calculation, we obtain the explicit form of $D(t,s)$ as follows:
\begin{equation}\nonumber
	D(t,s)=c(t)\exp[(\mu-\sigma_{c_1}\lambda_I-\sigma_{c_2}\lambda_S)(s-t)-\int_{t}^{s}r_n(u)\rd u].
\end{equation}
Moreover, $D(t,s)$ satisfies the following stochastic differential equation for given $ s $,
\begin{equation}\nonumber
	\frac{\rd D(t,s)}{D(t,s)}\!=\!(r_n(t)\!+\!\sigma_{c_1}\lambda_I\!+\!\sigma_{c_2}\lambda_S)\rd t\!+\!\sigma_{c_1}\rd W_I(t)\!+\!\sigma_{c_2}\rd W_S(t).
\end{equation}
Integrating $D(t,s)$ w.r.t. $s$ from time $t$ to time $T$, we construct a new process $F=\left\{F(t)|0\leqslant t\leqslant T\right\}$:
\[F(t)=\int_t^TD(t,s)\rd s,\quad\forall t\in [0,T].\]
$F(t)$ represents the expected value of the accumulated contribution rates from time $t$ to $T$ at time $t$. In particular, we have $$ F(T)=0,\ F(0)=c_0\int_{0}^T \exp[(\mu-\sigma_{c_1}\lambda_I-\sigma_{c_2}\lambda_S)s-\!\int_{0}^{s}r_n(u)\rd u]\rd s.$$
Based on the differential form of $D(t,s)$, the dynamics of $F$ are
\begin{equation}\nonumber
	\rd F(t)=-c(t)\rd t\!+\!F(t)[(r_n(t)\rd t\!+\!\sigma_{c_1}(\lambda_I\rd t\!+\!\rd W_I(t))\!+\!\sigma_{c_2}(\lambda_S\rd t\!+\!\rd W_S(t)))].
\end{equation}
Next, we construct the auxiliary process $\tilde{X}^{\tilde{\pi}}$ as follows:
\begin{equation}\label{defined}
	\tilde{X}^{\tilde{\pi}}(t) =\frac{X^{\pi}(t) +F(t)}{I(t)}.
\end{equation}
Because $F(T)=0$, $\tilde{X}^{\tilde{\pi}}(T)=\frac{X^{\pi}(T)}{I(T)}$. As such, the optimization problem over $\frac{X^{\pi}(T)}{I(T)}$ is equivalent to the optimization problem over $\tilde{X}^{\tilde{\pi}}(T)$. Besides, from Eqs.~(\ref{equ-X}) and (\ref{equ-I}), we have
\begin{equation}\label{SDE2}
	\begin{aligned}
		\rd \tilde{X}^{\tilde{\pi}}(t)=&\tilde{X}^{\tilde{\pi}}(t)[(\sigma_I^2+r_r(t))\rd t-\sigma_I(\lambda_I\rd t+\rd W_I(t))]\\
		&+\frac{1}{I(t)}[\pi_P(t)\sigma_I+\pi_S(t)\sigma_{S_1}+\sigma_{c_1}F(t)][\lambda_I\mathrm{d}t+\mathrm{d}W_I(t)]\\
		&+\frac{1}{I(t)}[\pi_S(t)\sigma_{S_2}+\sigma_{c_2}F(t)][\lambda_S\mathrm{d}t+\mathrm{d}W_S(t)]\\
		&-\frac{\sigma_I}{I(t)}[\sigma_{c_1}F(t)+\pi_P(t)\sigma_I+\pi_S(t)\sigma_{S_1}]\rd t,
	\end{aligned}
\end{equation}
and $ \tilde{X}^{\tilde{\pi}}(0)= \frac{X^{\pi}(0) +F(0)}{I(0)}$,  let $\tilde{x}_0=\tilde{X}^{\tilde{\pi}}(0)$.

For convenience, we make the following substitutions:
\begin{equation}\label{Transform}
	\left\{
	\begin{aligned}
		&\tilde{\pi}_S(t)\sigma_{S_2}=\frac{1}{I(t)}[\pi_S(t)\sigma_{S_2}+\sigma_{c_2}F(t)],\\
		&\tilde{\pi}_p(t)\sigma_{I}+\tilde{\pi}_S(t)\sigma_{S_1}=\frac{1}{I(t)}[\pi_P(t)\sigma_I+\pi_S(t)\sigma_{S_1}+\sigma_{c_1}F(t)-\sigma_{I}(X^{\pi}(t) +F(t))].
	\end{aligned}\right.
\end{equation}
Denote $\tilde{\pi}=(\tilde{\pi}_p,\tilde{\pi}_S)$ as the strategy associated with $\tilde{X}^{\tilde{\pi}}$. We call $\tilde{\pi}$ admissible if the related $\pi$ derived by Eq.~(\ref{Transform}) is in $\mathcal{A}(x_0)$. The admissible set of $\tilde{\pi}$ is denoted by $\mathcal{\tilde{A}}(\tilde{x}_0)$.

Then, Eq.~(\ref{SDE2})  is equivalent to
\begin{equation}\label{SDE3}
	\begin{aligned}
		\rd \tilde{X}^{\tilde{\pi}}(t)=&r_r(t)\tilde{X}^{\tilde{\pi}}(t)\rd t
		+[\tilde{\pi}_p(t)\sigma_{I}+\tilde{\pi}_S(t)\sigma_{S_1}][(\lambda_I-\sigma_{I})\mathrm{d}t+\mathrm{d}W_I(t)]\\
		&+\tilde{\pi}_S(t)\sigma_{S_2}[\lambda_S\mathrm{d}t+\mathrm{d}W_S(t)].
	\end{aligned}
\end{equation}
Observing Eq.~(\ref{SDE3}), we see that the state process of $\tilde{X}^{\tilde{\pi}}$ does not have additional cash flow, thus is self-financing. In fact, $\tilde{X}^{\tilde{\pi}}$ reflects the actual real available wealth of the fund. Besides, as Problem (\ref{PSV}) is concerned with real terminal wealth $\frac{X^{\pi}(T)}{I(T)}$ and $\frac{X^{\pi}(T)}{I(T)}=\tilde{X}^{\tilde{\pi}}(T)$, the optimization problem over  $\frac{X^{\pi}(T)}{I(T)}$ is equivalent to the problem over $\tilde{X}^{\tilde{\pi}}(T)$.  As such, Problem (\ref{PSV}) is equivalent to the following self-financing problem with only one state variable

 \begin{equation}\label{PSVD}
	\left\{\begin{aligned}
		\max_{\pi \in \tilde{\mathcal{A}}\left(\tilde{x}_{0}\right)}\quad&\frac{\mathbb{E}\left[U\left(\left(\tilde{X}^{\tilde{\pi}}(T)-{\theta}\right)_+\right)\right]}{\mathbb{E}\left[D\left(\left({\theta}-\tilde{X}^{\tilde{\pi}}(T)\right)_+\right)\right]},\\
		\text{s.t.}\quad&\tilde{X}^{\tilde{\pi}}(t)\text{ satisfies Eq.~(\ref{SDE3})},\\
		&\mathbb{P}(\tilde{X}^{\tilde{\pi}}(T)\geqslant L)\geqslant1-\varepsilon.
	\end{aligned}
	\right.\end{equation}
Using the auxiliary process, we transform the original problem (\ref{PSV}) into a single one w.r.t. $\tilde{X}^{\tilde{\pi}}$. On the one hand, the equivalent problem (\ref{PSVD}) is self-financing and martingale method can be well applied. On the other hand, $\tilde{X}^{\tilde{\pi}}$ contains the information of inflation and $I$ does not exist in Problem (\ref{PSVD}), which also largely decreases the computational complexity.
\vskip 10pt
\subsection{Martingale method}
In spite that Problem (\ref{PSVD}) is self-financing, the optimization goal of Problem (\ref{PSVD}) is non-linear with VaR constraint. As such, stochastic dynamic programming method can not be applied to solve Problem (\ref{PSVD}). Observing Eq.~(\ref{SDE3}), we see that there are two investment strategies and two Brownian motions. Therefore, $\tilde{X}^{\tilde{\pi}}$ can be viewed as the wealth process in a complete market with the pricing kernel process $H=\!\{H(t)|0\!\leqslant\! t\!\leqslant\! T\}$ given by

\begin{equation}\label{DeFofH}
	\frac{\rd H(t)}{H(t)}=-r_r(t)\rd t-(\lambda_I-\sigma_{I}) \rd W_I(t)-\lambda_S\rd W_S(t), H(0)=1.
\end{equation}
Following \cite{Cox}, we can transform the  optimization problem (\ref{PSVD})  into an equivalent one w.r.t. the terminal wealth $Z\triangleq \tilde{X}^{\tilde{\pi}}(T)$.

\begin{theorem}\label{Theorem1}
	The optimization problem (\ref{PSVD}) and the following problem (\ref{PVD}):
	\begin{equation}\label{PVD}
		\left\{\begin{aligned}
			\max_{Z \in \mathcal{M}_+}\quad&\frac{\mathbb{E}\left[U\left((Z-{\theta})_+\right)\right]}{\mathbb{E}\left[D\left(({\theta}-Z)_+\right)\right]},\\
			\text{s.t.}\quad&\mathbb{E}[H(T)Z]\leqslant\tilde{x}_0,\\
			&\mathbb{P}(Z\geqslant L)\geqslant1-\varepsilon,
		\end{aligned}
		\right.
	\end{equation}
have the same optimal values, where $ \mathcal{M}_+ $ denotes the set of non-negative $ \mathcal{F}_T $-measurable random variables.
\end{theorem}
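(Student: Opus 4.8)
My plan to prove Theorem \ref{Theorem1} is to establish the two inequalities relating the optimal values of Problem (\ref{PSVD}) and Problem (\ref{PVD}) separately, in each direction converting a feasible point of one problem into a feasible point of the other whose objective value is no worse. The bridge between the dynamic and the static formulations is the pricing kernel (\ref{DeFofH}): it is built precisely so that the discounted wealth $H(t)\tilde{X}^{\tilde{\pi}}(t)$ has vanishing drift under (\ref{SDE3}). Throughout I use the identity $\tilde{X}^{\tilde{\pi}}(T)=X^{\pi}(T)/I(T)$ established via the auxiliary process, so that the two objective functionals agree on corresponding terminal values; this is a Cox--Huang-type reduction (cf. \cite{Cox}) adapted to the present fractional objective with a VaR constraint.

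First I would show that the optimal value of (\ref{PSVD}) does not exceed that of (\ref{PVD}). Fix $\tilde{\pi}\in\tilde{\mathcal{A}}(\tilde{x}_0)$ and set $Z:=\tilde{X}^{\tilde{\pi}}(T)$. Applying It\^{o}'s formula to $H(t)\tilde{X}^{\tilde{\pi}}(t)$ using (\ref{SDE3}) and (\ref{DeFofH}), the drift ($\rd t$) terms cancel identically, so $H\tilde{X}^{\tilde{\pi}}$ is a local martingale; the admissibility conditions ensure that it is in fact a supermartingale (a standard localization argument), and evaluating at $t=T$ yields the budget inequality $\mathbb{E}[H(T)Z]\le H(0)\tilde{X}^{\tilde{\pi}}(0)=\tilde{x}_0$. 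The constraint $\mathbb{P}(Z\ge L)\ge 1-\varepsilon$ is inherited directly from feasibility of $\tilde{\pi}$ in (\ref{PSVD}), and $Z\in\mathcal{M}_+$. Hence $Z$ is feasible for (\ref{PVD}) and realises there exactly the objective value that $\tilde{\pi}$ realises in (\ref{PSVD}); taking the supremum over $\tilde{\pi}$ gives this inequality.

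For the reverse inequality, fix $Z$ feasible for (\ref{PVD}) and put $x^{\star}:=\mathbb{E}[H(T)Z]\le\tilde{x}_0$. The market (\ref{SDE3}) driven by $(W_I,W_S)$ is complete, since the two diffusion coefficients $\tilde{\pi}_p\sigma_I+\tilde{\pi}_S\sigma_{S_1}$ and $\tilde{\pi}_S\sigma_{S_2}$ may be prescribed arbitrarily (solve for $\tilde{\pi}_S$ using $\sigma_{S_2}\neq 0$, then for $\tilde{\pi}_p$ using $\sigma_I\neq 0$). To absorb a possible surplus $\xi:=\tilde{x}_0-x^{\star}\ge 0$, I replace $Z$ by the larger claim $Z':=Z+\xi\exp\!\big(\int_0^T r_r(s)\,\rd s\big)$; since $H(t)\exp(\int_0^t r_r(s)\,\rd s)$ is an exponential martingale with expectation $1$, one checks $\mathbb{E}[H(T)Z']=x^{\star}+\xi=\tilde{x}_0$. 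Then I define the candidate wealth $\tilde{X}(t):=H(t)^{-1}\mathbb{E}[H(T)Z'\mid\mathcal{F}_t]$, read off $(\tilde{\pi}_p,\tilde{\pi}_S)$ from the martingale representation of $H\tilde{X}$, and verify that the $\pi$ recovered through (\ref{Transform}) lies in $\mathcal{A}(x_0)$, so that $\tilde{\pi}\in\tilde{\mathcal{A}}(\tilde{x}_0)$ with $\tilde{X}^{\tilde{\pi}}(T)=Z'$. Because $Z'\ge Z$ pointwise the VaR constraint still holds, and since $U$ and $D$ are both increasing the numerator of the ratio does not decrease while the denominator does not increase; hence the value of (\ref{PSVD}) at this $\tilde{\pi}$ is at least the value of (\ref{PVD}) at $Z$, and taking the supremum over $Z$ completes the argument. (In particular, at an optimum of (\ref{PVD}) the budget constraint binds, so $\xi=0$, $Z'=Z$, and the replicating $\tilde{\pi}$ is then an optimizer of (\ref{PSVD}) — a fact that will be used when recovering the explicit strategies.)

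The drift cancellation is routine. The step I expect to be the main obstacle is the replication: verifying that the strategy extracted from the martingale representation theorem is admissible in the sense of Section 2 — in particular the square-integrability condition (ii) for the recovered $(\pi_P,\pi_S)$ — which requires some integrability of $Z$ (or of $H(T)Z$); if (\ref{PVD}) is posed without such a restriction one should first establish the equivalence for the truncations $Z\wedge n$ and then pass to the limit $n\to\infty$. A secondary point to handle separately is the degenerate case where the denominator $\mathbb{E}[D((\theta-Z)_+)]$ vanishes, in which the ratio is understood as $+\infty$; the well-posedness of the problem is precisely the content of Remark \ref{Nonconvex}, and since the same convention is used on both sides, the equivalence of the two problems is unaffected.
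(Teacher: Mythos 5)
Your proposal is correct and follows exactly the route the paper intends: the paper omits the proof of Theorem \ref{Theorem1} entirely, citing the Cox--Huang martingale approach, and your two-inequality argument (supermartingale property of $H\tilde{X}^{\tilde{\pi}}$ for the budget constraint in one direction, completeness and martingale representation for replication in the other, with the surplus absorbed into the claim and monotonicity of $U$ and $D$ preserving the ratio) is precisely the standard argument being referenced, adapted correctly to the fractional objective and VaR constraint. The integrability and admissibility caveats you flag are real but are glossed over by the paper as well (its Proposition \ref{Propos1} asserts admissibility of the replicating strategy without verification), so they do not constitute a gap relative to the paper's own treatment.
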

\begin{proof}
The proof is simple and we omit it here, see \cite{Cox} for details.
\end{proof}
Theorem \ref{Theorem1} shows that the state process for $\tilde{X}^{\pi^*}$ in Problem (\ref{PSVD}) is equivalent to Problem (\ref{PVD}) with budget constraint over terminal wealth. Problem (\ref{PVD}) can be viewed as a static optimization problem over the random variable $Z\in\mathcal{M}_+$. In fact, the optimal $Z^*$ for Problem (\ref{PVD}) is related with the optimal terminal wealth in Problem  (\ref{PSVD}). As Eq.~(\ref{SDE3}) shows that $\tilde{X}$ can be treated as the wealth process in a complete market, the strategy within time horizon $[0,T]$ can be replicated.
\begin{proposition}\label{Propos1}
	Assume that $ Z^* $ is the optimal solution to Problem (\ref{PVD}). Then there exists a strategy $ \tilde{\pi}^*\in\tilde{\mathcal{A}}\left(\tilde{x}_{0}\right) $ such that  $ \tilde{X}^{\tilde{\pi}^*}(T)=Z^* $.
\end{proposition}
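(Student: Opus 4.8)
The plan is to show that the static optimum $Z^*$ is attainable as a terminal wealth of a self-financing portfolio in the complete market described by Eq.~(\ref{SDE3}), and that the associated strategy is admissible. First I would verify that $Z^*$ is budget-feasible with the budget binding, i.e. $\mathbb{E}[H(T)Z^*]=\tilde{x}_0$: if the constraint were slack one could add a small positive constant to $Z^*$, which only relaxes the VaR constraint and weakly increases the objective (since $U$ is increasing and $D$ is increasing, adding a constant weakly increases the numerator and weakly decreases the denominator), so by optimality equality must hold (modulo the degenerate case where the denominator vanishes, which should be excluded by the standing assumptions or handled separately). Then I would define the wealth process by the standard risk-neutral/pricing-kernel formula
\begin{equation*}
	\tilde{X}^{\tilde{\pi}^*}(t)=\frac{1}{H(t)}\,\mathbb{E}\!\left[H(T)\,Z^*\,\middle|\,\mathcal{F}_t\right],\qquad 0\leqslant t\leqslant T,
\end{equation*}
which gives $\tilde{X}^{\tilde{\pi}^*}(0)=\mathbb{E}[H(T)Z^*]=\tilde{x}_0$ and $\tilde{X}^{\tilde{\pi}^*}(T)=Z^*$.

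The core step is the martingale representation argument. Since $H(t)\tilde{X}^{\tilde{\pi}^*}(t)=\mathbb{E}[H(T)Z^*\mid\mathcal{F}_t]$ is a (uniformly integrable) $\mathbb{P}$-martingale adapted to the filtration generated by $(W_I,W_S)$, the martingale representation theorem yields progressively measurable processes $\psi_I,\psi_S$ with $\mathrm{d}\big(H(t)\tilde{X}^{\tilde{\pi}^*}(t)\big)=\psi_I(t)\,\mathrm{d}W_I(t)+\psi_S(t)\,\mathrm{d}W_S(t)$. Applying Itô's formula to $\tilde{X}^{\tilde{\pi}^*}(t)=\big(H(t)\tilde{X}^{\tilde{\pi}^*}(t)\big)/H(t)$ using the dynamics of $H$ in Eq.~(\ref{DeFofH}) and matching the resulting diffusion coefficients against those in Eq.~(\ref{SDE3}), one reads off
\begin{align*}
	\tilde{\pi}_p^*(t)\sigma_I+\tilde{\pi}_S^*(t)\sigma_{S_1}&=\frac{\psi_I(t)}{H(t)}+\sigma_I\,\tilde{X}^{\tilde{\pi}^*}(t),\\
	\tilde{\pi}_S^*(t)\sigma_{S_2}&=\frac{\psi_S(t)}{H(t)}+\lambda_S\,\tilde{X}^{\tilde{\pi}^*}(t)\quad\text{(up to the drift bookkeeping)},
\end{align*}
and since $\sigma_{S_2}>0$ and $\sigma_I>0$ this is an invertible $2\times 2$ linear system, so $\tilde{\pi}_p^*,\tilde{\pi}_S^*$ are uniquely determined; the drift terms then automatically agree because both sides are continuous semimartingales with the same martingale part and the same value, which forces the finite-variation parts to coincide. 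Finally I would check admissibility: the square-integrability condition (ii) of $\mathcal{A}(x_0)$ translates, via the substitutions Eq.~(\ref{Transform}) and the explicit bound on $F$, into an $L^2$ bound on $(\psi_I,\psi_S)$ together with $\mathbb{E}\int_0^T \tilde{X}^{\tilde{\pi}^*}(t)^2\,\mathrm{d}t<\infty$, and existence/uniqueness for Eq.~(\ref{equ-X}) (condition (iii)) follows since the recovered linear SDE has the constructed process as its unique strong solution; one then inverts Eq.~(\ref{Transform}) to get the corresponding $\pi^*=(\pi_P^*,\pi_S^*)\in\mathcal{A}(x_0)$.

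The main obstacle I anticipate is not the representation itself but the integrability/admissibility verification: one must ensure that $Z^*$ (which in later sections will be a fairly intricate, piecewise-defined function of $H(T)$ coming from the concavification plus the two Lagrange multipliers) satisfies $\mathbb{E}[H(T)Z^*]<\infty$ and, more delicately, that the replicating portfolio meets condition (ii) — i.e. that $\psi_I,\psi_S$ are in $L^2(\mathrm{d}\mathbb{P}\otimes\mathrm{d}t)$ and that $\tilde X^{\tilde\pi^*}$ has finite $L^2$-norm on $[0,T]$. This typically requires moment estimates on $Z^*$ under the change of measure induced by $H$, using the log-normality of $H(T)$ and, where $U,D$ are power functions, growth conditions on the exponents; I would either impose a standing integrability assumption on $(U,D,Z^*)$ or appeal to the explicit closed forms derived later to close this gap. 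Everything else — the binding-budget argument, the Itô matching, and the inversion of Eq.~(\ref{Transform}) — is routine.
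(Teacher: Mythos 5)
Your proposal follows essentially the same route as the paper: show the budget constraint binds by perturbing $Z^*$ upward, define $Y^*(t)=H(t)^{-1}\mathbb{E}[H(T)Z^*\mid\mathcal{F}_t]$, apply the martingale representation theorem to $H(t)Y^*(t)$, and match diffusion coefficients against Eqs.~(\ref{SDE3}) and (\ref{DeFofH}) to recover $\tilde{\pi}^*$ (the paper is equally terse on the admissibility/integrability verification you flag). One small slip: the $\mathrm{d}W_I$-matching should give $\tilde{\pi}_p^*(t)\sigma_I+\tilde{\pi}_S^*(t)\sigma_{S_1}=\psi_I(t)/H(t)+(\lambda_I-\sigma_I)\,\tilde{X}^{\tilde{\pi}^*}(t)$ rather than $+\sigma_I\,\tilde{X}^{\tilde{\pi}^*}(t)$, but this does not affect the invertibility of the system or the argument.
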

\begin{proof}
	As $ Z^* $ is the optimal solution to the Problem (\ref{PVD}), it is easy to see that the budget constraint $ \mathbb{E}[H(T)Z]\leqslant\tilde{x}_0$ should hold with equality, i.e., $ \mathbb{E}[H(T)Z^*]=\tilde{x}_0 $. Otherwise, consider \begin{equation*}
		\tilde{Z}\triangleq Z^*+e^{\int_{0}^{T}r_r(s)\mathrm{d}s}(\tilde{x}_0-\mathbb{E}[H(T)Z^*]).
	\end{equation*}
By investing the excess wealth in the risk-free asset, it is easy to verify that  $ \tilde{Z} $ is still a feasible solution to Problem (\ref{PVD}), while with a larger target value. This contradicts with the assumption that $ Z^* $ is the optimal solution to Problem (\ref{PVD}). Thus $ \mathbb{E}[H(T)Z^*]=\tilde{x}_0 $.
%

	Define the process \begin{equation}\label{Process1}
		Y^*(t)\triangleq  H(t)^{-1}\mathbb{E}[H(T)Z^*|\mathcal{F}_t],\quad0\leqslant t\leqslant T,
	\end{equation}
which is a martingale w.r.t. the filtration $ \mathbb{F} $. Based on the martingale representation theorem, there exists an $ \mathbb{F} $-progressively measurable process $\psi=\{ \psi(t)=(\psi_1(t),\psi_2(t))^\intercal|0\leqslant t\leqslant T\} $ such that  $ \int_{0}^T\|\psi(t)\|^2\mathrm{d}t<\infty,a.s. $ and \begin{equation}\label{HY}
		H(t)Y^*(t)=\tilde{x}_0 +\int_{0}^t\psi_1(s)\rd W_I(s)+\int_{0}^t\psi_2(s)\rd W_S(s),\quad0\leqslant t\leqslant T,~a.s..
	\end{equation}
Comparing  Eq.~(\ref{HY}) with Eqs.~(\ref{SDE3}) and (\ref{DeFofH}), let
\begin{equation*}
	\left\{\begin{aligned}
		\tilde\pi^*_S(t)=&\sigma_{S_2}^{-1}\left[H^{-1}(t)\psi_2(t)+\lambda_SY^*(t)\right],\\
		\tilde\pi^*_P(t)=&\sigma_{I}^{-1}\left[H^{-1}(t)\psi_1(t)+(\lambda_I-\sigma_{I})Y^*(t)-\tilde\pi^*_S(t)\sigma_{S_1}\right],
	\end{aligned}\right.
\end{equation*}
{	that is, \begin{equation}\label{Process2}
		\left\{\begin{aligned}
			\pi^*_S(t)=&\sigma_{S_2}^{-1}\left[I(t)H^{-1}(t)\psi_2(t)+\lambda_SI(t)Y^*(t)-\sigma_{C_2}F(t)\right],\\
			\pi^*_P(t)=&\sigma_{I}^{-1}\left[I(t)H^{-1}(t)\psi_1(t)+\lambda_II(t)Y^*(t)-\sigma_{C_1}F(t)-\pi^*_S(t)\sigma_{S_1}\right].
		\end{aligned}\right.
	\end{equation}
	Let $ \tilde\pi_S^*=\left\{\tilde\pi^*_S(t)|0\leqslant t\leqslant T\right\} $, $ \tilde\pi_P^*=\left\{\tilde\pi^*_P(t)|0\leqslant t\leqslant T\right\} $ and $ {\tilde\pi}^*= (\tilde\pi_P^*,\tilde\pi_S^*)$, then we have $ \tilde\pi^*\in\tilde{\mathcal{A}}\left(\tilde{x}_{0}\right) $ and $ \tilde{X}^{\pi^*}(t)=Y^*(t) ,~a.s.~\forall~0\leqslant t\leqslant T$. As such, we have $ \tilde{X}^{\tilde\pi^*}(T)=Y^*(T)=Z^* $.}
\end{proof}
Based on  Theorem \ref{Theorem1} and Proposition \ref{Propos1}, we see that the optimal values of Problem (\ref{PSVD}) and Problem (\ref{PVD}) are equal. In addition, the proof of Proposition \ref{Propos1} shows the relationship between the optimal solutions of these two problems. Therefore, instead of the stochastic dynamic optimization problem (\ref{PSVD}), we only need to investigate Problem (\ref{PVD}) first.
\vskip 10pt
\subsection{Linearization}
As the objective function in Problem (\ref{PVD}) is non-linear and different from the traditional maximization of  expected utility problem, and there are two constraints in the problem, we need to make further transformation to the problem. In this section, we transform the non-linear optimization problem into a linear optimization problem based on fractional programming. In order to ensure the well-posedness of the problem, we present some assumptions  about the reward function $U$ and the penalty function $ D $:

\begin{enumerate}
	\item[\bf (H1)]  $U$ and $D$ are both twice differentiable and strictly increasing functions with $U(0)=D(0)=0$.
\end{enumerate}
As the manager prefers large wealth, the monotonicity of $ U $ and $ D $ is a natural requirement. The condition that $U(0)=D(0)=0$ makes the performance ratio close to the Omega ratio considered by the predecessors when the terminal wealth is approximately equal to the reference level.
%
\begin{remark}\label{Nonconvex}
	According to \cite{JinZhou}, a sequence of random variables $ \{Z_n\}_{n\geqslant 1} $ can be constructed satisfying $ \mathbb{E}[H(T)Z_n]=\tilde{x}_0 $ with $ \mathbb{E}[Z_n]\to\infty $. As such, it is easy to see that if the reward function $ U $ is convex, the optimization problem (\ref{PVD}) is unbounded if without VaR constraints. Therefore, in order to ensure the well-posedness of the problem, we assume that the reward function $ U $ is concave, which is also a commonly used assumption in the expected utility maximization problem.	
%
\end{remark}
More specifically, we suppose that the reward function $ U $ satisfies the following conditions:
\begin{enumerate}
	\item [\bf (H2)]The reward function $ U $ satisfies the Inada condition:\begin{equation*}
		\lim_{x\searrow0}U'(x)=\infty,\quad\lim_{x\to\infty}U'(x)=0.
	\end{equation*}
	\item [\bf (H3)]The reward function $U$ is strictly concave, i.e.,  $ U''(z)<0,\forall z\in(0,\infty). $
\end{enumerate}
However, the penalty function $D$ is not required to be concave or convex. The property of $D$ largely affects the solution of Problem (\ref{PVD}). In the following, for convenience, we assume that $D$ is concave or strictly convex corresponding to the risk aversion case and risk seeking case, respectively.

The objective function in Problem (\ref{PVD}) is the ratio of two expectations and is non-linear. We adopt the linearization method by fractional programming to transform the problem into a family of linear optimization problems: for any parameter $\nu\geqslant0$, consider a family of optimization problems:
\begin{equation}\label{Linear}
	v(\nu,\tilde{x}_{0})\triangleq \sup_{Z\in\mathcal{C}(\tilde{x}_{0}),\mathbb{P}(Z\geqslant L)\geqslant1-\varepsilon}{\mathbb{E}\left[U\left((Z-{\theta})_+\right)\right]}-\nu{\mathbb{E}\left[D\left(({\theta}-Z)_+\right)\right]},
\end{equation}where\begin{equation*}
	\mathcal{C}(\tilde{x}_{0})=\left\{Z\in\mathcal{M}_+|~\mathbb{E}[H(T)Z]\leqslant\tilde{x}_0\right\}.
\end{equation*}
\begin{remark}\label{REMARKV}
	Similar to the proof of Proposition \ref{Propos1}, it is easy to verify that when Problem (\ref{Linear}) attains the optimal value, the budget constraint holds with equality, i.e.,	
	\begin{equation}\nonumber
		v(\nu,\tilde{x}_{0})=\sup_{Z\in\mathcal{M}_+,\mathbb{E}[H(T)Z]=\tilde{x}_0,\mathbb{P}(Z\geqslant L)\geqslant1-\varepsilon}{\mathbb{E}\left[U\left((Z-{\theta})_+\right)\right]}-\nu{\mathbb{E}\left[D\left(({\theta}-Z)_+\right)\right]}.
	\end{equation}
\end{remark}
Problem (\ref{Linear}) is linear and contains reward function and penalty function. In order to solve Problem  (\ref{PVD}), it is only necessary to solve a family of linearized problems (\ref{Linear}). Under suitable condition of $\nu$, the optimal solutions to Problems  (\ref{PVD}) and (\ref{Linear}) are the same, which is shown by the following theorem.
\begin{theorem}\label{Theorem2}
	Assume $ \tilde{x}_0<e^{-\int_{0}^{T}r_r(s)\rd s}\theta $, and that, for $ \nu\geqslant0 $, $ Z_{\nu} $ is the optimal solution of Problem (\ref{Linear}), i.e., \begin{equation}\nonumber
		v(\nu,\tilde{x}_{0})={\mathbb{E}\left[U\left((Z_\nu-{\theta})_+\right)\right]}-\nu{\mathbb{E}\left[D\left(({\theta}-Z_\nu)_+\right)\right]}.
	\end{equation}
	If there exists $ \nu^*\geqslant0 $ such that \begin{equation*}
		\nu^*=\frac{\mathbb{E}\left[U\left((Z_{\nu^*}-{\theta})_+\right)\right]}{\mathbb{E}\left[D\left(({\theta}-Z_{\nu^*})_+\right)\right]},
	\end{equation*}
	i.e., there exists $ \nu^*\geqslant0 $ such that $ v(\nu^*,\tilde{x}_{0})=0 $, then we have that $ Z^*\triangleq Z_{\nu^*} $ is the optimal solution to Problem (\ref{PVD}). Besides, $ \nu^* $ is the optimal valve of Problem (\ref{PVD}).
\end{theorem}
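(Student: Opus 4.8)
The plan is to establish the equivalence between Problem~(\ref{PVD}) and the family of linearized problems~(\ref{Linear}) by a standard fractional-programming (Dinkelbach-type) argument, exploiting the sign of $v(\nu,\tilde x_0)$ as a function of $\nu$. First I would record the elementary but crucial observation that the denominator $\mathbb{E}[D((\theta-Z)_+)]$ is strictly positive for every feasible $Z$: indeed, the budget constraint together with the hypothesis $\tilde x_0 < e^{-\int_0^T r_r(s)\,\mathrm{d}s}\theta$ forces $\mathbb{E}[H(T)Z] < e^{-\int_0^T r_r(s)\,\mathrm{d}s}\theta = \mathbb{E}[H(T)\theta]$, so $Z$ cannot satisfy $Z\geqslant\theta$ a.s.; hence $\mathbb{P}(Z<\theta)>0$ and, since $D$ is strictly increasing with $D(0)=0$ by \textbf{(H1)}, $\mathbb{E}[D((\theta-Z)_+)]>0$. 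This guarantees that the ratio defining Problem~(\ref{PVD}) is well defined and that dividing by the denominator is legitimate throughout.

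Next I would argue the two inequalities that together give optimality. For the first, take $Z^* = Z_{\nu^*}$, which is feasible for Problem~(\ref{PVD}) (it lies in $\mathcal{C}(\tilde x_0)$ and satisfies the VaR constraint by construction). From $v(\nu^*,\tilde x_0)=0$ we get
\[
\mathbb{E}\!\left[U\!\left((Z^*-\theta)_+\right)\right] = \nu^*\,\mathbb{E}\!\left[D\!\left((\theta-Z^*)_+\right)\right],
\]
and dividing by the (strictly positive) denominator shows that the objective of Problem~(\ref{PVD}) at $Z^*$ equals exactly $\nu^*$. For the reverse inequality, let $Z$ be any feasible point of Problem~(\ref{PVD}). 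Since $Z$ is admissible in Problem~(\ref{Linear}) with parameter $\nu^*$, the optimality of $Z_{\nu^*}$ yields
\[
\mathbb{E}\!\left[U\!\left((Z-\theta)_+\right)\right] - \nu^*\,\mathbb{E}\!\left[D\!\left((\theta-Z)_+\right)\right] \leqslant v(\nu^*,\tilde x_0) = 0,
\]
and, again dividing by the positive denominator $\mathbb{E}[D((\theta-Z)_+)]$, we obtain that the objective of Problem~(\ref{PVD}) at $Z$ is at most $\nu^*$. Combining the two displays shows $Z^*$ is optimal for Problem~(\ref{PVD}) and that the optimal value is $\nu^*$.

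The main obstacle — and the point that requires care rather than the two-line manipulations above — is the well-posedness step: one must be sure that the supremum in Problem~(\ref{PVD}) is actually attained (so that comparing with $Z^*$ is meaningful) and that $v(\nu^*,\tilde x_0)$ is a genuine maximum rather than only a supremum, so that the identity $v(\nu^*,\tilde x_0)=0$ can be read as an equality evaluated at $Z_{\nu^*}$. Existence of the maximizer $Z_{\nu^*}$ for the linearized problem is where the concavity assumptions \textbf{(H2)}--\textbf{(H3)} on $U$, the structural hypotheses on $D$, and the concavification/Lagrange-dual analysis of the later sections enter; for the purposes of this theorem I would simply invoke that $Z_\nu$ is assumed to exist (as stated in the hypotheses) and note that Remark~\ref{REMARKV} lets us take the budget constraint to bind, which is what makes the positivity argument for the denominator go through uniformly. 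I would also remark in passing that monotonicity of $\nu\mapsto v(\nu,\tilde x_0)$ (it is nonincreasing, being an infimum over $Z$ of affine-decreasing functions of $\nu$) explains why the condition ``there exists $\nu^*$ with $v(\nu^*,\tilde x_0)=0$'' is the natural one and connects to the existence discussion of the Lagrange multipliers promised in the introduction.
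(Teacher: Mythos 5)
Your proposal is correct and follows essentially the same route as the paper's proof: establish $\mathbb{P}(Z<\theta)>0$ (hence strict positivity of the denominator) from the budget constraint and the hypothesis $\tilde{x}_0<e^{-\int_0^T r_r(s)\,\mathrm{d}s}\theta$, then use the optimality of $Z_{\nu^*}$ in the linearized problem together with $v(\nu^*,\tilde{x}_0)=0$ to bound the ratio for every feasible $Z$ by $\nu^*$. Your added remarks on attainment and on the positivity of the denominator being needed uniformly over feasible $Z$ make explicit a point the paper leaves implicit, but the argument is the same.
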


\begin{proof}
	Based on $ \tilde{x}_0<e^{-\int_{0}^{T}r_r(s)\rd s}\theta $ and the definition of $ \mathcal{C}(\tilde{x}_{0}) $, we know that $ \mathbb{P}(Z<\theta) >0$, otherwise, we have $ \tilde{x}_0\geqslant\mathbb{E}[H(T)Z] \geqslant\mathbb{E}[H(T)]\theta= e^{-\int_{0}^{T}r_r(s)\rd s}\theta$, which contradicts with $ \tilde{x}_0<e^{-\int_{0}^{T}r_r(s)\rd s}\theta $.
	
	Using the optimality of $ Z_{\nu^*}$, $ \forall Z\in\mathcal{M}_+\text{~satisfying~} \mathbb{E}[H(T)Z]\leqslant\tilde{x}_{0}\text{~and~} \mathbb{P}(Z\geqslant L)\geqslant1-\varepsilon$, we have	\begin{equation}\nonumber
		\begin{aligned}
			0=&v(\nu^*,\tilde{x}_{0})\\
			=&{\mathbb{E}\left[U\left((Z_{\nu^*}-{\theta})_+\right)\right]}-\nu^*{\mathbb{E}\left[D\left(({\theta}-Z_{\nu^*})_+\right)\right]}\\
			\geqslant&{\mathbb{E}\left[U\left((Z-{\theta})_+\right)\right]}-\nu^*{\mathbb{E}\left[D\left(({\theta}-Z)_+\right)\right]}\\
			=&{\mathbb{E}\left[U\left((Z-{\theta})_+\right)\right]}-\frac{\mathbb{E}\left[U\left((Z_{\nu^*}-{\theta})_+\right)\right]}{\mathbb{E}\left[D\left(({\theta}-Z_{\nu^*})_+\right)\right]}{\mathbb{E}\left[D\left(({\theta}-Z)_+\right)\right]},
		\end{aligned}
	\end{equation}
	which means that
	\begin{equation}\nonumber
		\frac{\mathbb{E}\left[U\left((Z_{\nu^*}-{\theta})_+\right)\right]}{\mathbb{E}\left[D\left(({\theta}-Z_{\nu^*})_+\right)\right]}\geqslant\frac{\mathbb{E}\left[U\left((Z-{\theta})_+\right)\right]}{\mathbb{E}\left[D\left(({\theta}-Z)_+\right)\right]}.
	\end{equation}
	As such, $ Z^*=Z_{\nu^*} $ is the optimal solution to Problem (\ref{PVD}) and $ \nu^* $ is the related optimal value. We call $\nu^* $ the optimal parameter.
\end{proof}
Based on Theorem \ref{Theorem2}, we need to disentangle Problem (\ref{PVD}) in the following two steps:
\begin{itemize}
	\item Prove the existence of the optimal parameter, i.e., the existence of the root of equation $ v(\cdot,\tilde{x}_{0})=0 $.
	\item Solve the optimization problem (\ref{Linear}) after linearization.
\end{itemize}
\vskip 10pt
\subsubsection{\bf Existence of optimal parameter $ \nu^* $}\label{EoOP}
In order to ensure the existence of $\nu^*$, we propose the following condition \textbf{(H4)}:
\begin{enumerate}
	\item [\bf (H4)]The reward function $ U $ has the asymptotic behavior of the ``Arrow-Pratt relative risk aversion" and satisfies the asymptotic elasticity condition:
	\begin{equation*}
		\liminf_{x\to\infty}\left(-\frac{xU''(x)}{U'(x)}\right)>0,\quad\lim_{x\to\infty}\frac{xU''(x)}{U'(x)}<1.
	\end{equation*}
\end{enumerate}
In the following, we will explore the properties of the function  $ v(\cdot,\tilde{x}_{0}) $ to  show the existence of the optimal parameter, i.e., the existence of the root of equation $ v(\cdot,\tilde{x}_{0})=0 $. We need the following lemma first.
\begin{lemma}\label{LEMMA1}
	Assume that {(H1)}-{(H4)} hold and let
	 \begin{equation*}\begin{aligned}
			M\triangleq\sup_{Z\in\mathcal{M}_+,\mathbb{E}[H(T)Z]=\tilde{x}_0}{\mathbb{E}\left[U\left((Z-{\theta})_+\right)\right]},\\m\triangleq\inf_{Z\in\mathcal{M}_+,\mathbb{E}[H(T)Z]=\tilde{x}_0}{\mathbb{E}\left[D\left(({\theta}-Z)_+\right)\right]}.
		\end{aligned}
	\end{equation*}Then if $ \tilde{x}_{0}<e^{-\int_{0}^{T}r_r(s)\rd s}\theta $, we have $ M<\infty$ and $m>0$.
\end{lemma}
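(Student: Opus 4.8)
The plan is to prove the two bounds separately. Finiteness of $M$ is the harder half and rests on convex duality together with the asymptotic elasticity hypothesis \textbf{(H4)}; positivity of $m$ comes from a Cauchy--Schwarz estimate driven by the strict inequality $\tilde{x}_{0}<e^{-\int_{0}^{T}r_r(s)\rd s}\theta$. Throughout I use that the pricing kernel $H(T)$ of Eq.~(\ref{DeFofH}) is the stochastic exponential of a Brownian integral with deterministic integrands, hence lognormal, so that $\mathbb{E}[H(T)^{q}]<\infty$ for every $q\in\mathbb{R}$ and $\mathbb{E}[H(T)]=e^{-\int_{0}^{T}r_r(s)\rd s}$.

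For $M<\infty$ I would argue by duality. Set $\widehat{U}(z)\triangleq U((z-\theta)_+)$. By \textbf{(H1)}--\textbf{(H3)} the conjugate $V(y)\triangleq\sup_{w\geqslant0}(U(w)-yw)$ is nonnegative, finite, convex and strictly decreasing on $(0,\infty)$, with interior maximiser $w^{*}(y)=(U')^{-1}(y)$; a short computation gives the conjugate of $\widehat{U}$ as $\widehat{V}(y)=\max\{0,V(y)-y\theta\}\leqslant V(y)$. The Fenchel--Young inequality $\widehat{U}(z)\leqslant\widehat{V}(y)+yz$, used with $y$ replaced by $yH(T)$ for a fixed $y>0$ and integrated against any feasible $Z$, gives
\begin{equation*}
	\mathbb{E}[\widehat{U}(Z)]\leqslant\mathbb{E}[\widehat{V}(yH(T))]+y\,\mathbb{E}[H(T)Z]=\mathbb{E}[\widehat{V}(yH(T))]+y\tilde{x}_{0}\leqslant\mathbb{E}[V(yH(T))]+y\tilde{x}_{0},
\end{equation*}
so $M\leqslant\mathbb{E}[V(yH(T))]+y\tilde{x}_{0}$ and everything reduces to $\mathbb{E}[V(yH(T))]<\infty$. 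I would split this expectation over $\{H(T)\geqslant1\}$, where $V(yH(T))\leqslant V(y)<\infty$ by monotonicity, and over $\{H(T)<1\}$, where \textbf{(H4)} enters: an asymptotic elasticity strictly below $1$ yields the standard polynomial growth bound $V(\lambda)\leqslant K_{1}+K_{2}\lambda^{-\rho}$ on $(0,1]$ for constants $K_{1},K_{2},\rho>0$, whence $\mathbb{E}[V(yH(T))\ind_{\{H(T)<1\}}]\leqslant K_{1}+K_{2}y^{-\rho}\mathbb{E}[H(T)^{-\rho}]<\infty$.

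For $m>0$ I would obtain a uniform lower bound. Fix feasible $Z$ ($Z\geqslant0$, $\mathbb{E}[H(T)Z]=\tilde{x}_{0}$). The hypothesis gives $\Delta\triangleq\theta\,\mathbb{E}[H(T)]-\tilde{x}_{0}>0$, and from $\theta-Z=(\theta-Z)_+-(Z-\theta)_+$, dropping the nonnegative term $\mathbb{E}[H(T)(Z-\theta)_+]$, one gets $\mathbb{E}[H(T)(\theta-Z)_+]\geqslant\Delta$. Since $0\leqslant(\theta-Z)_+\leqslant\theta$, Cauchy--Schwarz together with $(\theta-Z)_+^{2}\leqslant\theta(\theta-Z)_+$ gives $\Delta\leqslant\sqrt{\mathbb{E}[H(T)^{2}]}\sqrt{\theta\,\mathbb{E}[(\theta-Z)_+]}$, hence
\begin{equation*}
	\mathbb{E}[(\theta-Z)_+]\geqslant a\triangleq\frac{\Delta^{2}}{\theta\,\mathbb{E}[H(T)^{2}]}>0
\end{equation*}
uniformly in $Z$. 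If $D$ is convex, Jensen's inequality and monotonicity give $\mathbb{E}[D((\theta-Z)_+)]\geqslant D(\mathbb{E}[(\theta-Z)_+])\geqslant D(a)>0$; if $D$ is concave, the chord bound $D(w)\geqslant\frac{D(\theta)}{\theta}w$ on $[0,\theta]$ (valid because $D(0)=0$) gives $\mathbb{E}[D((\theta-Z)_+)]\geqslant\frac{D(\theta)}{\theta}\,\mathbb{E}[(\theta-Z)_+]\geqslant\frac{D(\theta)}{\theta}a>0$. In both cases the bound is $Z$-independent, so $m>0$.

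I expect the main obstacle to be the finiteness of $\mathbb{E}[V(yH(T))]$: one must extract from \textbf{(H4)} the polynomial decay rate of the dual function $V$ near the origin and check it against the negative moments of the lognormal kernel $H(T)$; the remaining ingredients (the conjugates in the first part, Cauchy--Schwarz and Jensen in the second) are routine. Observe finally that neither $M$ nor $m$ involves the VaR constraint, so no extra care is needed on that account.
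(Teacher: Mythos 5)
Your proof is correct, but it takes a genuinely different route from the paper on both halves. For $M<\infty$ the paper simply cites \cite{JinZhouConv}, whereas you supply the actual duality argument: Fenchel--Young with the conjugate $V$, the reduction to $\mathbb{E}[V(yH(T))]<\infty$, and the polynomial bound on $V$ near the origin extracted from \textbf{(H4)} paired with the finite negative moments of the lognormal kernel --- this is essentially the content of the cited result, so you have made the omitted step explicit rather than deviated from it. For $m>0$ the divergence is real: the paper argues by contradiction, extracting from a minimizing sequence a subsequence with $(\theta-Z_{n_k})_+\to 0$ a.s., passing to the measure $\rd\mathbb{Q}/\rd\mathbb{P}=e^{\int_0^T r_r(s)\rd s}H(T)$, and invoking dominated convergence to contradict $\mathbb{E}[H(T)(\theta-Z_n)]=e^{-\int_0^T r_r(s)\rd s}\theta-\tilde{x}_0>0$; you instead prove the stronger quantitative statement that $\mathbb{E}[(\theta-Z)_+]\geqslant \Delta^2/(\theta\,\mathbb{E}[H(T)^2])$ uniformly over the feasible set, via exactly the same starting inequality $\mathbb{E}[H(T)(\theta-Z)_+]\geqslant\Delta$ but finished with Cauchy--Schwarz and then Jensen (convex $D$) or the chord bound (concave $D$). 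Your version buys an explicit lower bound for $m$ and avoids subsequence extraction; the paper's version is shorter and does not need $D$ to be convex or concave, only strictly increasing with $D(0)=0$ --- though since the paper restricts to those two cases anyway, nothing is lost. All the individual steps you use (the computation of $\widehat{V}$, the moment bounds on $H(T)$, the two case distinctions for $D$) check out.
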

\begin{proof}
	For the proof of $ M<\infty $, see \cite{JinZhouConv}.
	
Obviously, $m\geqslant 0$. If $m=0 $, then there exists $ Z_n\in\mathcal{M}_+$ with $ \mathbb{E}[H(T)Z_n]=\tilde{x}_0 $ such that $\displaystyle \lim_{n\to\infty}{\mathbb{E}\left[D\left(({\theta}-Z_n)_+\right)\right]}=0 $. Then $ D\left(({\theta}-Z_n)_+\right) $ converges to 0 in probability, thus $ ({\theta}-Z_n)_+ $ converges to 0 in probability. As such, $ ({\theta}-Z_n)_+ $ has a subsequence $ ({\theta}-Z_{n_m})_+ $ which converges to 0 $ a.s. $. Therefore, $ ({\theta}-Z_{n_m})_+ $ converges to 0 $ a.s. $ with respect to the new measure $ \mathbb{Q} $ which is given by \begin{equation*}
		\frac{\mathrm{d}\mathbb{Q}}{\mathrm{d}\mathbb{P}}=e^{\int_{0}^{T}r_r(s)\rd s} H(T).
	\end{equation*}
	As $ 0\leqslant({\theta}-Z_{n_m})_+\leqslant \theta$, $ ({\theta}-Z_{n_m})_+ $ converges to 0 in $ L^1 $-norm with respect to $ \mathbb{Q} $ based on {Dominated convergence theorem}, i.e.,  $\mathbb{E}^{\mathbb{Q}}\{({\theta}-Z_{n_m})_+\}\to0 $.
	Then  $ H(T)({\theta}-Z_{n_m} )_+$ converges to 0 in $L^1$-norm with respect to $ \mathbb{P} $, i.e.,  $ \mathbb{E}\{H(T)({\theta}-Z_{n_m} )_+\}\to0 $, which contradicts with the condition $ \mathbb{E}\{H(T)({\theta}-Z_n)\}= e^{-\int_{0}^{T}r_r(s)\rd s} \theta-\tilde{x}_{0}>0$.
	Thus $m>0 $.
\end{proof}
\begin{theorem}\label{nu}
	
	Suppose that {(H1)}-{(H4)} hold and the initial value $ \tilde{x}_{0} $ satisfies  \begin{equation}\label{condition}
		L\int_{q(\varepsilon)}^{+\infty}e^{-a-x}f(x)\rd x<\tilde{x}_{0}<e^{-\int_{0}^{T}r_r(s)\rd s}\theta,
	\end{equation}
	where $ a=\int_{0}^Tr_r(s)\rd s+\frac12((\lambda_I-\sigma_I)^2+\lambda_S^2)T $, $ q(\varepsilon) $ is the $ \varepsilon $-quantile of the normal  distribution $ N(0,((\lambda_I-\sigma_I)^2+\lambda_S^2)T) $, and $ f(\cdot) $ is the p.d.f. of the normal distribution $ N(0,((\lambda_I-\sigma_I)^2+\lambda_S^2)T) $. Then the function $ v(\cdot,\tilde{x}_{0}) $ admits the following properties
	
	\begin{enumerate}
		\item[(i)] $0< v(0,\tilde{x}_{0})<\infty$.
		\item[(ii)]$ v(\cdot,\tilde{x}_{0}) $ is non-increasing in $ (-\infty,+\infty) $.
		\item[(iii)] For a given $ \tilde{x}_{0}>0 $, $ v(\cdot,\tilde{x}_{0}) $ is convex in $(-\infty,+\infty) $.
		\item[(iv)] $\displaystyle\lim_{\nu\to\infty}v(\nu,\tilde{x}_{0})=-\infty.  $
	\end{enumerate}
\end{theorem}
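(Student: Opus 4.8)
The plan is to treat the four assertions separately. Properties \textbf{(ii)} and \textbf{(iii)} rely only on the variational structure of $v$. Since $D$ is strictly increasing with $D(0)=0$, one has $D(y)\geqslant 0$ for $y\geqslant0$, hence $\mathbb{E}[D(({\theta}-Z)_+)]\geqslant0$ for every $Z\in\mathcal{M}_+$; therefore, for each fixed feasible $Z$, the map
\[
\nu\longmapsto \mathbb{E}\big[U((Z-{\theta})_+)\big]-\nu\,\mathbb{E}\big[D(({\theta}-Z)_+)\big]
\]
is affine and non-increasing on all of $\mathbb{R}$. Taking the pointwise supremum over the feasible set (which is nonempty by the left inequality in (\ref{condition})) preserves both features: a supremum of non-increasing functions is non-increasing, and a supremum of affine functions is convex, giving \textbf{(ii)} and \textbf{(iii)} simultaneously. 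Along the way I would record that $v(\cdot,\tilde x_0)$ is finite-valued, hence a proper convex function: for $Z\geqslant0$, $({\theta}-Z)_+\leqslant{\theta}$ gives $\mathbb{E}[D(({\theta}-Z)_+)]\leqslant D({\theta})<\infty$, which with $M<\infty$ from Lemma \ref{LEMMA1} bounds $v$ above on compact $\nu$-ranges, while evaluating the objective at any single feasible $Z$ bounds it below.

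For \textbf{(iv)} I would first reduce the feasible set of Problem (\ref{Linear}) to the equality-budget set, exactly as in Remark \ref{REMARKV} and the proof of Proposition \ref{Propos1}: given $Z$ with $\mathbb{E}[H(T)Z]\leqslant\tilde x_0$ and $\mathbb{P}(Z\geqslant L)\geqslant1-\varepsilon$, the shifted payoff $\tilde Z\triangleq Z+c$, with $c\geqslant0$ chosen so $\mathbb{E}[H(T)\tilde Z]=\tilde x_0$, is still feasible (it dominates $Z$, so the VaR constraint is preserved) and satisfies $(\tilde Z-{\theta})_+\geqslant(Z-{\theta})_+$, $({\theta}-\tilde Z)_+\leqslant({\theta}-Z)_+$. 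Hence the feasible supremum of $\mathbb{E}[U((Z-{\theta})_+)]$ is at most $M$ and the feasible infimum of $\mathbb{E}[D(({\theta}-Z)_+)]$ is at least $m$, with $M<\infty$ and $m>0$ from Lemma \ref{LEMMA1} (applicable precisely because $\tilde x_0<e^{-\int_0^Tr_r(s)\rd s}{\theta}$). Thus, for $\nu\geqslant0$,
\[
v(\nu,\tilde x_0)\leqslant M-\nu m\xrightarrow[\nu\to\infty]{}-\infty,
\]
which is \textbf{(iv)}; the bound $v(0,\tilde x_0)\leqslant M<\infty$ also disposes of the finiteness half of \textbf{(i)}, and $v(0,\tilde x_0)\geqslant0$ is immediate since $U,D\geqslant0$.

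The remaining claim, $v(0,\tilde x_0)>0$ in \textbf{(i)}, is where the left inequality of (\ref{condition}) is used. Because $U$ is strictly increasing with $U(0)=0$, it is enough to exhibit a feasible $Z$ with $\mathbb{P}(Z>{\theta})>0$, for then $\mathbb{E}[U((Z-{\theta})_+)]>0$. From (\ref{DeFofH}) write $H(T)=e^{-a+\xi}$ with $\xi\triangleq-(\lambda_I-\sigma_I)W_I(T)-\lambda_S W_S(T)\sim N\big(0,((\lambda_I-\sigma_I)^2+\lambda_S^2)T\big)$ and $a$ as in the statement. Put $k\triangleq e^{-a-q(\varepsilon)}$, so that $\mathbb{P}(H(T)\leqslant k)=1-\varepsilon$, and set $Z_0\triangleq L\,\mathbbm{1}_{\{H(T)\leqslant k\}}$ (replacing $L$ by $L\vee0$ if $L<0$, in which case the VaR constraint is vacuous). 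A change of variables together with the symmetry of $f$ identifies $\mathbb{E}[H(T)\mathbbm{1}_{\{H(T)\leqslant k\}}]$ with $\int_{q(\varepsilon)}^{+\infty}e^{-a-x}f(x)\rd x$, so $\mathbb{E}[H(T)Z_0]<\tilde x_0$ by (\ref{condition}) and the surplus $\Delta\triangleq\tilde x_0-\mathbb{E}[H(T)Z_0]$ is strictly positive. Since $H(T)\in L^1$, one can choose a set $B$ with $\mathbb{P}(B)>0$ and $({\theta}+1)\,\mathbb{E}[H(T)\mathbbm{1}_B]\leqslant\Delta$; then $Z\triangleq Z_0+({\theta}+1)\mathbbm{1}_B$ is non-negative, satisfies $\mathbb{E}[H(T)Z]\leqslant\tilde x_0$, dominates $Z_0$ and hence inherits the VaR constraint, and exceeds ${\theta}$ on $B$. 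Therefore $v(0,\tilde x_0)\geqslant\mathbb{E}[U((Z-{\theta})_+)]>0$, completing \textbf{(i)}.

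I expect the third step to be the only genuine obstacle: one must recognise that the price of the cheapest VaR-admissible payoff is exactly the left-hand side of (\ref{condition})—which pins down the correct ``bad'' event $\{H(T)>k\}$ through the quantile $q(\varepsilon)$ of $\xi$—and then check that an arbitrarily small portion of the leftover budget $\Delta$ can lift the payoff above ${\theta}$ on a set of positive probability without violating admissibility. Everything else is bookkeeping, chiefly transferring the constants $M<\infty$ and $m>0$ of Lemma \ref{LEMMA1} from the equality-budget set to the inequality-plus-VaR feasible set of Problem (\ref{Linear}) via the upward-shift argument.
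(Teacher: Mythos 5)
Your proposal is correct and follows essentially the same route as the paper: the bound $v(\nu,\tilde x_0)\leqslant M-\nu m$ via Lemma \ref{LEMMA1} and the equality-budget reduction of Remark \ref{REMARKV} for (iv) and the finiteness in (i), and the cheapest VaR-admissible payoff $L\,\mathbbm{1}_{\{H(T)\leqslant e^{-a-q(\varepsilon)}\}}$ plus a small surplus bump above $\theta$ for positivity in (i), which is exactly the construction of Lemma \ref{Lx0} (you merely make explicit the paper's ``we can modify $X$'' step). The only, minor, difference is that for (ii) and (iii) you take the pointwise supremum of the affine non-increasing maps $\nu\mapsto\mathbb{E}[U((Z-\theta)_+)]-\nu\mathbb{E}[D((\theta-Z)_+)]$ over the feasible set, which is slightly cleaner than the paper's argument since it does not presuppose that the maximizers $Z_{\nu_2}$ and $Z_t$ are attained.
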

To prove Theorem \ref{nu}, we first prove the following:
\begin{lemma}\label{Lx0}
	If Eq.~(\ref{Lx01}) \begin{equation}\label{Lx01}
		L\int_{q(\varepsilon)}^{+\infty}e^{-a-x}f(x)\rd x<\tilde{x}_{0}
	\end{equation}holds, where $ a ,q(\varepsilon)$ and $ f(\cdot) $ are the same as in Theorem \ref{nu}, then there exists $ Z\in\mathcal{M}_+ $ satisfying
	\begin{equation}\label{SOLVEable}\left\{\begin{aligned}
			\mathbb{E}[H(T)Z]\leqslant\tilde{x}_0,\\
			\mathbb{P}(Z\geqslant L)\geqslant1-\varepsilon,\\
			\mathbb{P}(Z> \theta)>0.
		\end{aligned}\right.
	\end{equation}
	Furthermore, if $ L<\theta $, then the condition (\ref{Lx01}) is equivalent to that there exists $ Z\in\mathcal{M}_+ $ satisfying Eq.~(\ref{SOLVEable}).
\end{lemma}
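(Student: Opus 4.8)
The plan is to reduce everything to the explicit lognormal form of the pricing kernel. From Eq.~(\ref{DeFofH}), since $r_r$ is deterministic and the market prices of risk are constant, $H(T)=e^{-a-\eta}$, where $\eta\triangleq(\lambda_I-\sigma_I)W_I(T)+\lambda_S W_S(T)$ is $N\!\big(0,((\lambda_I-\sigma_I)^2+\lambda_S^2)T\big)$ and $a$ is exactly the constant in the statement; note $\mathbb{E}[H(T)]=e^{-\int_0^Tr_r(s)\rd s}<\infty$. Setting $A\triangleq\{\eta>q(\varepsilon)\}$, the definition of the $\varepsilon$-quantile and the continuity of $\eta$ give $\mathbb{P}(A)=1-\varepsilon$, and a change of variables gives $\mathbb{E}[H(T)\ind_A]=\int_{q(\varepsilon)}^{+\infty}e^{-a-x}f(x)\rd x$. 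So the first thing I would record is that condition (\ref{Lx01}) is simply $L\,\mathbb{E}[H(T)\ind_A]<\tilde x_0$. I would carry out the argument in the non-degenerate range $0\le\varepsilon<1$, $L\ge0$, and dispose of the remaining cases ($\varepsilon=1$, where the VaR constraint is vacuous, or $L<0$, where $\{Z\ge L\}=\Omega$) at the end by simply taking $Z=K\ind_C$ on a small event $C$ with $K\triangleq\max\{\theta,L\}+1$, using $\tilde x_0>0$ and $H(T)\in L^1$.

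For the \emph{sufficiency} direction I would start from the candidate $Z_0\triangleq L\ind_A$: by the reformulation above it already meets the budget constraint with strict slack $\delta\triangleq\tilde x_0-L\,\mathbb{E}[H(T)\ind_A]>0$ and meets the VaR constraint with equality, $\mathbb{P}(Z_0\ge L)=\mathbb{P}(A)=1-\varepsilon$, but it fails $\mathbb{P}(Z_0>\theta)>0$. I would then perturb: choose $B\subseteq A$ with $0<\mathbb{P}(B)$ small enough that $(K-L)\,\mathbb{E}[H(T)\ind_B]<\delta$ — possible since $G\mapsto\mathbb{E}[H(T)\ind_G]$ is absolutely continuous ($H(T)\in L^1$) and the law of $(W_I(T),W_S(T))$ is non-atomic with $\mathbb{P}(A)>0$ — and set $Z\triangleq L\ind_{A\setminus B}+K\ind_B$. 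A direct check then gives $\mathbb{P}(Z\ge L)=\mathbb{P}(A)=1-\varepsilon$, $\mathbb{P}(Z>\theta)\ge\mathbb{P}(B)>0$, and $\mathbb{E}[H(T)Z]=\mathbb{E}[H(T)Z_0]+(K-L)\,\mathbb{E}[H(T)\ind_B]<\tilde x_0$, so $Z$ solves Eq.~(\ref{SOLVEable}).

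For the \emph{necessity} direction (assuming $L<\theta$) the core observation is a ``bathtub'' minimality of $A$: for every event $G$ with $\mathbb{P}(G)\ge1-\varepsilon$ one has $\mathbb{E}[H(T)\ind_G]\ge\mathbb{E}[H(T)\ind_A]$. This holds because $H(T)=e^{-a-\eta}<e^{-a-q(\varepsilon)}$ on $A$ and $\ge e^{-a-q(\varepsilon)}$ on $A^c$, so
\begin{align*}
\mathbb{E}[H(T)\ind_G]-\mathbb{E}[H(T)\ind_A]&=\mathbb{E}[H(T)\ind_{G\setminus A}]-\mathbb{E}[H(T)\ind_{A\setminus G}]\\
&\ge e^{-a-q(\varepsilon)}\big(\mathbb{P}(G\setminus A)-\mathbb{P}(A\setminus G)\big)=e^{-a-q(\varepsilon)}\big(\mathbb{P}(G)-\mathbb{P}(A)\big)\ge0
\end{align*}
(for $\varepsilon=0$ this is trivial, $A=\Omega$ up to a null set). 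Given $Z$ solving Eq.~(\ref{SOLVEable}), I would set $G\triangleq\{Z\ge L\}$ and $G_1\triangleq\{Z>\theta\}$; since $L<\theta$ we have $G_1\subseteq G$, and on $G$ one has $H(T)Z\ge LH(T)+(\theta-L)H(T)\ind_{G_1}$, whence
\begin{align*}
\tilde x_0\ge\mathbb{E}[H(T)Z]&\ge\mathbb{E}[H(T)Z\ind_G]\ge L\,\mathbb{E}[H(T)\ind_G]+(\theta-L)\,\mathbb{E}[H(T)\ind_{G_1}]\\
&\ge L\,\mathbb{E}[H(T)\ind_A]+(\theta-L)\,\mathbb{E}[H(T)\ind_{G_1}].
\end{align*}
Because $\theta-L>0$, $\mathbb{P}(G_1)=\mathbb{P}(Z>\theta)>0$ and $H(T)>0$ a.s., the last term is strictly positive, so $L\,\mathbb{E}[H(T)\ind_A]<\tilde x_0$, i.e.\ (\ref{Lx01}).

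The main obstacle is essentially contained in that ``bathtub'' step together with the passage to a \emph{strict} inequality; both become straightforward once one notices that $H(T)$ is a strictly decreasing deterministic function of the single Gaussian $\eta$, so the cheapest event of probability $1-\varepsilon$ is exactly the upper level set $A=\{\eta>q(\varepsilon)\}$ — this both supplies the construction for sufficiency and gives the sharp lower bound for necessity. The hypothesis $L<\theta$ enters at precisely one point: it is what forces $\{Z>\theta\}\subseteq\{Z\ge L\}$ and makes $\theta-L>0$, thereby upgrading $L\,\mathbb{E}[H(T)\ind_A]\le\tilde x_0$ to a strict inequality. Everything else (absolute continuity of the integral, non-atomicity of the underlying space, and the two degenerate values of $\varepsilon$) is routine bookkeeping.
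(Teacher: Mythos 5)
Your proof is correct and follows essentially the same route as the paper's: the paper also builds the benchmark payoff $X=L\mathbf{1}_{\{H(T)\leqslant e^{-a-q(\varepsilon)}\}}$, asserts that any feasible $Z$ must cost strictly more than $X$ when $L<\theta$, and remarks that $X$ "can be modified" to satisfy Eq.~(\ref{SOLVEable}) under (\ref{Lx01}). Your write-up simply supplies the details the paper leaves implicit — the ``bathtub'' minimality of $A=\{\eta>q(\varepsilon)\}$ justifying the strict cost inequality, and the explicit small-event perturbation $Z=L\ind_{A\setminus B}+K\ind_B$ realizing the modification.
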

\begin{proof}
	Let \begin{equation*}
		X(\omega)=\left\{\begin{aligned}
			&L,&H(T)(\omega)\leqslant e^{-a-q(\varepsilon)},\\
			&0,&\text{otherwise}.
		\end{aligned}\right.
	\end{equation*}
	Then we have $ 	\mathbb{P}(X\geqslant L)\geqslant1-\varepsilon $. As $ H(T) $ satisfies Eq.~(\ref{DeFofH}), if $ L<\theta $, $\forall Z\in\mathcal{M}_+$ satisfying Eq.~(\ref{SOLVEable}), we have $ \tilde{x}_0\geqslant\mathbb{E}[H(T)Z]>\mathbb{E}[H(T)X] =L\int_{q(\varepsilon)}^{+\infty}e^{-a-x}f(x)\rd x$.
	
	Furthermore, if Eq.~(\ref{Lx01}) holds, we can modify $ X $ such that Eq.~(\ref{SOLVEable}) holds.
	
	Based on the above arguments, we see that if $ L\geqslant\theta $, condition $ 	L\int_{q(\varepsilon)}^{+\infty}e^{-a-x}f(x)\rd x\leqslant\tilde{x}_{0} $ is sufficient and necessary.
\end{proof}
\begin{proof}[\textbf{\emph{Proof of Theorem \ref{nu}}}]
	Based on the condition (\ref{condition}) and Lemma \ref{Lx0}, we know that there exists $ Z\in\mathcal{M}_+ $ satisfying Eq.~(\ref{SOLVEable}), which means that
	 Problem (\ref{PVD}) has a non-trivial feasible solution. As such, we get $v(0,\tilde{x}_{0})>0$. In addition, $ v(\nu,\tilde{x}_{0})\leqslant M-\nu m$ can be derived by Lemma \ref{LEMMA1} and Remark \ref{REMARKV}. Then we have $ v(0,\tilde{x}_{0})<\infty$ and $\displaystyle\lim_{\nu\to\infty}v(\nu,\tilde{x}_{0})=-\infty$.
	
	$\forall \nu_1<\nu_2 ,t\in[0,1]$, assume that $ Z_{\nu_2} $ maximizes $ v(\nu_2,\tilde{x}_{0})$ and $ Z_{t} $ maximizes $ v(t\nu_1+(1-t)\nu_2,\tilde{x}_{0})$. Then we have \begin{equation}
		\begin{aligned}\nonumber
			v(\nu_2,\tilde{x}_{0})&= \mathbb{E}\left[U\left((Z_{\nu_2}-{\theta})_+\right)-\nu_2D\left(({\theta}-Z_{\nu_2})_+\right)\right]\\
			&\leqslant \mathbb{E}\left[U\left((Z_{\nu_2}-{\theta})_+\right)-\nu_1D\left(({\theta}-Z_{\nu_2})_+\right)\right]\\
			&\leqslant v(\nu_1,\tilde{x}_{0}),
		\end{aligned}
	\end{equation}
	i.e., $ v(\cdot,\tilde{x}_{0}) $ is a non-increasing function.
	Furthermore, \begin{equation}\nonumber
		\begin{aligned}
			v(t\nu_1+(1-t)\nu_2,\tilde{x}_{0})=&  \mathbb{E}\left[U\left((Z_{t}-{\theta})_+\right)-(t\nu_1+(1-t)\nu_2)D\left(({\theta}-Z_{t})_+\right)\right]\\
			=&t \mathbb{E}\left[U\left((Z_{t}-{\theta})_+\right)-\nu_1D\left(({\theta}-Z_{t})_+\right)\right]\\
			&+(1-t)\mathbb{E}\left[U\left((Z_{t}-{\theta})_+\right)-\nu_2D\left(({\theta}-Z_{t})_+\right)\right]\\
			\leqslant& tv(\nu_1,\tilde{x}_{0})+(1-t)v(\nu_2,\tilde{x}_{0}).
		\end{aligned}
	\end{equation}
	As such, $ v(\cdot,\tilde{x}_{0}) $ is convex in $ \mathbb{R} $.
\end{proof}
\begin{corollary}\label{Corollary1}
	There exists a unique $ \nu^*>0 $ such that $ v(\nu^*,\tilde{x}_{0})=0 $.
\end{corollary}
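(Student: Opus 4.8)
\textbf{Proof proposal for Corollary \ref{Corollary1}.}
The plan is to read off both existence and uniqueness of $\nu^*$ directly from the four structural properties of $v(\cdot,\tilde x_0)$ recorded in Theorem \ref{nu}, so that no new estimate is required; the corollary is then the precise input needed to invoke Theorem \ref{Theorem2}.

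First, for existence I would note that a convex function on all of $\mathbb{R}$ is automatically continuous, so by part (iii) of Theorem \ref{nu} the map $\nu\mapsto v(\nu,\tilde x_0)$ is continuous on $(-\infty,+\infty)$. Part (i) gives $v(0,\tilde x_0)>0$ and part (iv) gives $v(\nu,\tilde x_0)\to-\infty$ as $\nu\to\infty$, so the intermediate value theorem yields some $\nu^*\in(0,\infty)$ with $v(\nu^*,\tilde x_0)=0$. That $\nu^*>0$ is forced by $v(0,\tilde x_0)>0$ together with the monotonicity in part (ii).

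For uniqueness I would argue by contradiction. Suppose $v(\nu_1,\tilde x_0)=v(\nu_2,\tilde x_0)=0$ with $0<\nu_1<\nu_2$. For any $\nu>\nu_2$, writing $\nu_2$ as a convex combination of $\nu_1$ and $\nu$ and using convexity of $v(\cdot,\tilde x_0)$ gives
\[
0=v(\nu_2,\tilde x_0)\leqslant\frac{\nu-\nu_2}{\nu-\nu_1}\,v(\nu_1,\tilde x_0)+\frac{\nu_2-\nu_1}{\nu-\nu_1}\,v(\nu,\tilde x_0)=\frac{\nu_2-\nu_1}{\nu-\nu_1}\,v(\nu,\tilde x_0),
\]
so $v(\nu,\tilde x_0)\geqslant0$ for every $\nu>\nu_2$, contradicting part (iv). Hence the root is unique, and we may relabel it $\nu^*$.

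The argument is entirely routine once Theorem \ref{nu} is established; the only point that deserves a word of care is that monotonicity plus the existence of a root do not by themselves give uniqueness, since a priori $v(\cdot,\tilde x_0)$ could be flat at level $0$ on a whole interval. It is precisely the interplay of convexity with the divergence $v(\nu,\tilde x_0)\to-\infty$ in part (iv) that excludes this, as the chord computation above makes explicit. I therefore expect no genuine obstacle here, the substance having already been spent in proving Theorem \ref{nu} (and, through Lemma \ref{LEMMA1} and Lemma \ref{Lx0}, in setting up the conditions on $\tilde x_0$).
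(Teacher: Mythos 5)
Your proposal is correct and follows essentially the same route as the paper: existence via continuity of the convex function $v(\cdot,\tilde x_0)$ plus the sign change from parts (i) and (iv) of Theorem \ref{nu}, and uniqueness via a chord/convexity argument combined with the divergence to $-\infty$. The only cosmetic difference is that you deduce $v(\nu,\tilde x_0)\geqslant 0$ for all $\nu>\nu_2$ from the same convex-combination identity that the paper applies at a single well-chosen $\nu_3>\nu_2$; the contradiction with part (iv) is identical.
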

\begin{proof}
	Because $ v(\cdot,\tilde{x}_{0}) $ is convex in $ \mathbb{R} $, it is continuous (as $ \mathbb{R}$ is a convex open set). As $0< v(0,\tilde{x}_{0})<\infty$ and  $\displaystyle\lim_{\nu\to\infty}v(\nu,\tilde{x}_{0})=-\infty$, using the
	intermediate value theorem, we know that there exists $ \nu^* $ such that $ v(\nu^*,\tilde{x}_{0})=0 $. Suppose that $ \nu_1<\nu_2 $ are both the roots of $ v(\cdot,\tilde{x}_{0}) =0$. As $\displaystyle\lim_{\nu\to\infty}v(\nu,\tilde{x}_{0})=-\infty $, we can find some $ \nu_3>\nu_2 $ such that $ v(\nu_3,\tilde{x}_{0}) < v(\nu_2,\tilde{x}_{0})=0 $. As such, noticing \begin{equation}\nonumber
		\nu_2=\frac{\nu_2-\nu_1}{\nu_3-\nu_1}\nu_3+\frac{\nu_3-\nu_2}{\nu_3-\nu_1}\nu_1,
	\end{equation} 
we have \begin{equation}\nonumber
		v(\nu_2,\tilde{x}_{0})=0>\frac{\nu_2-\nu_1}{\nu_3-\nu_1} v(\nu_3,\tilde{x}_{0})+\frac{\nu_3-\nu_2}{\nu_3-\nu_1}v(\nu_1,\tilde{x}_{0}),
	\end{equation}which contradicts with the convexity of $ v(\cdot,\tilde{x}_{0}) $ in $ \mathbb{R} $. Thus, the root $ \nu^* $ is unique.
\end{proof}
\begin{remark}
	The constraint (\ref{condition}) is also a necessary condition to ensure that Problem (\ref{PVD}) has a non-trivial feasible solution.
\end{remark}
Next, we only need to solve the optimization problem (\ref{Linear}) after linearization. Problem (\ref{Linear}) searches the optimal random variable under two constraints. We transform the constrained problem into problem with no constraint based on Lagrange dual method.
\vskip 10pt
\subsubsection{\bf Lagrange dual method}
Problem (\ref{Linear}) can be viewed as the expected utility maximization problem under two constraints. Using Lagrange dual method, we first eliminate the VaR constraint, then eliminate the budget constraint. Afterwards, we solve the non-constrained problem based on concavification method.
\vskip 5pt
Define \begin{align*}
	f_{\nu}(Z)&\triangleq U\left((Z-{\theta})_+\right)-\nu D\left(({\theta}-Z)_+\right),\\
	f_{\nu,\lambda}(Z)&\triangleq U\left((Z-{\theta})_+\right)-\nu D\left(({\theta}-Z)_+\right)+\lambda\mathbf{1}_{Z\geqslant L},
\end{align*}
for the optimization problem (\ref{Linear}) after linearization:
\begin{equation}\label{LiVaR}
	\left\{\begin{aligned}
		\max_{Z \in \mathcal{M}_+}\quad& \mathbb{E}\{f_{\nu}(Z)\},\\
		\text{s.t.}\quad &\mathbb{E}[H(T)Z]\leqslant\tilde{x}_{0}\\
		&P(Z\geqslant L)\geqslant1-\varepsilon.
\end{aligned}\right.\end{equation}
Using Lagrange dual method, we obtain the following equivalent optimization problem without VaR constraint:
\begin{equation}\label{Li}
	\left\{\begin{aligned}
		\displaystyle\max_{Z \in \mathcal{M}_+}\quad& \mathbb{E}\{f_{\nu,\lambda}(Z)\},\\
		\text{   s.t.}\quad &\mathbb{E}[H(T)Z]\leqslant\tilde{x}_{0}.
\end{aligned}\right.\end{equation}
Problems (\ref{Li}) and (\ref{LiVaR}) are equivalent. The following theorem shows the relationship between the new optimization problem (\ref{Li}) and Problem (\ref{LiVaR}).
\begin{theorem}\label{VaR}
	For $ \lambda\geqslant0 $, let $ Z_{\nu,\lambda} $ be the solution to Problem (\ref{Li}). If there exists a Lagrange multiplier $ \lambda^*\geqslant0 $ such that \begin{align}
		&P(Z_{\nu,\lambda^*}\geqslant L)\geqslant1-\varepsilon\label{TranCond1},\\
		&\lambda^*\cdot[P(Z_{\nu,\lambda^*}\geqslant L)-(1-\varepsilon)]=0\label{TranCond2}.
	\end{align}Then $Z_{\nu}\triangleq Z_{\nu,\lambda^*} $ is the optimal solution to Problem (\ref{LiVaR}).
\end{theorem}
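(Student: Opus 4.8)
The plan is to run the standard Lagrangian sufficiency argument, exploiting that the multiplier term in $f_{\nu,\lambda}$ is tied to exactly the probability that appears in the VaR constraint. First I would record the defining optimality of $Z_{\nu,\lambda^*}$: since it solves Problem (\ref{Li}) with $\lambda=\lambda^*$, for every $Z\in\mathcal{M}_+$ with $\mathbb{E}[H(T)Z]\leqslant\tilde{x}_{0}$ one has $\mathbb{E}\{f_{\nu,\lambda^*}(Z_{\nu,\lambda^*})\}\geqslant\mathbb{E}\{f_{\nu,\lambda^*}(Z)\}$. The key observation is that $f_{\nu,\lambda^*}(Z)=f_{\nu}(Z)+\lambda^*\mathbf{1}_{Z\geqslant L}$ and $\mathbb{E}[\lambda^*\mathbf{1}_{Z\geqslant L}]=\lambda^*P(Z\geqslant L)$, so this inequality rewrites as
\begin{equation*}
\mathbb{E}\{f_{\nu}(Z_{\nu,\lambda^*})\}+\lambda^*P(Z_{\nu,\lambda^*}\geqslant L)\geqslant\mathbb{E}\{f_{\nu}(Z)\}+\lambda^*P(Z\geqslant L).
\end{equation*}

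Next I would restrict to those $Z$ that are feasible for Problem (\ref{LiVaR}), i.e. those additionally satisfying $P(Z\geqslant L)\geqslant1-\varepsilon$. Since $\lambda^*\geqslant0$, this gives $\lambda^*P(Z\geqslant L)\geqslant\lambda^*(1-\varepsilon)$, so the right-hand side above is at least $\mathbb{E}\{f_{\nu}(Z)\}+\lambda^*(1-\varepsilon)$. On the left-hand side I would invoke the complementary-slackness identity (\ref{TranCond2}), namely $\lambda^*P(Z_{\nu,\lambda^*}\geqslant L)=\lambda^*(1-\varepsilon)$; the two terms $\lambda^*(1-\varepsilon)$ then cancel, yielding $\mathbb{E}\{f_{\nu}(Z_{\nu,\lambda^*})\}\geqslant\mathbb{E}\{f_{\nu}(Z)\}$ for every $Z$ feasible for Problem (\ref{LiVaR}).

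Finally I would verify that $Z_{\nu,\lambda^*}$ is itself feasible for Problem (\ref{LiVaR}): the budget constraint $\mathbb{E}[H(T)Z_{\nu,\lambda^*}]\leqslant\tilde{x}_{0}$ holds because $Z_{\nu,\lambda^*}$ solves Problem (\ref{Li}), and the VaR feasibility $P(Z_{\nu,\lambda^*}\geqslant L)\geqslant1-\varepsilon$ is precisely hypothesis (\ref{TranCond1}). Combining feasibility with the inequality from the previous step shows that $Z_{\nu}\triangleq Z_{\nu,\lambda^*}$ attains the supremum in Problem (\ref{LiVaR}), which is the claim.

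There is no genuinely hard step in this statement — it is the textbook Lagrangian sufficiency theorem — and the only point that needs care is the bookkeeping $\mathbb{E}[\lambda^*\mathbf{1}_{Z\geqslant L}]=\lambda^*P(Z\geqslant L)$, which is exactly what couples the penalised objective of Problem (\ref{Li}) to the probability in the VaR constraint, together with not forgetting to check that $Z_{\nu,\lambda^*}$ is admissible for the constrained problem. I would emphasise that the genuinely delicate issue — the \emph{existence} of such a $\lambda^*$, and its interplay with the parameter $\nu^*$ — is not addressed here; Theorem \ref{VaR} only asserts sufficiency once $\lambda^*$ is furnished, and the existence question is treated separately afterwards.
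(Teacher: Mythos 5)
Your proposal is correct and follows essentially the same route as the paper's proof: both are the standard Lagrangian sufficiency argument, using the optimality of $Z_{\nu,\lambda^*}$ for Problem (\ref{Li}), the identity $\mathbb{E}[\lambda^*\mathbf{1}_{Z\geqslant L}]=\lambda^*P(Z\geqslant L)$, complementary slackness (\ref{TranCond2}) together with $P(Z\geqslant L)\geqslant 1-\varepsilon$ for feasible $Z$, and feasibility of $Z_{\nu,\lambda^*}$ via (\ref{TranCond1}). The only difference is cosmetic bookkeeping in how the $\lambda^*(1-\varepsilon)$ terms are cancelled.
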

\begin{proof}
	First, using the condition (\ref{TranCond1}), we know that $ Z_{\nu,\lambda^*} $ is a feasible solution to Problem (\ref{LiVaR}). As such, $ \mathbb{E}\{f_{\nu}(Z_{\nu,\lambda^*})\} $ does not exceed the optimal value of the optimization problem (\ref{LiVaR}).
	Moreover, for any feasible solution $ Z $ of Problem (\ref{LiVaR}), $ Z $ is also a feasible solution of Problem (\ref{Li}), as such, $ \mathbb{E}\{f_{\nu,\lambda^*}(Z)\}\leqslant\mathbb{E}\{f_{\nu,\lambda^*}(Z_{\nu,\lambda^*})\} $, i.e., \begin{equation}\nonumber
		\mathbb{E}\{f_{\nu}(Z)\}+\lambda^*P(Z\geqslant L)\leqslant\mathbb{E}\{f_{\nu}(Z_{\nu,\lambda^*})\}+\lambda^*P(Z_{\nu,\lambda^*}\geqslant L).
	\end{equation}
	Then
	\begin{equation}\begin{aligned}\nonumber
			\mathbb{E}\{f_{\nu}(Z)\}\leqslant&\mathbb{E}\{f_{\nu}(Z_{\nu,\lambda^*})\}+\lambda^*\left[P(Z_{\nu,\lambda^*}\geqslant L)-P(Z\geqslant L)\right]\\
			=&\mathbb{E}\{f_{\nu}(Z_{\nu,\lambda^*})\}+\lambda^*\left[1-\varepsilon-P(Z\geqslant L)\right]\\
			\leqslant&\mathbb{E}\{f_{\nu}(Z_{\nu,\lambda^*})\}.
		\end{aligned}
	\end{equation}
	As such, the optimal value of Problem (\ref{LiVaR}) does not exceed $ \mathbb{E}\{f_{\nu}(Z_{\nu,\lambda^*})\} $. Thus, $ Z_{\nu,\lambda^*} $ is the optimal solution of Problem (\ref{LiVaR}).
	
\end{proof}
\vskip 5pt
\begin{remark}
	The conditions to ensure the existence of $ \lambda^*\geqslant0 $ satisfying Theorem \ref{VaR} will be discussed later.
\end{remark}

Problem (\ref{Li}) still has the budget constraint  $ \mathbb{E}[H(T)Z]\leqslant\tilde{x}_{0} $. Similar with the transformation of the VaR constraint, we transform Problem (\ref{Li}) into an equivalent one without budget constraint based on Lagrange dual method.
For each $ \beta>0 $, consider the following optimization problem: \begin{equation}\label{LiIn}
	\max_{Z \in \mathcal{M}_+}\ \mathbb{E}\{f_{\nu,\lambda}(Z)-\beta H(T)Z\}.\end{equation}
Problem (\ref{LiIn}) can be viewed as a static optimization problem. The related non-randomized version of Problem (\ref{LiIn}) is: for each $ y>0 $
\begin{equation}\label{L}
	\max_{x\in\mathbb{R}_+}\ \{f_{\nu,\lambda}(x)-yx\}.\end{equation}
We have the following theorem to depict the relationships among the optimal solutions of Problems (\ref{Li}), (\ref{LiIn}) and (\ref{L}):
\begin{theorem}\label{DeTr}
	For all $ \nu\geqslant0$ and $\lambda\geqslant0 $, we have the following properties:
	\begin{enumerate}
		\item [(a)]For all $ \nu\geqslant0,\lambda\geqslant0 $ and $ y>0 $, let the Borel measurable function $ x_{\nu,\lambda}^*(y) $ be the optimal solution to Problem (\ref{L}). Then $ Z_{\nu,\lambda,\beta}\triangleq x_{\nu,\lambda}^*(\beta H(T)) $ is the optimal solution to Problem (\ref{LiIn}).
		\item [(b)]If there exists $ \beta^*>0 $ such that $ Z_{\nu,\lambda,\beta^*} $ is the optimal solution of Problem (\ref{LiIn}) with $ \mathbb{E}[H(T)Z_{\nu,\lambda,\beta^*}]=\tilde{x}_{0}  $, then $ Z_{\nu,\lambda}\triangleq Z_{\nu,\lambda,\beta^*} $ is the optimal  solution to Problem (\ref{Li}). Moreover, the $ \beta^* $ is called the optimal multiplier.
	\end{enumerate}
\end{theorem}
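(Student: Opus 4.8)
The plan is to establish part (a) by a pointwise (``state-by-state'') maximization argument and part (b) by a standard Lagrangian weak-duality argument; I do not expect a genuine obstacle inside this theorem, since the hard analytic content is relegated to the subsequent subsections (see the last paragraph). For part (a), I would fix $\beta>0$ and write $Y_\beta\triangleq\beta H(T)$, a strictly positive, $\mathcal{F}_T$-measurable (in fact lognormal) random variable. Since $x_{\nu,\lambda}^*$ is Borel measurable, $Z_{\nu,\lambda,\beta}=x_{\nu,\lambda}^*(Y_\beta)$ is $\mathcal{F}_T$-measurable and non-negative, hence lies in $\mathcal{M}_+$. By the very definition of $x_{\nu,\lambda}^*(y)$ as a maximizer of $x\mapsto f_{\nu,\lambda}(x)-yx$ over $\mathbb{R}_+$, applied with $y=Y_\beta(\omega)$, one has for \emph{every} $Z\in\mathcal{M}_+$ and $\mathbb{P}$-a.s.
\begin{align*}
f_{\nu,\lambda}(Z)-\beta H(T)Z\ &\leqslant\ f_{\nu,\lambda}\big(x_{\nu,\lambda}^*(Y_\beta)\big)-\beta H(T)\,x_{\nu,\lambda}^*(Y_\beta)\\
&=\ f_{\nu,\lambda}(Z_{\nu,\lambda,\beta})-\beta H(T)\,Z_{\nu,\lambda,\beta}.
\end{align*}
Taking expectations then yields $\mathbb{E}\{f_{\nu,\lambda}(Z)-\beta H(T)Z\}\leqslant\mathbb{E}\{f_{\nu,\lambda}(Z_{\nu,\lambda,\beta})-\beta H(T)Z_{\nu,\lambda,\beta}\}$, i.e. $Z_{\nu,\lambda,\beta}$ solves Problem~(\ref{LiIn}). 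The only point needing care is that these expectations be well-defined (not $\infty-\infty$): since $D\geqslant0$ and $D(0)=0$ one has $f_{\nu,\lambda}(x)-yx\leqslant U((x-\theta)_+)+\lambda-yx$, and the concavity/Inada conditions (H2)--(H3) together with the asymptotic-elasticity condition (H4) and the finiteness of all (positive and negative) moments of the lognormal $H(T)$ give $\mathbb{E}\big[\sup_{x\geqslant0}\big(f_{\nu,\lambda}(x)-\beta H(T)x\big)\big]<\infty$; this is the same mechanism that produced $M<\infty$ in Lemma~\ref{LEMMA1}, and it legitimizes the displayed inequality.

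For part (b), let $\beta^*>0$ be as in the hypothesis, so that $Z_{\nu,\lambda,\beta^*}$ solves Problem~(\ref{LiIn}) and $\mathbb{E}[H(T)Z_{\nu,\lambda,\beta^*}]=\tilde{x}_0$; in particular $Z_{\nu,\lambda,\beta^*}$ is feasible for Problem~(\ref{Li}). Let $Z\in\mathcal{M}_+$ be any feasible point of~(\ref{Li}), i.e. $\mathbb{E}[H(T)Z]\leqslant\tilde{x}_0$ (we may assume $\mathbb{E}\{f_{\nu,\lambda}(Z)\}>-\infty$, the opposite case being trivial). Then
\begin{align*}
\mathbb{E}\{f_{\nu,\lambda}(Z)\}\ &=\ \mathbb{E}\{f_{\nu,\lambda}(Z)-\beta^* H(T)Z\}+\beta^*\,\mathbb{E}[H(T)Z]\\
&\leqslant\ \mathbb{E}\{f_{\nu,\lambda}(Z_{\nu,\lambda,\beta^*})-\beta^* H(T)Z_{\nu,\lambda,\beta^*}\}+\beta^*\,\tilde{x}_0\\
&=\ \mathbb{E}\{f_{\nu,\lambda}(Z_{\nu,\lambda,\beta^*})\}-\beta^*\,\mathbb{E}[H(T)Z_{\nu,\lambda,\beta^*}]+\beta^*\,\tilde{x}_0\ =\ \mathbb{E}\{f_{\nu,\lambda}(Z_{\nu,\lambda,\beta^*})\},
\end{align*}
where the inequality uses the optimality of $Z_{\nu,\lambda,\beta^*}$ for~(\ref{LiIn}) together with $\beta^*>0$ and $\mathbb{E}[H(T)Z]\leqslant\tilde{x}_0$, and the final equality uses the complementary-slackness hypothesis $\mathbb{E}[H(T)Z_{\nu,\lambda,\beta^*}]=\tilde{x}_0$. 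Since $Z_{\nu,\lambda,\beta^*}$ is itself feasible for~(\ref{Li}), it is therefore optimal, and $Z_{\nu,\lambda}\triangleq Z_{\nu,\lambda,\beta^*}$ is the claimed solution.

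As for where the real difficulty lies: within Theorem~\ref{DeTr} itself there is essentially none --- it is the two short arguments above, conditional on the granted measurable selector $x_{\nu,\lambda}^*$ and on the hypothesized $\beta^*$. The substantive work, deferred to the following subsections, is (i) to actually \emph{solve} the scalar problem~(\ref{L}) and exhibit the Borel-measurable maximizer $x_{\nu,\lambda}^*$: because $f_{\nu,\lambda}$ is in general neither concave (the term $-\nu D$ destroys concavity on $[0,\theta)$ when $D$ is convex) nor even continuous (it jumps upward by $\lambda$ at $L$), this forces the concavification construction and splits into the many cases advertised in the abstract; and (ii) to prove the existence of the budget multiplier $\beta^*$ with $\mathbb{E}[H(T)Z_{\nu,\lambda,\beta^*}]=\tilde{x}_0$, by analyzing the monotonicity, continuity and limiting behaviour of $\beta\mapsto\mathbb{E}[H(T)Z_{\nu,\lambda,\beta}]$ --- delicate because this map may be discontinuous at the concavification thresholds, and because $\beta^*$ has to be chosen compatibly with the VaR multiplier $\lambda^*$ of Theorem~\ref{VaR}, which is precisely the ``two Lagrange multipliers'' issue highlighted in the introduction.
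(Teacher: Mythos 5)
Your proposal is correct: part (a) is the standard pointwise (state-by-state) maximization argument and part (b) is standard Lagrangian weak duality with complementary slackness, and both steps, including your care about measurability and the well-definedness of the expectations via (H4), are sound. The paper itself does not spell out a proof of this theorem but simply defers to the cited reference, and the argument you give is precisely the canonical one that reference supplies, so there is nothing to contrast; your closing remarks correctly locate the genuine difficulty in constructing $x_{\nu,\lambda}^*$ for Problem~(\ref{L}) and in establishing the existence and compatibility of the multipliers $\beta^*$ and $\lambda^*$, which the paper handles in the subsequent subsections.
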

\begin{proof}
	See \cite{SaundersLin} for the proof.
\end{proof}
\vskip 5pt
Next, the following two steps are essential:
\begin{enumerate}
	\item Prove the existence of the optimal multiplier $ \beta^* $.  The existence of $ \beta^*>0 $ ensures that $ Z_{\nu,\lambda,\beta^*}$ is the optimal solution of Problem (\ref{LiIn}).
%
%
	\item Disentangle the static optimization problem (\ref{L}).
\end{enumerate}

\vskip 10pt
\subsubsection{\bf Existence of the optimal multipliers $ \beta^* $ and $ \lambda^*$ in Theorems~\ref{VaR} and \ref{DeTr}}
We first prove the existence of $ \beta^* $ when the parameter $ \nu$ and $\lambda $ are given. Similar to the proof of existence of the optimal parameter $ \nu^* $ in Subsection \ref{EoOP}, based on Theorem \ref{DeTr}, define
\begin{equation}\nonumber
	R_{\nu}(\lambda,\beta)\triangleq\mathbb{E}[H(T)Z_{\nu,\lambda,\beta}]\equiv\mathbb{E}[H(T)x_{\nu,\lambda}^*(\beta H(T))].
\end{equation}
We will verify that the function $ R_{\nu}(\lambda,\beta) $ has the following properties.
\begin{proposition}\label{pro-R}
	Assume that {(H1)}-{(H4)} hold. We have the following properties:
	\begin{enumerate}\label{properties}
		\item[(a)] $0< R_{\nu}(\lambda,\beta)<\infty$.
		\item[(b)]$ R_{\nu}(\lambda,\cdot) $ is continuous in $ (-\infty,+\infty)$.
		\item[(c)] $\displaystyle \lim_{\beta\to\infty}R_{\nu}(\lambda,\beta)=0$, $\displaystyle\lim_{\beta\searrow0}R_{\nu}(\lambda,\beta)=\infty $.
	\end{enumerate}
\end{proposition}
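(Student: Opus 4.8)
The plan is to reduce all three statements to the pointwise structure of the maximiser $x^{*}_{\nu,\lambda}(\cdot)$ of the static problem~(\ref{L}) and then transport that structure through the expectation $R_{\nu}(\lambda,\beta)=\mathbb{E}[H(T)\,x^{*}_{\nu,\lambda}(\beta H(T))]$, using that by~(\ref{DeFofH}) the random variable $H(T)$ is log-normal, hence strictly positive almost surely, with a continuous law and finite moments of every real order. First I would record the facts about $x^{*}_{\nu,\lambda}$ that follow from concavifying $f_{\nu,\lambda}$: (i) $y\mapsto x^{*}_{\nu,\lambda}(y)$ is non-increasing on $(0,\infty)$; (ii) it has at most countably many discontinuities, namely the abscissae at which the concave envelope of $f_{\nu,\lambda}$ switches affine piece, so that for each fixed $\beta>0$ the quantity $x^{*}_{\nu,\lambda}(\beta H(T))$ is almost surely well defined because $\beta H(T)$ has no atoms; (iii) $\lim_{y\to\infty}x^{*}_{\nu,\lambda}(y)=0$, since for large $y$ the linear cost $-yx$ eventually dominates both the gain $\sup_{h>0}\{U(h)-yh\}+\lambda$ available on $\{x\geqslant\theta\}$ and the bounded contribution of the jump $\lambda\mathbf{1}_{\{x\geqslant L\}}$, so the global maximiser collapses to $0$; (iv) $\lim_{y\searrow0}x^{*}_{\nu,\lambda}(y)=+\infty$, which follows from the Inada condition~{(H2)}, because near $x=\theta^{+}$ the marginal reward $U'(x-\theta)$ is unbounded and hence for small $y$ the maximiser is forced far above $\theta$.

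Granting (i)--(iv), properties (b) and (c) are soft. For (c): for fixed $\omega$ the map $\beta\mapsto x^{*}_{\nu,\lambda}(\beta H(T)(\omega))$ is non-increasing, so as $\beta\searrow0$ it increases to $+\infty$ a.s.\ and the monotone convergence theorem gives $R_{\nu}(\lambda,\beta)\to\mathbb{E}[H(T)\cdot(+\infty)]=\infty$; as $\beta\to\infty$ it decreases to $0$ a.s.\ while being dominated, for $\beta\geqslant\beta_{0}>0$, by the $H(T)$-integrable envelope $x^{*}_{\nu,\lambda}(\beta_{0}H(T))$, so dominated convergence gives $R_{\nu}(\lambda,\beta)\to0$. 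For (b): on any interval $[\beta-\delta,\beta+\delta]$ with $\beta-\delta>0$ the family $\{x^{*}_{\nu,\lambda}(\beta'H(T))\}_{\beta'}$ is dominated by $x^{*}_{\nu,\lambda}((\beta-\delta)H(T))$, which is integrable against $H(T)$ by part (a); and since the discontinuity set of $x^{*}_{\nu,\lambda}$ is Lebesgue-null, it is also null for the law of $\beta'H(T)$, so $x^{*}_{\nu,\lambda}(\beta'H(T))\to x^{*}_{\nu,\lambda}(\beta H(T))$ a.s.\ as $\beta'\to\beta$, and dominated convergence yields continuity in $\beta$. Positivity in (a) is clear: on the positive-probability event $\{\beta H(T)<y_{0}\}$ with $y_{0}$ chosen small, $x^{*}_{\nu,\lambda}(\beta H(T))>0$, so $R_{\nu}(\lambda,\beta)>0$.

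The substantive point is the finiteness $R_{\nu}(\lambda,\beta)<\infty$ in (a). Here I would invoke condition~{(H4)}: the Arrow--Pratt / asymptotic-elasticity requirement produces a bound of the form $x^{*}_{\nu,\lambda}(y)\leqslant C\,(1+y^{-p})$ for $y$ near $0$, with $p$ governed by the elasticity being strictly below one, while for $y$ bounded away from $0$ the maximiser stays bounded; this is exactly the mechanism behind $M<\infty$ in Lemma~\ref{LEMMA1} (see the reference cited there). Since $H(T)$ is log-normal, $\mathbb{E}[H(T)\cdot H(T)^{-p}]=\mathbb{E}[H(T)^{1-p}]<\infty$, whence $R_{\nu}(\lambda,\beta)=\mathbb{E}[H(T)\,x^{*}_{\nu,\lambda}(\beta H(T))]<\infty$. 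The delicate step, and the one I expect to be the main obstacle, is that $f_{\nu,\lambda}$ is genuinely non-concave---the jump $\lambda\mathbf{1}_{\{x\geqslant L\}}$ and, when $D$ is convex, the penalty term both destroy concavity---so the growth bound must be read off the concave envelope of $f_{\nu,\lambda}$ rather than directly off $U'$; I would argue that concavification only inserts additional flat or affine stretches and therefore cannot worsen the near-zero growth rate inherited from $U$ under~{(H4)}, so the integrability estimate survives. Once this is in place, (a)--(c) follow as above.
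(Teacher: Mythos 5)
Your proposal is correct and follows essentially the same route as the paper: monotonicity of $x^*_{\nu,\lambda}$, an integrable envelope coming from $I_1$ and the asymptotic-elasticity condition (H4) (the paper bounds $x^*_{\nu,\lambda}(y)\leqslant I_1(y)+\theta$ and cites Corollary 5.1 of the Jin--Zhou reference for $\mathbb{E}[H(T)I_1(\beta H(T))]<\infty$, which is exactly the mechanism you invoke), monotone/dominated convergence for the limits, and almost-everywhere continuity of $x^*_{\nu,\lambda}$ for continuity in $\beta$. The ``delicate step'' you flag about the concave envelope is resolved in the paper simply by reading off the explicit case-by-case form of $x^*_{\nu,\lambda}$, each branch of which is dominated by $I_1(y)+\theta$ or by a constant.
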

\begin{proof}
	As the solution $ x_{\nu,\lambda}^*(y) $ is non-increasing in $y$, the following relation holds\begin{equation}\nonumber
		R_{\nu}(\lambda,\beta)=\mathbb{E}[H(T)x_{\nu,\lambda}^*(\beta H(T))]
		\leqslant\mathbb{E}\left\{H(T)\left[I_1(\beta H(T))+\theta\right]\right\}
		<\infty,
	\end{equation}where $ I_1(x) $ is the inverse function of $ U'(x) $ and will be discussed in detail in the following paragraph.
The proof of $ \mathbb{E}[H(T)I_1(\beta H(T))] <\infty$ relies on Assumption {(H4)}: the reward function $ U $ has the asymptotic behavior of the ``Arrow-Pratt relative risk aversion" and satisfies the asymptotic elasticity condition, see Corollary 5.1 in \cite{JinZhouConv} for detailed  proof.

Because $ x_{\nu,\lambda}^*(y) $ is non-negative and satisfies $\displaystyle\lim_{y\searrow0} x_{\nu,\lambda}^*(y)=\infty $ and $\displaystyle\lim_{y\to\infty} x_{\nu,\lambda}^*(y)=0 $, thus, using the monotonic convergence theorem, we have $\displaystyle \lim_{\beta\to\infty}R_{\nu}(\lambda,\beta)=0$, $\displaystyle\lim_{\beta\searrow0}R_{\nu}(\lambda,\beta)=\infty $.
	
	The proof of the continuity of $ R_{\nu}(\lambda,\beta) $ w.r.t $ \beta $ is based on Dominated convergence theorem: $ \forall\beta>0 $ and $ 0<\beta_n\to\beta $, for any given $ \epsilon>0 $, there exists $ N $ large enough, such that for any $ n>N $, the following equation holds:\begin{equation}\nonumber
		0\leqslant H(T)x_{\nu,\lambda}^*(\beta H(T))\leqslant H(T)\left[I_1((\beta-\epsilon) H(T))+\theta\right].
	\end{equation}
	Moreover, because the upper bound $ H(T)\left[I_1((\beta-\epsilon) H(T))+\theta\right] $ is integrable, using Dominated  convergence theorem, we have \begin{equation*}
		\lim_{\beta_n\to\beta}R_{\nu}(\lambda,\beta_n)=\lim_{\beta_n\to\beta}\mathbb{E}[H(T)x_{\nu,\lambda}^*(\beta_n H(T))]=\mathbb{E}\left[\lim_{\beta_n\to\beta}H(T)x_{\nu,\lambda}^*(\beta_n H(T))\right].\end{equation*}
	As the solution $ x_{\nu,\lambda}^*(\cdot) $ is continuous almost everywhere, we have\begin{equation*}
		\mathbb{E}\left[\lim_{\beta_n\to\beta}H(T)x_{\nu,\lambda}^*(\beta_n H(T))\right]=
		\mathbb{E}[H(T)x_{\nu,\lambda}^*(\beta H(T))]=R_{\nu}(\lambda,\beta).
	\end{equation*}
	As such, we obtain $\displaystyle \lim_{\beta_n\to\beta}R_{\nu}(\lambda,\beta_n)=R_{\nu}(\lambda,\beta) $, thus the continuity of $R_{\nu}(\lambda,\cdot)$ follows.
%
\end{proof}

Based on the monotonicity of $ x_{\nu,\lambda}^*(y) $ with respect to the variable $ y $, we can get the monotonicity of $ R_{\nu}(\lambda,\beta) $ with respect to $ \beta $, which plays an important role in the numerical calculation of $\beta^*$
%

Using the properties of $ R_{\nu}(\lambda,\beta)$ in Proposition \ref{pro-R} and the intermediate value theorem, we have the following corollary: \begin{corollary}\label{Corollary2}
	Assume that {(H1)}-{(H4)} hold,
	given $ \nu$ and $\lambda $, there exists a unique optimal multiplier $ \beta^* $ such that $ R_{\nu}(\lambda,\beta^*)=\mathbb{E}[H(T)Z_{\nu,\lambda,\beta^*}]=\tilde{x}_{0} $.
\end{corollary}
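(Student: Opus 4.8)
The statement is designed to follow from Proposition \ref{pro-R} and the monotonicity of $\beta\mapsto R_{\nu}(\lambda,\beta)$ observed just above it, in exactly the same pattern as the proof of Corollary \ref{Corollary1}: existence comes from the intermediate value theorem applied to a continuous function with the right limiting behaviour, and uniqueness comes from (strict) monotonicity. So fix $\nu\geqslant 0$ and $\lambda\geqslant 0$. By Proposition \ref{pro-R}(b), $\beta\mapsto R_{\nu}(\lambda,\beta)$ is continuous on $(0,\infty)$; by Proposition \ref{pro-R}(a) it is $(0,\infty)$-valued; and by Proposition \ref{pro-R}(c), $\lim_{\beta\searrow 0}R_{\nu}(\lambda,\beta)=\infty$ while $\lim_{\beta\to\infty}R_{\nu}(\lambda,\beta)=0$. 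Since $\tilde{x}_{0}\in(0,\infty)$, the intermediate value theorem yields some $\beta^{*}>0$ with $R_{\nu}(\lambda,\beta^{*})=\mathbb{E}[H(T)Z_{\nu,\lambda,\beta^{*}}]=\tilde{x}_{0}$. By Theorem \ref{DeTr}(b), such a $\beta^{*}$ is an optimal multiplier and $Z_{\nu,\lambda}:=Z_{\nu,\lambda,\beta^{*}}$ solves Problem (\ref{Li}).

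\textbf{Uniqueness.} Suppose $0<\beta_{1}<\beta_{2}$ both satisfy $R_{\nu}(\lambda,\beta_{i})=\tilde{x}_{0}$. Because $y\mapsto x^{*}_{\nu,\lambda}(y)$ is non-increasing and $H(T)>0$ a.s., we have $x^{*}_{\nu,\lambda}(\beta_{1}H(T))\geqslant x^{*}_{\nu,\lambda}(\beta_{2}H(T))$ a.s., hence $R_{\nu}(\lambda,\beta_{1})\geqslant R_{\nu}(\lambda,\beta_{2})$, with equality only if $x^{*}_{\nu,\lambda}(\beta_{1}H(T))=x^{*}_{\nu,\lambda}(\beta_{2}H(T))$ a.s. This last identity is impossible: $H(T)$ has a lognormal law supported on all of $(0,\infty)$, so $\beta_{1}H(T)<\beta_{2}H(T)$ a.s.; since $R_{\nu}(\lambda,\beta_{2})=\tilde{x}_{0}>0$, the variable $x^{*}_{\nu,\lambda}(\beta_{2}H(T))$ is strictly positive on a set of positive probability, and on the positivity set the optimal solution $x^{*}_{\nu,\lambda}$ of Problem (\ref{L}) is strictly decreasing in $y$ (on the branch $x>\theta$ it equals $I_{1}(\cdot)+\theta$ with $I_{1}$ strictly decreasing, and the concavification construction otherwise forbids a flat positive value over an interval). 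Therefore $x^{*}_{\nu,\lambda}(\beta_{1}H(T))>x^{*}_{\nu,\lambda}(\beta_{2}H(T))$ on that set, giving $R_{\nu}(\lambda,\beta_{1})>R_{\nu}(\lambda,\beta_{2})$, a contradiction. Hence $\beta^{*}$ is unique.

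\textbf{Main obstacle.} Continuity, positivity, finiteness and the two limits are exactly the content of Proposition \ref{pro-R}, so the only genuinely new point is the \emph{strictness} needed for uniqueness: Proposition \ref{pro-R} and the preceding remark supply only continuity and a non-increasing behaviour of $R_{\nu}(\lambda,\cdot)$. Upgrading to strict monotonicity is where one must use that $H(T)$ possesses a density with full support on $(0,\infty)$ together with strict monotonicity of the pointwise maximiser $x^{*}_{\nu,\lambda}$ on its positivity set — which itself traces back to the strict concavity of $U$ in Assumption (H3) (so $U'$, equivalently $I_{1}$, is strictly monotone) and to the shape analysis of $f_{\nu,\lambda}$ underlying the concavification in Problem (\ref{L}). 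Everything else is a direct invocation of Proposition \ref{pro-R} and Theorem \ref{DeTr}.
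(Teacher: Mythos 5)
Your existence argument is exactly the paper's: the paper derives Corollary \ref{Corollary2} directly from Proposition \ref{pro-R} (positivity, finiteness, continuity in $\beta$, and the limits $\infty$ and $0$ as $\beta\searrow 0$ and $\beta\to\infty$) together with the intermediate value theorem, and your first paragraph reproduces that. The paper is in fact terser than you on uniqueness --- it only records the (non-strict) monotonicity of $R_{\nu}(\lambda,\cdot)$ inherited from the monotonicity of $x^*_{\nu,\lambda}$ --- so your attempt to supply a genuine strict-monotonicity argument is a reasonable addition.

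However, your uniqueness argument contains a concretely false step: the claim that ``the concavification construction otherwise forbids a flat positive value over an interval.'' It does not. The whole point of the VaR term $\lambda\mathbf{1}_{Z\geqslant L}$ is that the concave envelope of $f_{\nu,\lambda}$ acquires linear segments whose corner sits at $x=L$, and consequently in many of the paper's cases the maximiser $x^*_{\nu,\lambda}(y)$ is \emph{constant equal to $L>0$ on a nondegenerate interval of $y$} --- e.g.\ Case III, where $x^*_{\nu,\lambda}(y)=L$ for all $y\in[\nu D'(\theta-L),\,k_{\nu,\lambda})$, and likewise Cases V, VII, IX, XI, XII, XIII and several concave-penalty cases. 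So ``strictly positive'' does not imply ``in a region of strict decrease,'' and your contradiction as written does not go through. The conclusion is nonetheless salvageable: by the Inada condition (H2), in every case the first branch of $x^*_{\nu,\lambda}$ is $I_1(y)+\theta$ for $y$ in a neighbourhood $(0,\delta)$ of the origin, where $I_1$ is strictly decreasing by (H3); since $H(T)$ is lognormal with full support, the event $\{\beta_2 H(T)<\delta\}$ (on which also $\beta_1 H(T)<\delta$) has positive probability, and on that event $x^*_{\nu,\lambda}(\beta_1H(T))>x^*_{\nu,\lambda}(\beta_2H(T))$ strictly. That yields $R_\nu(\lambda,\beta_1)>R_\nu(\lambda,\beta_2)$ and hence uniqueness. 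You should replace the appeal to strict monotonicity ``on the positivity set'' by this localized argument near $y=0$.
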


Next, we discuss the existence of $\lambda^* $  when the parameters $ \nu$ and  $ \beta $ are given.
Similar with \cite{OIWS}, the existence of $\lambda^* $ such that $ \lambda^*(\mathbb{P}[x_{\nu,\lambda^*}^*(\beta H(T))\geqslant L]-(1-\varepsilon)) =0$ can be proved if $ \beta $ is fixed. However,  Corollary \ref{Corollary2} shows that the $ \beta^* $ satisfying $ R_{\nu}(\lambda,\beta^*)=\tilde{x}_{0} $ depends on $ \lambda $. The existences of $\lambda^* $ and $ \beta^* $ may not hold at the same time in some cases.


Based on Corollary \ref{Corollary2}, the multiplier $ \beta^* $ can be regard as a function of $ \nu,\lambda,\tilde{x}_{0} $, as such, we define $ \beta^*\triangleq B_{\nu}(\lambda,\tilde{x}_{0}) $.
Similarly, based on Theorem \ref{DeTr}, we also define
\begin{equation}\nonumber
	S_{\nu}(\lambda,\beta)\triangleq\mathbb{P}[Z_{\nu,\lambda,\beta}\geqslant L]\equiv\mathbb{P}[x_{\nu,\lambda}^*(\beta H(T))\geqslant L].
\end{equation}
Similar to the proof of the continuity of $ R_{\nu}(\lambda,\beta) $ with respect to $ \beta $, we can get the continuity of $ B_{\nu}(\lambda,\tilde{x}_{0}) $ with respect to $ \lambda $ and  the continuity of $ S_{\nu}(\lambda,\beta) $ with respect to $ \lambda $ and $ \beta $. Thus, the continuity of $ S_{\nu}(\lambda,B_{\nu}(\lambda,\tilde{x}_{0})) $ with respect to $ \lambda $ also follows.
\vskip 5pt
For the existence of the optimal multipliers $ \beta^* $ and $ \lambda^*$, we list the different cases in the following proposition.
\begin{proposition}\label{prop-S}
	The existence of the optimal multipliers $ \beta^* $ and $ \lambda^*$ depends on the function $S_{\nu}(\lambda,B_{\nu}(\lambda,\tilde{x}_{0}))$.
	\begin{enumerate}
		\item If $ S_{\nu}(0,B_{\nu}(0,\tilde{x}_{0})) \geqslant1-\varepsilon$, then $ \beta^*=B_{\nu}(0,\tilde{x}_{0}) $ and $\lambda^*=0  $.
		\item If $ S_{\nu}(0,B_{\nu}(0,\tilde{x}_{0})) <1-\varepsilon$, define  \begin{equation*}
			p=\sup_{\lambda\geqslant0}S_{\nu}(\lambda,B_{\nu}(\lambda,\tilde{x}_{0})).
		\end{equation*}
	    \begin{enumerate}
	    	\item If $ p>1-\varepsilon $, then there exists $ \lambda^*>0 $ such that $ S_{\nu}(\lambda^*,B_{\nu}(\lambda^*,\tilde{x}_{0}))=1-\varepsilon $. And the multipliers  $\lambda^*  $ and $ \beta^*=B_{\nu}(\lambda^*,\tilde{x}_{0}) $ are what we need.
	    	\item If  $ p<1-\varepsilon $ or $ 1-\varepsilon  $ can not be achieved by any $ \lambda>0 $, then the optimal  multipliers $ \beta^* $ and $ \lambda^*$ do not exist.
	    \end{enumerate}
	\end{enumerate}
\end{proposition}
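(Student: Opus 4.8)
The plan is to reduce the joint existence question for the two multipliers $\lambda^{*}$ and $\beta^{*}$ to a one-dimensional root problem and then invoke the intermediate value theorem. By Theorem \ref{DeTr}(b) combined with Corollary \ref{Corollary2}, once the VaR multiplier $\lambda\geqslant 0$ is fixed the budget constraint $\mathbb{E}[H(T)Z_{\nu,\lambda,\beta}]=\tilde{x}_{0}$ has the unique solution $\beta^{*}=B_{\nu}(\lambda,\tilde{x}_{0})$, and $Z_{\nu,\lambda}\triangleq Z_{\nu,\lambda,B_{\nu}(\lambda,\tilde{x}_{0})}$ solves Problem (\ref{Li}). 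Consequently $P(Z_{\nu,\lambda}\geqslant L)=S_{\nu}(\lambda,B_{\nu}(\lambda,\tilde{x}_{0}))$, so, writing $g(\lambda)\triangleq S_{\nu}(\lambda,B_{\nu}(\lambda,\tilde{x}_{0}))$, conditions (\ref{TranCond1}) and (\ref{TranCond2}) of Theorem \ref{VaR} become $g(\lambda^{*})\geqslant 1-\varepsilon$ and $\lambda^{*}[g(\lambda^{*})-(1-\varepsilon)]=0$. Thus an admissible pair $(\lambda^{*},\beta^{*})$ exists if and only if either $g(0)\geqslant 1-\varepsilon$ (take $\lambda^{*}=0$), or there is some $\lambda^{*}>0$ with $g(\lambda^{*})=1-\varepsilon$ (take $\beta^{*}=B_{\nu}(\lambda^{*},\tilde{x}_{0})$). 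The only regularity I will use is that $g$ is continuous on $[0,\infty)$, which follows from the continuity of $B_{\nu}(\cdot,\tilde{x}_{0})$ and of $S_{\nu}$ established just above the proposition.

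With this reformulation, part (1) is immediate: if $g(0)=S_{\nu}(0,B_{\nu}(0,\tilde{x}_{0}))\geqslant 1-\varepsilon$, choose $\lambda^{*}=0$, for which (\ref{TranCond1}) is exactly the hypothesis and (\ref{TranCond2}) holds trivially, and then $\beta^{*}=B_{\nu}(0,\tilde{x}_{0})$ by Corollary \ref{Corollary2}.

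For part (2) assume $g(0)<1-\varepsilon$, so $\lambda^{*}=0$ is ruled out and any admissible $\lambda^{*}$ must be strictly positive, hence must satisfy $g(\lambda^{*})=1-\varepsilon$ by complementary slackness. Put $p=\sup_{\lambda\geqslant 0}g(\lambda)$. In case (2a), $p>1-\varepsilon$: choose $\lambda_{1}>0$ with $g(\lambda_{1})>1-\varepsilon$; since $g$ is continuous and $g(0)<1-\varepsilon<g(\lambda_{1})$, the intermediate value theorem produces $\lambda^{*}\in(0,\lambda_{1})$ with $g(\lambda^{*})=1-\varepsilon$, and $\beta^{*}=B_{\nu}(\lambda^{*},\tilde{x}_{0})$ then meets all the requirements of Theorems \ref{VaR} and \ref{DeTr}. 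In case (2b), $1-\varepsilon$ is not attained by $g$ on $(0,\infty)$ — this covers $p<1-\varepsilon$ (where $g(\lambda)<1-\varepsilon$ for every $\lambda$) as well as the boundary situation where $p=1-\varepsilon$ is a non-attained supremum — so no $\lambda^{*}>0$ can meet both (\ref{TranCond1}) and (\ref{TranCond2}), while $\lambda^{*}=0$ fails (\ref{TranCond1}); hence no admissible pair exists. The three possibilities $g(0)\geqslant 1-\varepsilon$, $g(0)<1-\varepsilon$ with $p>1-\varepsilon$, and $g(0)<1-\varepsilon$ with $1-\varepsilon$ not attained on $(0,\infty)$ are mutually exclusive and exhaustive.

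The substantive input is the continuity of $g=S_{\nu}(\cdot,B_{\nu}(\cdot,\tilde{x}_{0}))$, which the paper has already assembled from the dominated-convergence arguments for $R_{\nu}$, the uniqueness of $B_{\nu}(\lambda,\tilde{x}_{0})$ in Corollary \ref{Corollary2}, and joint continuity of $S_{\nu}$; granting that, the argument above is essentially a single application of the intermediate value theorem plus bookkeeping. I expect the most delicate point to be the non-existence direction of (2b): one must verify that when $g(0)<1-\varepsilon$ the complementary slackness relation (\ref{TranCond2}) genuinely forces $g(\lambda^{*})=1-\varepsilon$ for every positive candidate $\lambda^{*}$ — a $\lambda^{*}>0$ with $g(\lambda^{*})>1-\varepsilon$ would violate (\ref{TranCond2}) — so that if the level $1-\varepsilon$ is never attained on $(0,\infty)$, then, together with $g(0)<1-\varepsilon$, no pair $(\lambda^{*},\beta^{*})$ can satisfy the hypotheses of Theorems \ref{VaR} and \ref{DeTr}.
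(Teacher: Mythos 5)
Your proof is correct and follows exactly the route the paper intends: the paper states Proposition \ref{prop-S} without a written proof, but the continuity of $B_{\nu}(\cdot,\tilde{x}_{0})$ and of $S_{\nu}(\lambda,B_{\nu}(\lambda,\tilde{x}_{0}))$ assembled immediately beforehand are precisely the ingredients for your reduction to a one-dimensional root problem and the intermediate value theorem. Your explicit handling of the complementary slackness condition (\ref{TranCond2}) forcing $g(\lambda^{*})=1-\varepsilon$ for $\lambda^{*}>0$, and of the non-attained boundary case $p=1-\varepsilon$, fills in details the paper leaves implicit.
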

Based on Proposition \ref{prop-S}, we see that  the  budget and VaR constraints may not be satisfied at the same time, as such, there may be no optimal solution to the primal problem.
%

%
\vskip 10pt
\subsection{Optimal solution of Problem (\ref{L})}\label{ss:o}
Problem (\ref{L}) is a static optimization problem. However, $f_{\nu,\lambda}(\cdot)$ is not always globally concave while depends on the relationship between the functions $U$ and $D$. As such, Problem (\ref{L}) can not be directly solved by the first-order condition. We employ the concavification method as in \cite{He2018PROFIT} to solve the non-concave optimization problem (\ref{L}). The concave envelop $f^c(\cdot)$ of a give function $f(x), x\in G$, which is defined by
\[f^c(\cdot)\triangleq \inf\{g(\cdot)|g(\cdot):\ G\rightarrow \mathbb{R} \text{ is concave and } g(x)\ge f(x),\ \forall x\in G\}.\]
Then, based on \cite{He2018PROFIT} and \cite{POWPR}, Problem (\ref{L}) is equivalent to the following concavified problem
\begin{equation}\label{Lc}
	\max_{x\in\mathbb{R}_+}\quad \left\{f^c_{\nu,\lambda}(x)-yx\right\}.\end{equation}
In order to solve Problem (\ref{L}), we only need to derive the concave envelop of $f_{\nu,\lambda}(\cdot)$ first. Then we solve Problem  (\ref{Lc}) as the optimal solutions of Problems  (\ref{L}) and  (\ref{Lc})  are the same.

\vskip 5pt
The concave envelop of $f_{\nu,\lambda}(\cdot)$ relies on the relationship between $U$ and $D$. The property of $D$ also affects the concave envelop. We present the results for the optimal solution of the non-random optimization problem (\ref{L}) for convex and concave $D$ separately. When $D$ is convex, there are fourteen cases. When $D$ is concave, there are six cases.
\begin{theorem}\label{DET1}
	The solution of the non-random optimization problem (\ref{L}) is given in the following subsections \ref{ConvexPenaltyFunction} and \ref{ConcavePenaltyFunction}.
\end{theorem}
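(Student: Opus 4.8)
The plan is to solve the static problem (\ref{L}) through its concavification (\ref{Lc}): first determine the concave envelope $f^c_{\nu,\lambda}$ of $f_{\nu,\lambda}$, then solve the concave problem (\ref{Lc}) by the first-order condition, and finally invoke the concavification principle of \cite{He2018PROFIT} to transfer the solution back to (\ref{L}).

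First I would record the shape of $f_{\nu,\lambda}$. On $(\theta,\infty)$ it equals $U(x-\theta)+\lambda\mathbf{1}_{x\geqslant L}$, which under (H2)--(H3) is strictly concave and strictly increasing apart from a possible upward jump of height $\lambda$ at $x=L$; on $[0,\theta)$ it equals $-\nu D(\theta-x)+\lambda\mathbf{1}_{x\geqslant L}$, whose smooth part has derivative $\nu D'(\theta-x)>0$ and second derivative $-\nu D''(\theta-x)$, hence is increasing and is concave when $D$ is convex but convex when $D$ is concave. Thus the only sources of non-concavity are: the jump at $x=L$ (when $0<L$ and $\lambda>0$); a kink at $x=\theta$, where the left derivative is $\nu D'(0^+)$ and the right derivative is $U'(0^+)=\infty$ by (H2); and, only in the concave-$D$ case, an entire convex arc on $[0,\theta)$. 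When $D$ is convex, $f_{\nu,\lambda}$ is therefore piece-wise concave, and when $D$ is concave it is essentially an S-shaped function carrying an extra jump --- which is why Problem (\ref{L}) subsumes the S-shaped-utility problem under VaR constraint.

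Next I would build $f^c_{\nu,\lambda}$ piece by piece, following \cite{He2018PROFIT} and \cite{POWPR}: the envelope agrees with $f_{\nu,\lambda}$ except on at most two ``bridging'' intervals, on each of which it is the affine segment that is tangent to the graph of $f_{\nu,\lambda}$ at both endpoints (or that passes through the jump point $(L,f_{\nu,\lambda}(L^-))$, resp.\ $(L,f_{\nu,\lambda}(L))$). Locating these bridges amounts to solving slope-matching and value-matching equations involving the inverses of $\nu D'$ and of $U'$; the abscissae they produce, ordered against $\theta$ and $L$, cut $\mathbb{R}_+$ into the intervals on which $(f^c_{\nu,\lambda})'$ has an explicit expression, and this order is governed by the magnitudes of $\nu$, $\lambda$ and the curvature of $D$. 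Once $f^c_{\nu,\lambda}$ is known, the maximiser $x^*_{\nu,\lambda}(y)$ of (\ref{Lc}) follows from $(f^c_{\nu,\lambda})'(x)=y$: on the pieces where $f^c_{\nu,\lambda}=f_{\nu,\lambda}$ this inverts $\nu D'$ or $U'$, while across a bridge $x^*_{\nu,\lambda}(\cdot)$ jumps between the two endpoints of the bridge as $y$ crosses their common slope. Since the optimiser of a concavified problem always lies in $\{f^c_{\nu,\lambda}=f_{\nu,\lambda}\}$, it also solves (\ref{L}), so Theorem \ref{DET1} reduces to writing down $f^c_{\nu,\lambda}$ and $x^*_{\nu,\lambda}(y)$ in every configuration, which is the content of Subsections \ref{ConvexPenaltyFunction} and \ref{ConcavePenaltyFunction}.

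The main obstacle is the case bookkeeping. The envelope --- hence the formula for $x^*_{\nu,\lambda}(y)$ --- depends on whether $L<\theta$, $L=\theta$ or $L>\theta$; on whether $D$ is strictly convex or concave; on whether $\lambda$ is small enough that one affine bridge absorbs the jump at $L$, large enough that a single bridge runs from $[0,\theta)$ across $\theta$ into the concave region, or intermediate so that two disjoint bridges appear; and on the degenerate configurations in which a tangency abscissa coincides with $\theta$ or with $L$. Enumerating the feasible combinations yields the fourteen cases for convex $D$ in Subsection \ref{ConvexPenaltyFunction} and the six cases for concave $D$ in Subsection \ref{ConcavePenaltyFunction}; in each one the computation of $f^c_{\nu,\lambda}$ and of $x^*_{\nu,\lambda}(y)$ is routine, so I would structure the proof as one lemma per configuration giving the envelope, followed by an immediate corollary giving the optimal $x^*_{\nu,\lambda}(y)$.
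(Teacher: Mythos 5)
Your proposal follows essentially the same route as the paper: identify the non-concavities of $f_{\nu,\lambda}$ (the upward jump at $L$, the kink at $\theta$ forced by the Inada condition, and the convex arc on $[0,\theta)$ when $D$ is concave), construct the concave envelope via tangent affine bridges, invert $(f^c_{\nu,\lambda})'$ to read off $x^*_{\nu,\lambda}(y)$, and enumerate the configurations by the order relations among $L$, $\theta$ and the tangency abscissae --- which is exactly how the paper organizes the fourteen convex-$D$ cases and six concave-$D$ cases (the paper itself omits the detailed envelope computations, deferring to the analogous arguments in the cited VaR-constrained S-shaped-utility work). No discrepancy to report.
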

\vskip 5pt
\subsubsection{\bf Convex penalty function}\label{ConvexPenaltyFunction}
In the case of convex penalty function, for the linearized problem, we see that the utility function is piece-wise concave, which shows that the manager is always risk aversion while has different attitudes towards gains and losses.
%
Let $I_1(x)$ be the inverse function of function $U'(x)$, and $I_2(x)$ be the inverse function of function $D'(x)$. Denote \begin{equation}\label{knulambda}
	k_{\nu,\lambda} \triangleq \frac{f_{\nu}(L)+\lambda+\nu D(\theta)}{L}.
\end{equation}
As shown in Fig.~\ref{fig:xfig},  there exist unique $ z_1<\theta,z_2>\theta $ such that \begin{equation}\nonumber
	\nu D'(\theta-z_1)=U'(z_2-\theta)=\frac{U(z_2-\theta)+\nu D(\theta-z_1)}{(z_2-\theta)+(\theta-z_1)}.
\end{equation}
\begin{figure}[htbp] 
	\centering
	\includegraphics[width=0.7\linewidth]{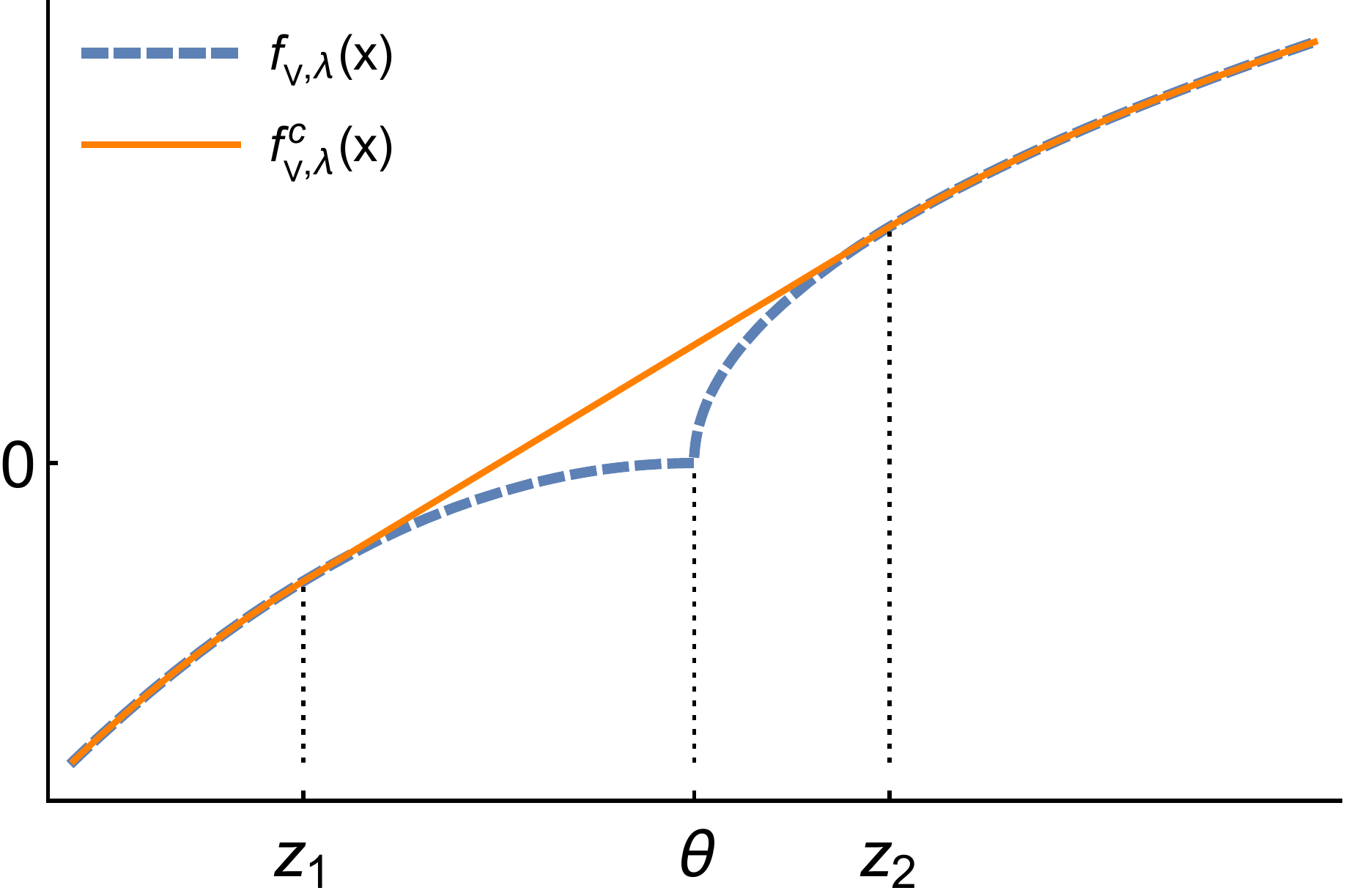}
	\caption{Schematic diagram of function $ f_{\nu}(x) $ and its concave envelope function.}
	\label{fig:xfig}
\end{figure}
\vskip 5pt
First we show the results when the VaR constraint is not binding, i.e., $\lambda=0$.  When  $ \lambda=0 $, the function $ f_{\nu}(x) $ is also equal to the function $ f_{\nu,\lambda}(x) $. If $ z_1\geqslant0 $, it is easy to find the solution to the non-random problem (\ref{L}) in this case  is\begin{equation}\nonumber x_{\nu,\lambda}^*(y)=\left\{
	\begin{aligned}
		& I_1(y)+\theta,&0<&y<U'(z_2-\theta),\\
		&\theta-I_2(\frac{y}{\nu}),&\nu D'(\theta-z_1)\leqslant &y<\nu D'(\theta),\\
		& 0,&\nu D'(\theta)\leqslant &y,
	\end{aligned}\right.
\end{equation}
which we call {\bf Case I} ($\lambda=0$ and $ z_1\geqslant0 $).

When $\lambda=0$ and $ z_1<0 $, there exists a unique $ z'>\theta $  satisfying \begin{equation}\nonumber
	U'(z'-\theta)=\frac{U(z'-\theta)+\nu D(\theta)}{z'-0}.
\end{equation}Then the optimal solution to the non-random problem (\ref{L}) in this case is
\begin{equation}\nonumber x_{\nu,\lambda}^*(y)=\left\{
	\begin{aligned}
		& I_1(y)+\theta,&0<&y<U'(z'-\theta),\\
		& 0,&U'(z'-\theta)\leqslant &y,
	\end{aligned}\right.
\end{equation}
which we call {\bf Case II }($\lambda=0$ and $ z_1<0 $).
\vskip 5pt
When $ \lambda>0 $, the optimal solution relies on the relationships among $L, \theta, z_1$ and $z_2$. We consider four specific cases: $L<z_1$, $z_1\leqslant L<\theta$, $\theta= L$, $\theta< L<z_2 $. We depict the concave envelope function of $f_{\nu,\lambda}(x) $ correspondingly and use it to solve the non-random problems (\ref{L}). We only show the results and corresponding schematic diagrams. The detailed calculations and proofs are similar with \cite{OIWS}. For simplicity, we omit the proofs here.

\paragraph{When $ L<z_1 $}We consider the relationship between $ k_{\nu,\lambda} $ and $ \nu D'(\theta) $ as follows:\begin{enumerate}
	\item If $ k_{\nu,\lambda}> \nu D'(\theta) $, then the schematic diagram is shown in Fig.~\ref{fig:xfig1}.
	\begin{figure}[htbp] 
		\centering
		\includegraphics[width=0.6\linewidth]{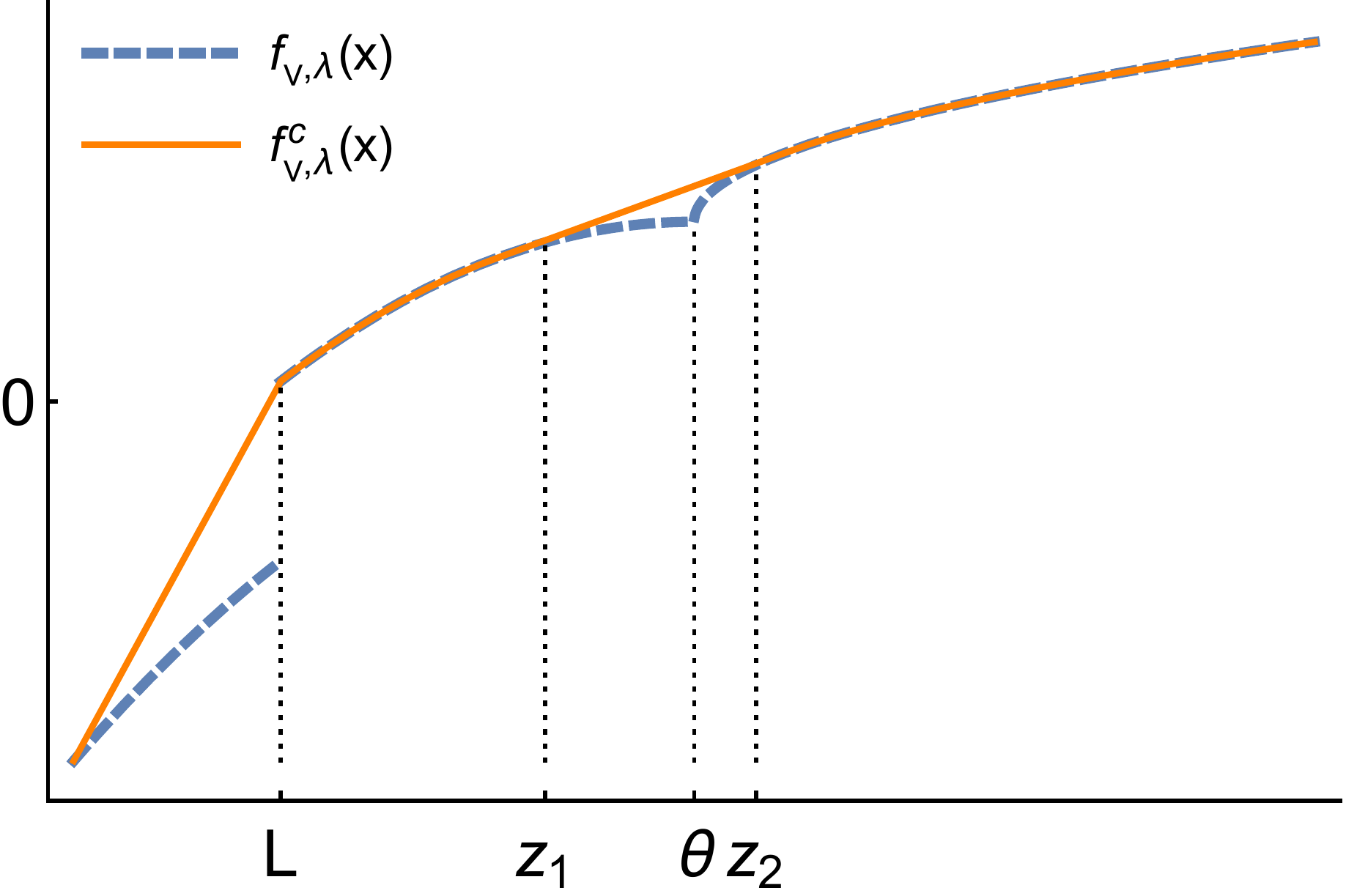}
		\caption{Schematic diagram of function $ f_{\nu,\lambda}(x) $ and its concave envelope function.}
		\label{fig:xfig1}
	\end{figure}

	And the optimal solution is
	\begin{equation}\nonumber x_{\nu,\lambda}^*(y)=\left\{
		\begin{aligned}
			& I_1(y)+\theta,&0<&y<U'(z_2-\theta),\\
			&\theta-I_2(\frac{y}{\nu}),&\nu D'(\theta-z_1)\leqslant &y<\nu D'(\theta-L),\\
			& L,&\nu D'(\theta-L)\leqslant &y< k_{\nu,\lambda},\\
			& 0,& k_{\nu,\lambda}\leqslant &y.
		\end{aligned}\right.
	\end{equation}
   We call this {\bf Case III }($ L<z_1 $ and $ k_{\nu,\lambda}> \nu D'(\theta) $).
	\item If $ k_{\nu,\lambda} \leqslant \nu D'(\theta) $, there exists a unique $  z_3\in[0,L) $ satisfying
	\begin{equation}\label{z3}
		\nu D'(\theta-z_3)=\frac{f_{\nu}(L)+\lambda+\nu D(\theta-z_3)}{L-z_3}.
	\end{equation}
	The schematic diagram is as shown in Fig.~\ref{fig:xfig2}.
	\begin{figure}[htbp] 
		\centering
		\includegraphics[width=0.6\linewidth]{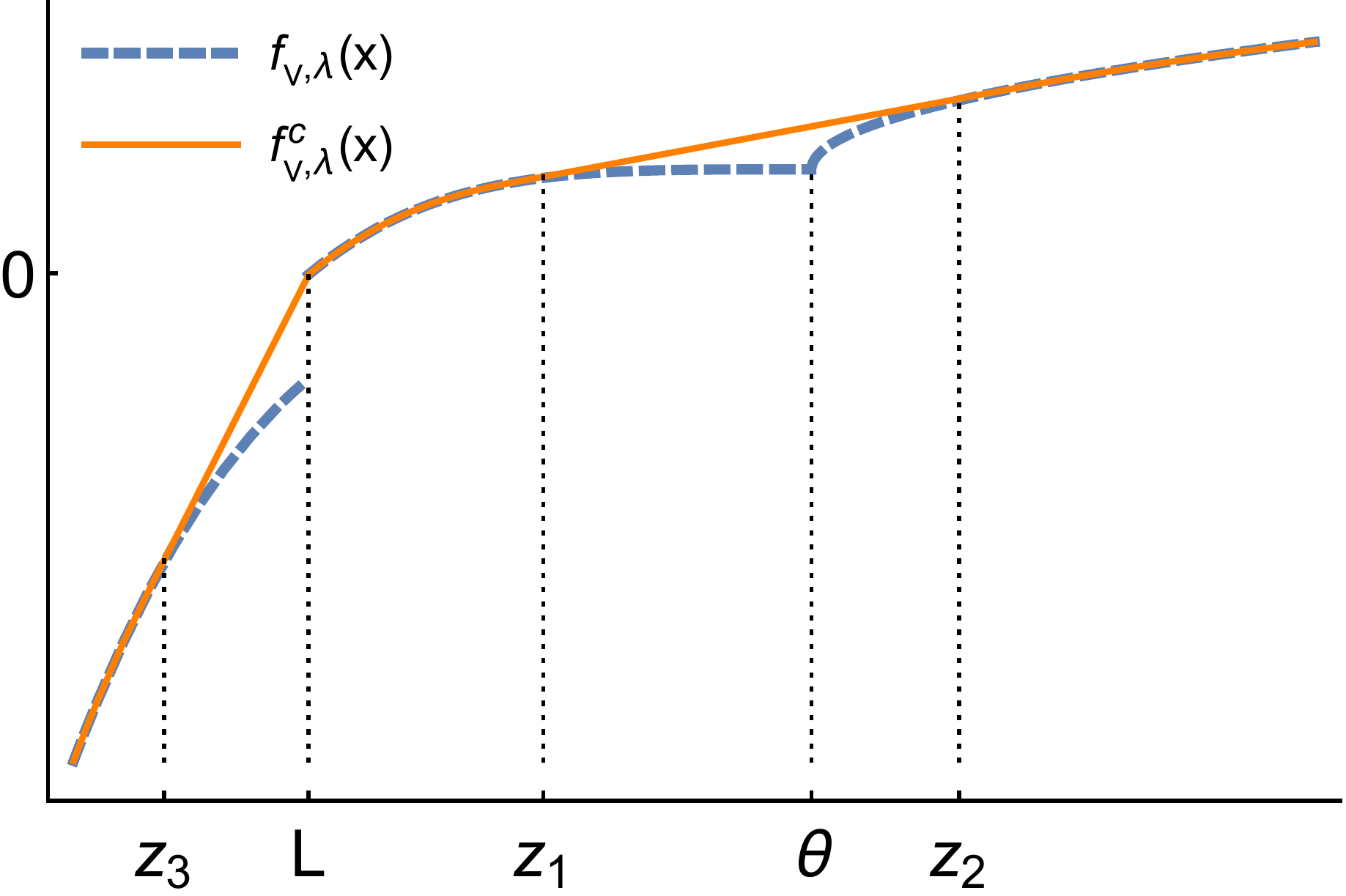}
		\caption{Schematic diagram of function $ f_{\nu,\lambda}(x) $ and its concave envelope function.}
		\label{fig:xfig2}
	\end{figure}

	And the optimal solution is\begin{equation}\nonumber x_{\nu,\lambda}^*(y)=\left\{
		\begin{aligned}
			& I_1(y)+\theta,&0<&y<U'(z_2-\theta),\\
			&\theta-I_2(\frac{y}{\nu}),&\nu D'(\theta-z_1)\leqslant &y<\nu D'(\theta-L),\\
			& L,&\nu D'(\theta-L)\leqslant &y<\nu D'(\theta-z_3),\\
			&\theta-I_2(\frac{y}{\nu}),&\nu D'(\theta-z_3)\leqslant &y<\nu D'(\theta),\\
			& 0,& \nu D'(\theta)\leqslant &y.
		\end{aligned}\right.
	\end{equation}
	We call this {\bf Case IV }($ L<z_1 $ and $ k_{\nu,\lambda}\leqslant \nu D'(\theta) $).
\end{enumerate}
\vskip 5pt
When $ z_1\leqslant L<\theta $, there exists a unique $ z_4>\theta $ such that
\begin{equation}\nonumber
	U'(z_4-\theta)=\frac{U(z_4-\theta)+\nu D(\theta-L)}{z_4-L}.
\end{equation}We also need to consider the relationship between $ k_{\nu,\lambda} $ and $ \nu D'(\theta) $:
\begin{enumerate}
	\item 	 If $ k_{\nu,\lambda} > \nu D'(\theta) $,  the relationship between $ k_{\nu,\lambda} $ and $ U'(z_4-\theta)$ is required.\begin{enumerate}
		\item If $ k_{\nu,\lambda} > U'(z_4-\theta) $, the schematic diagram is as shown in Fig.~\ref{fig:xfig3}.
		\begin{figure}[htbp] 
			\centering
			\includegraphics[width=0.6\linewidth]{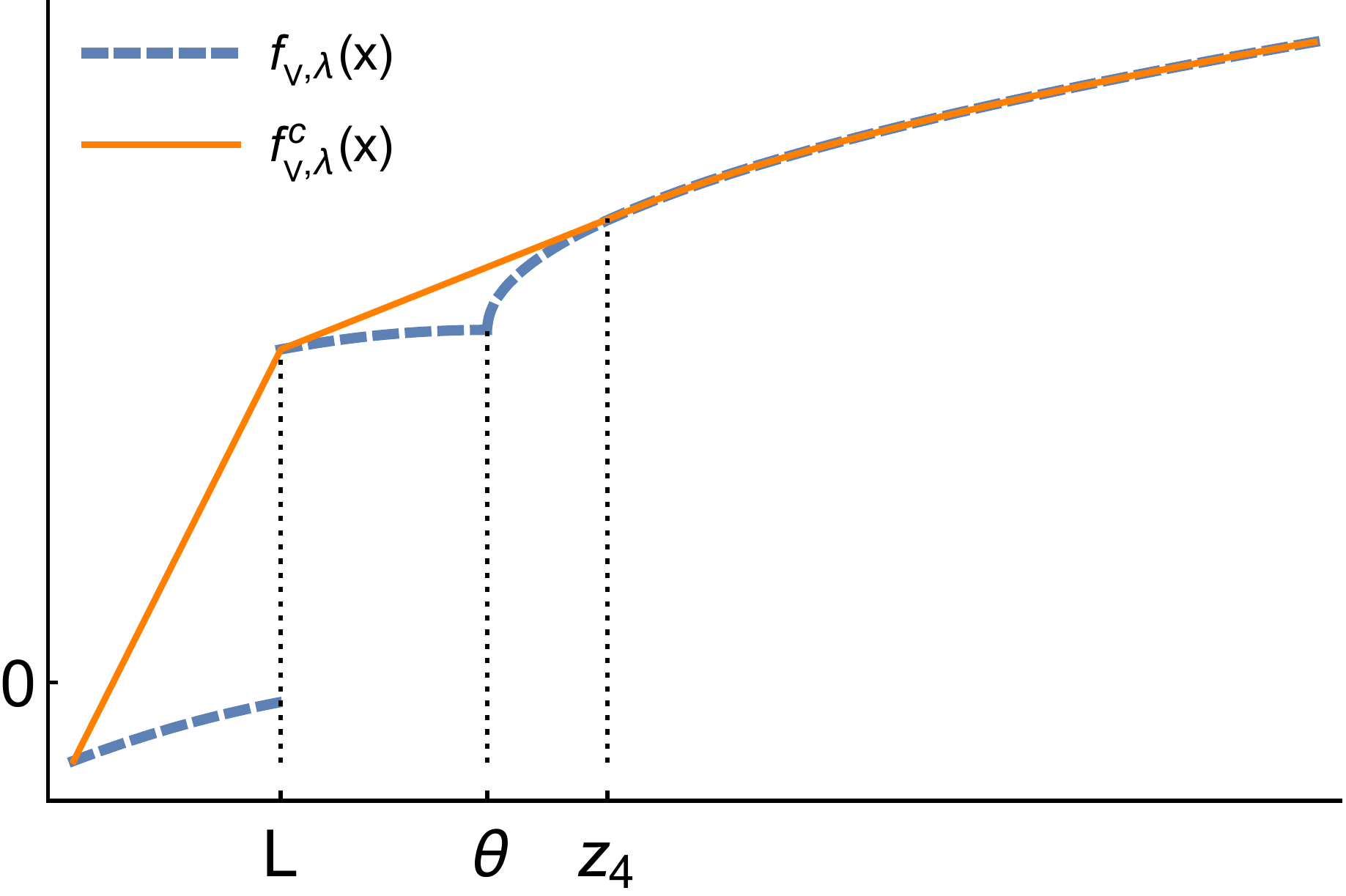}
			\caption{Schematic diagram of function $ f_{\nu,\lambda}(x) $ and its concave envelope function.}
			\label{fig:xfig3}
		\end{figure}
	
		The optimal solution is \begin{equation}\nonumber x_{\nu,\lambda}^*(y)=\left\{
			\begin{aligned}
				& I_1(y)+\theta,&0<&y<U'(z_4-\theta)\\
				& L,&U'(z_4-\theta)\leqslant &y<k_{\nu,\lambda},\\
				& 0,& k_{\nu,\lambda}\leqslant &y.
			\end{aligned}\right.
		\end{equation}
		We call this {\bf Case V }($ z_1\leqslant L<\theta $, $ k_{\nu,\lambda} > \nu D'(\theta) $ and  $ k_{\nu,\lambda} > U'(z_4-\theta) $).
		\item If $ k_{\nu,\lambda} \leqslant U'(z_4-\theta) $, there exists a unique $ z_5>\theta $ satisfying \begin{equation}\nonumber
			U'(z_5-\theta)=\frac{U(z_5-\theta)+\lambda+\nu D(\theta)}{z_5-0}.
		\end{equation}
		The schematic diagram is as shown in Fig.~\ref{fig:xfig4}.
		\begin{figure}[htbp] 
			\centering
			\includegraphics[width=0.6\linewidth]{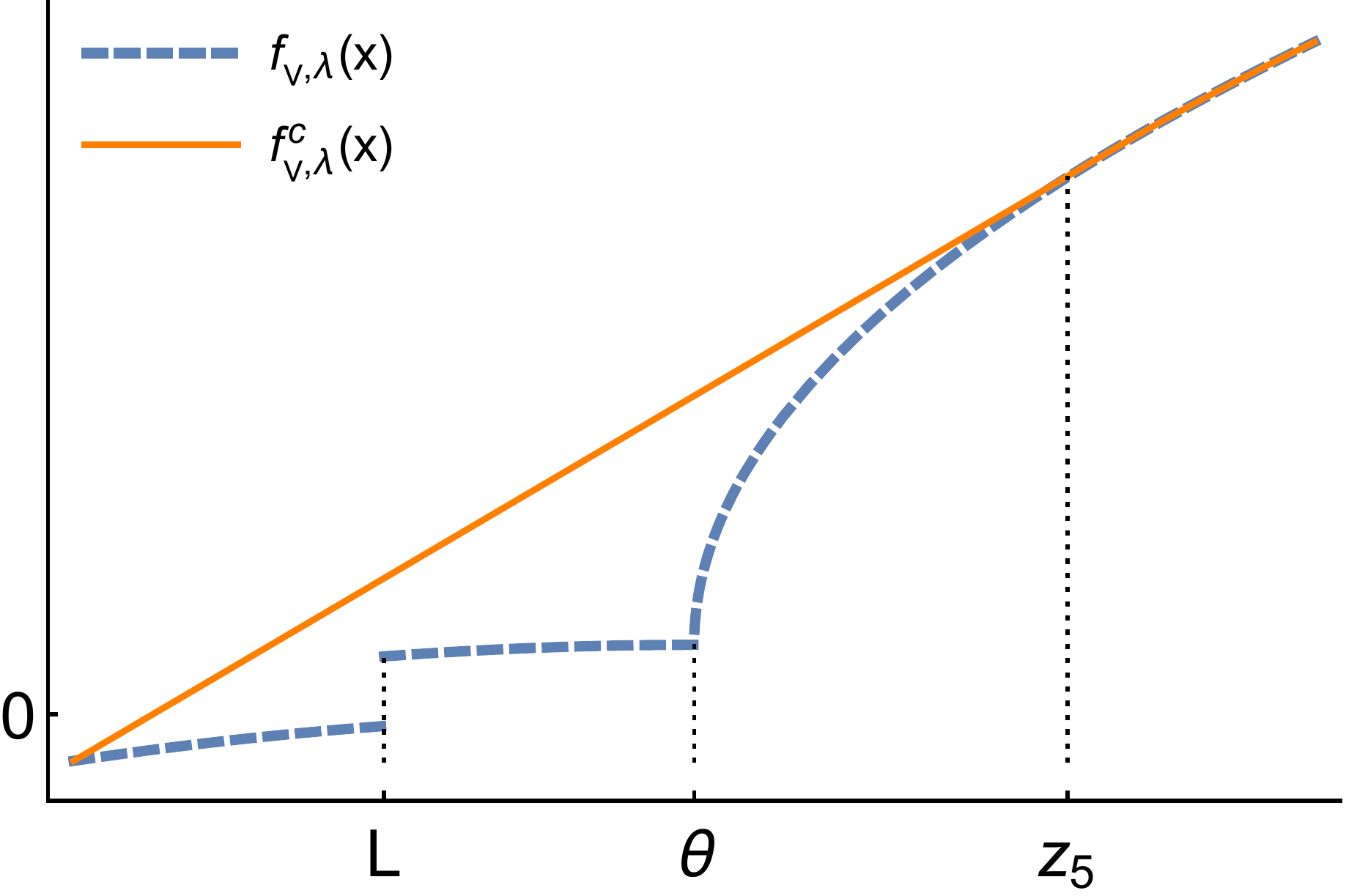}
			\caption{Schematic diagram of function $ f_{\nu,\lambda}(x) $ and its concave envelope function.}
			\label{fig:xfig4}
		\end{figure}
	
		The optimal solution is \begin{equation}\label{key4}x_{\nu,\lambda}^*(y)=\left\{
			\begin{aligned}
				& I_1(y)+\theta,&0<&y<U'(z_5-\theta),\\
				& 0,&U'(z_5-\theta)\leqslant &y.
			\end{aligned}\right.
		\end{equation}We call this {\bf Case VI} ($ z_1\leqslant L<\theta $, $ k_{\nu,\lambda} > \nu D'(\theta) $ and  $ k_{\nu,\lambda} \leqslant  U'(z_4-\theta) $).
	\end{enumerate}
	\item If $ k_{\nu,\lambda} \leqslant \nu D'(\theta) $, then there exists a unique $  z_3\in[0,L) $ satisfying Eq.~(\ref{z3}). We consider the relationship between $ \nu D'(\theta-z_3) $ and $  U'(z_4-\theta)  $. \begin{enumerate}
		\item If $ \nu D'(\theta-z_3) >  U'(z_4-\theta)  $, the schematic diagram is as shown in Fig.~\ref{fig:xfig5}.
		\begin{figure}[htbp] 
			\centering
			\includegraphics[width=0.6\linewidth]{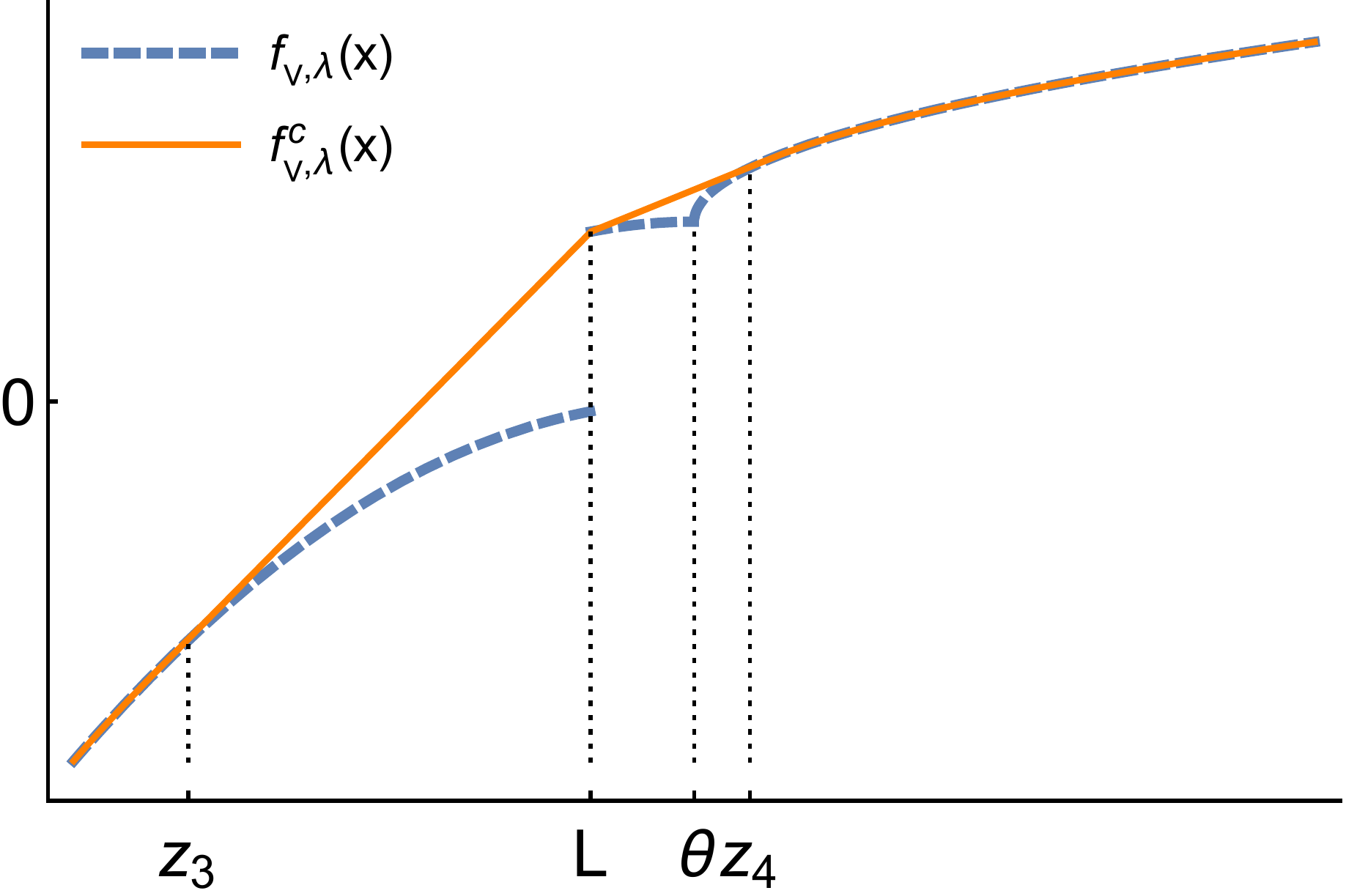}
			\caption{Schematic diagram of function $ f_{\nu,\lambda}(x) $ and its concave envelope function.}
			\label{fig:xfig5}
		\end{figure}
	
		The optimal solution in this case is \begin{equation}\nonumber x_{\nu,\lambda}^*(y)=\left\{
			\begin{aligned}
				& I_1(y)+\theta,&0<&y<U'(z_4-\theta),\\
				& L,&U'(z_4-\theta)\leqslant &y<\nu D'(\theta-z_3),\\
				&\theta-I_2(\frac{y}{\nu}),&\nu D'(\theta-z_3)\leqslant &y<\nu D'(\theta),\\
				& 0,&\nu D'(\theta)\leqslant &y.
			\end{aligned}\right.
		\end{equation}
\vskip 5pt
		We call this {\bf Case VII} ($ z_1\leqslant L<\theta $, $ k_{\nu,\lambda} \leqslant \nu D'(\theta) $ and  $ \nu D'(\theta-z_3) >  U'(z_4-\theta)  $).
		\item If $ \nu D'(\theta-z_3) \leqslant U'(z_4-\theta)  $, then there exist unique $ z_6$ and $z_7 $: $z_6<L<\theta<z_7 $ satisfying \begin{equation}\nonumber
			\nu D'(\theta-z_6)=U'(z_7-\theta)=\frac{U(z_7-\theta)+\lambda+\nu D(\theta-z_6)}{(z_7-\theta)+(\theta-z_6)}.
		\end{equation}
		If $ z_6\leqslant0 $, it belongs to the case of Fig.~\ref{fig:xfig4} and the optimal solution is given by Eq.~(\ref{key4}). If $0< z_6<L $, the schematic diagram is as shown in Fig.~\ref{fig:xfig6}.
		\begin{figure}[htbp] 
			\centering
			\includegraphics[width=0.6\linewidth]{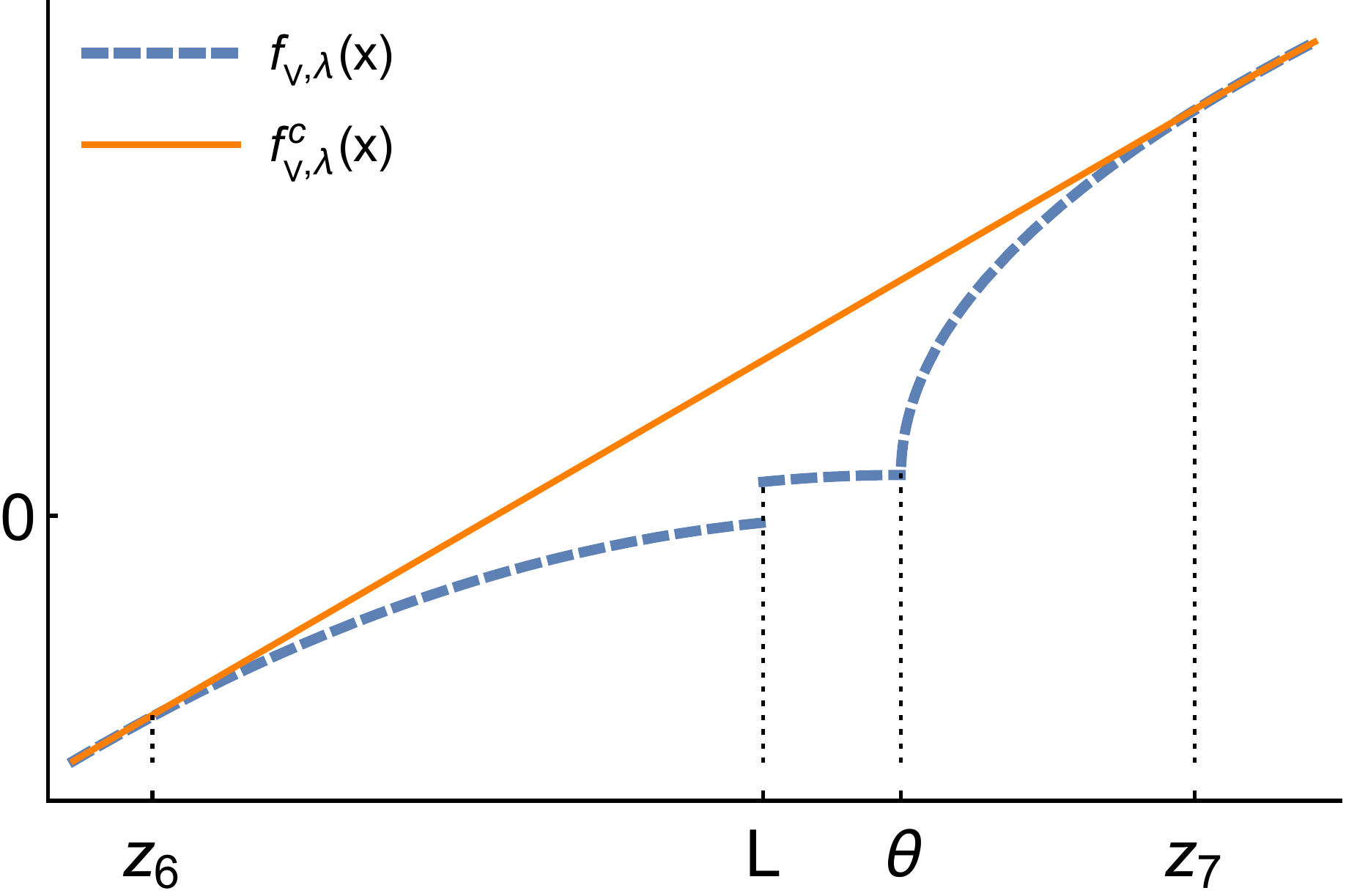}
			\caption{Schematic diagram of function $ f_{\nu,\lambda}(x) $ and its concave envelope function.}
			\label{fig:xfig6}
		\end{figure}
	
		The optimal solution is \begin{equation}\label{key6}x_{\nu,\lambda}^*(y)=\left\{
			\begin{aligned}
				& I_1(y)+\theta,&0<&y<U'(z_7-\theta),\\
				&\theta-I_2(\frac{y}{\nu}),&\nu D'(\theta-z_6)\leqslant &y<\nu D'(\theta),\\
				& 0,&\nu D'(\theta)\leqslant &y.
			\end{aligned}\right.
		\end{equation}
\vskip 5pt
We call this {\bf Case VIII} ($ z_1\leqslant L<\theta $, $ k_{\nu,\lambda} \leqslant \nu D'(\theta) $,  $ \nu D'(\theta-z_3) \leqslant  U'(z_4-\theta)  $ and $0< z_6<L $).
	\end{enumerate}	
\end{enumerate}
\vskip 5pt
When $ L=\theta $, similarly,  there are unique $ z_6'$ and $z_7' $:  $ z_6'<\theta<z_7' $ satisfying \begin{equation}\nonumber
	\nu D'(\theta-z_6')=U'(z_7'-\theta)=\frac{U(z_7'-\theta)+\lambda+\nu D(\theta-z_6')}{(z_7'-\theta)+(\theta-z_6')}.
\end{equation} If $ z_6'\leqslant0 $, it is reduced to the situation of Fig.~\ref{fig:xfig4} and the optimal solution is given by Eq.~(\ref{key4}). While if $0< z_6'<L $, it belongs to the case of Fig.~\ref{fig:xfig6} and Eq.~(\ref{key6}) is the optimal solution.

\vskip 5pt
When $ L>\theta $, there is a unique $ z_8\in[\theta,L)$ satisfying
\begin{equation}\nonumber
	U'(z_8-\theta)=\frac{U(L-\theta)+\lambda-U(z_8-\theta)}{L-z_8}=\frac{f_{\nu}(L)+\lambda-f_{\nu}(z_8)}{L-z_8}.
\end{equation}We also consider the relationship between $ k_{\nu,\lambda} $ and $ \nu D'(\theta) $: \begin{enumerate}
	\item If $ k_{\nu,\lambda} > \nu D'(\theta) $, we consider the relationships among $ U'(z_8-\theta)$, $U'(L-\theta)$ and $k_{\nu,\lambda} $ now.
\begin{enumerate}
		\item If $ U'(z_8-\theta)>k_{\nu,\lambda} \geqslant U'(L-\theta)$, the schematic diagram is as shown in Fig.~\ref{fig:xfig7}.
		\begin{figure}[htbp] 
			\centering
			\includegraphics[width=0.6\linewidth]{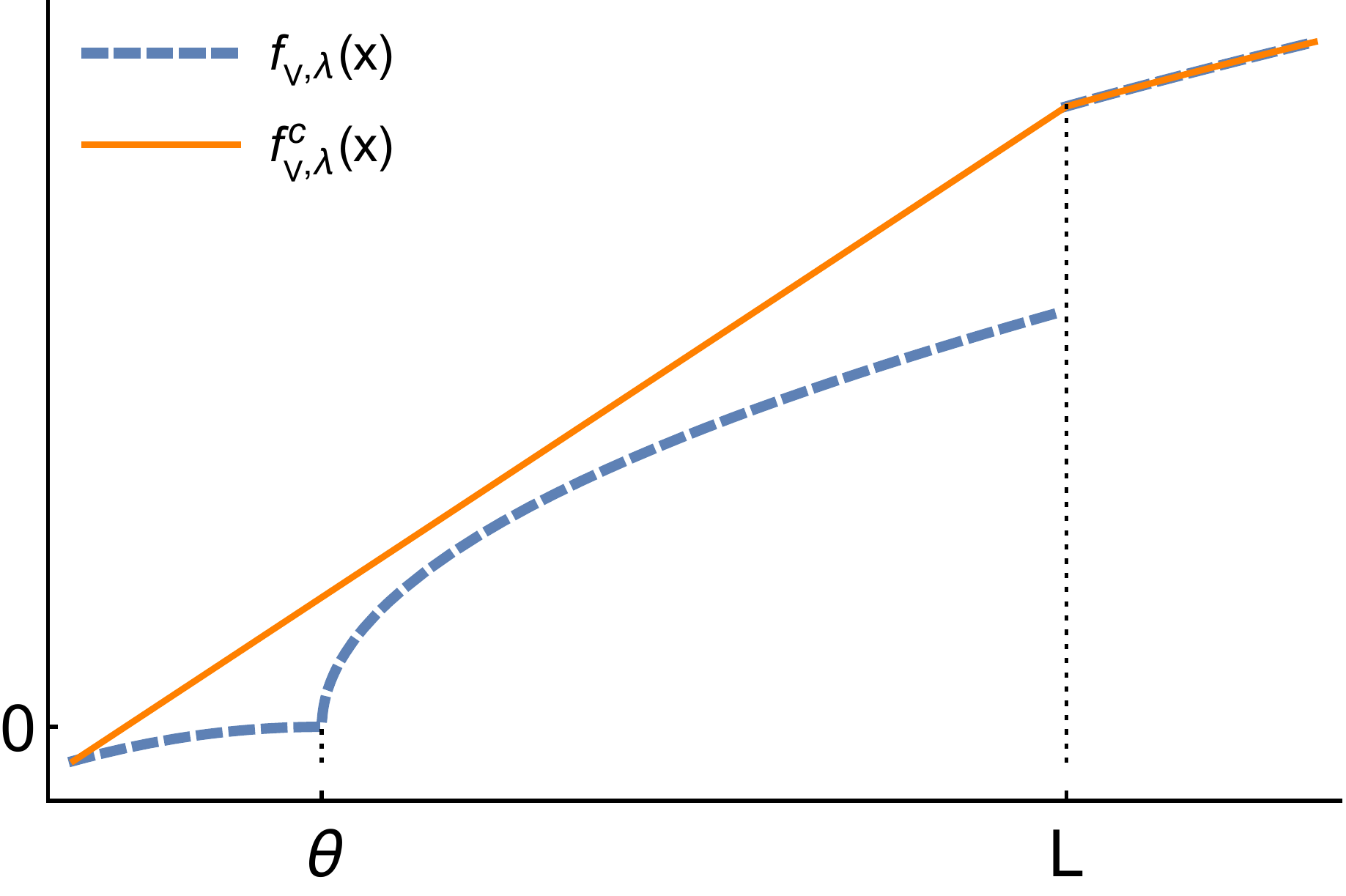}
			\caption{Schematic diagram of function $ f_{\nu,\lambda}(x) $ and its concave envelope function.}
			\label{fig:xfig7}
		\end{figure}
	
		The optimal solution in this case is \begin{equation}\nonumber x_{\nu,\lambda}^*(y)=\left\{
			\begin{aligned}
				& I_1(y)+\theta,&0<&y<U'(L-\theta),\\
				&L,&U'(L-\theta)\leqslant &y<k_{\nu,\lambda},\\
				& 0,&k_{\nu,\lambda}\leqslant &y.
			\end{aligned}\right.
		\end{equation}
\vskip 5pt
We call this {\bf Case IX }($ L>\theta $, $ k_{\nu,\lambda} > \nu D'(\theta) $ and $ U'(z_8-\theta)>k_{\nu,\lambda} \geqslant U'(L-\theta)$).
		\item If $ k_{\nu,\lambda} < U'(L-\theta)$, then there is a unique $ z_9>L$ satisfying
		\begin{equation}\nonumber
			U'(z_9-\theta)=\frac{U(z_9-\theta)+\lambda+\nu D(\theta)}{z_9-0}=\frac{f_{\nu}(z_9)+\lambda-f_{\nu}(0)}{z_9-0}.
		\end{equation}The schematic diagram is as shown in Fig.~\ref{fig:xfig8}.
		\begin{figure}[htbp] 
			\centering
			\includegraphics[width=0.6\linewidth]{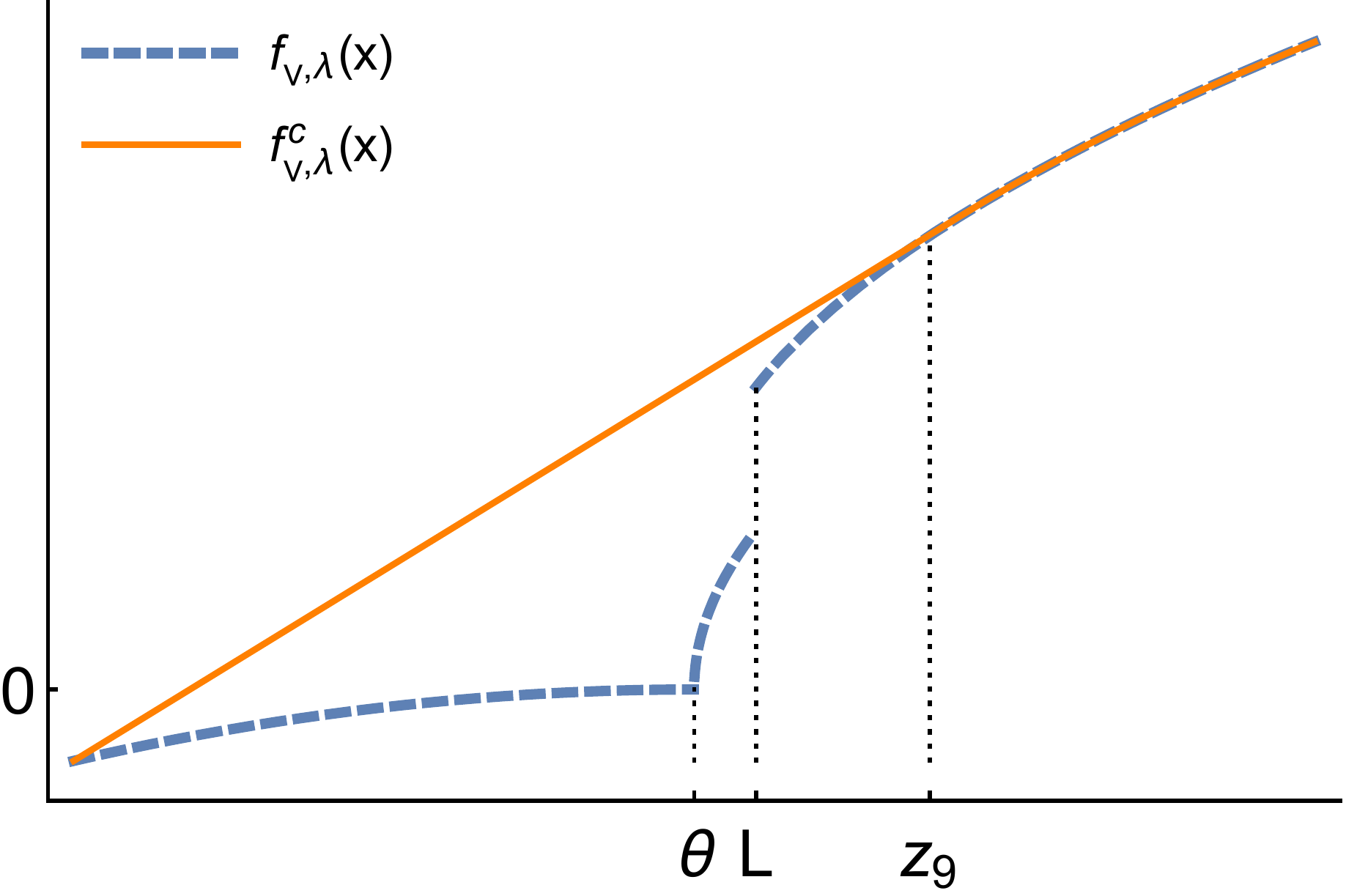}
			\caption{Schematic diagram of function $ f_{\nu,\lambda}(x) $ and its concave envelope function.}
			\label{fig:xfig8}
		\end{figure}
	
		The optimal solution in this case is \begin{equation}\label{key8}x_{\nu,\lambda}^*(y)=\left\{
			\begin{aligned}
				& I_1(y)+\theta,&0<&y<U'(z_9-\theta),\\
				& 0,&U'(z_9-\theta)\leqslant &y.
			\end{aligned}\right.
		\end{equation}
\vskip 5pt
We call this {\bf Case X} ($ L>\theta $, $ k_{\nu,\lambda} > \nu D'(\theta) $ and $ k_{\nu,\lambda} < U'(L-\theta)$).
		\item If $ k_{\nu,\lambda} \geqslant U'(z_8-\theta)$, then there exists a unique $ z_{10}\in(\theta,z_8]$ satisfying
		\begin{equation}\nonumber
			U'(z_{10}-\theta)=\frac{U(z_{10}-\theta)+\nu D(\theta)}{z_{10}-0}=\frac{f_{\nu}(z_{10})-f_{\nu}(0)}{z_{10}-0}.
		\end{equation}The schematic diagram is as shown in Fig.~\ref{fig:xfig9}.
		\begin{figure}[htbp] 
			\centering
			\includegraphics[width=0.6\linewidth]{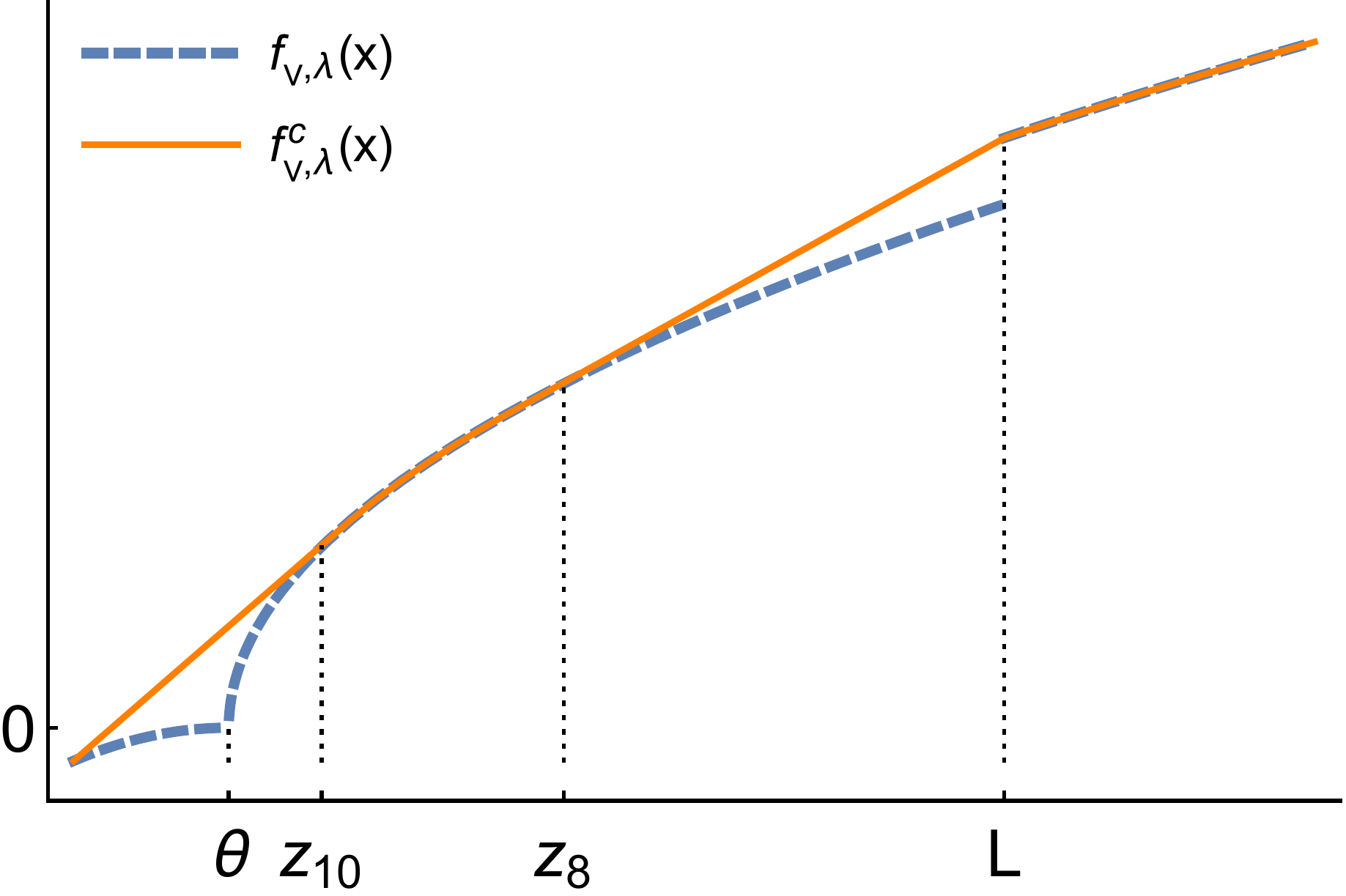}
			\caption{Schematic diagram of function $ f_{\nu,\lambda}(x) $ and its concave envelope function.}
			\label{fig:xfig9}
		\end{figure}
	
		The optimal solution in this case is \begin{equation}\nonumber x_{\nu,\lambda}^*(y)=\left\{
			\begin{aligned}
				& I_1(y)+\theta,&0<&y<U'(L-\theta),\\
				&L,&U'(L-\theta)\leqslant &y<U'(z_8-\theta),\\
				& I_1(y)+\theta,&U'(z_8-\theta)\leqslant&y<U'(z_{10}-\theta),\\
				& 0,&U'(z_{10}-\theta)\leqslant &y.
			\end{aligned}\right.
		\end{equation}
\vskip 5pt
We call  this {\bf Case XI} ($ L>\theta $, $ k_{\nu,\lambda} > \nu D'(\theta) $ and $ k_{\nu,\lambda} \geqslant U'(z_8-\theta)$).
	\end{enumerate}
	\item If $ k_{\nu,\lambda} \leqslant \nu D'(\theta) $, then there exists a unique $ z_{11}\in[0,\theta)$ satisfying
	\begin{equation}\nonumber
		\nu D'(\theta-z_{11})=\frac{U(L-\theta)+\lambda+\nu D(\theta-z_{11})}{L-z_{11}}.
	\end{equation}
We consider the relationships among $ U'(z_8-\theta)$, $U'(L-\theta)$ and $ \nu D'(\theta-z_{11}) $.\begin{enumerate}
		\item If $ U'(z_8-\theta)>\nu D'(\theta-z_{11}) \geqslant U'(L-\theta)$, the schematic diagram is as shown in Fig.~\ref{fig:xfig10}.
		\begin{figure}[htbp] 
			\centering
			\includegraphics[width=0.6\linewidth]{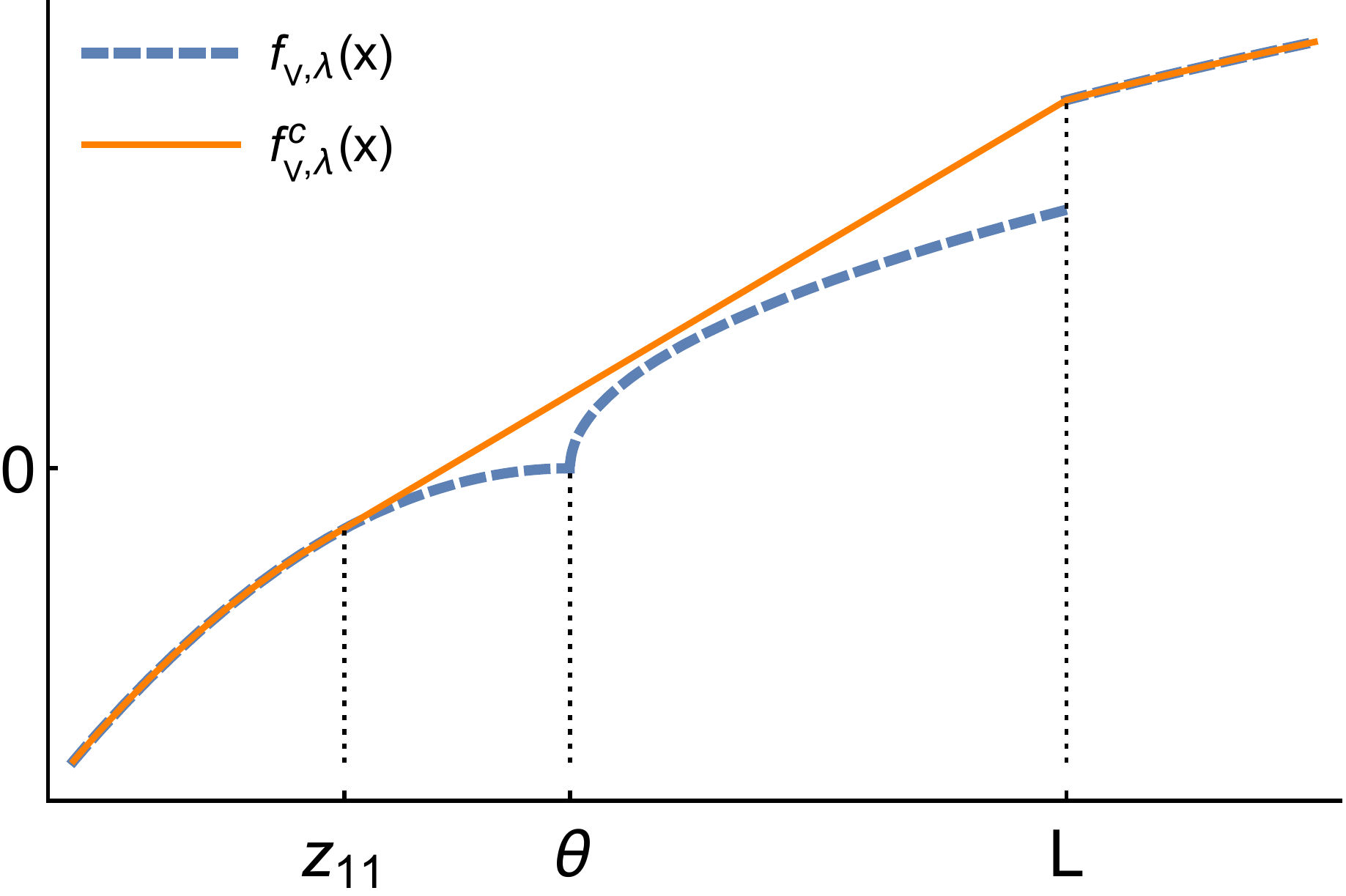}
			\caption{Schematic diagram of function $ f_{\nu,\lambda}(x) $ and its concave envelope function.}
			\label{fig:xfig10}
		\end{figure}
	
		The optimal solution in this case is  \begin{equation}\nonumber x_{\nu,\lambda}^*(y)=\left\{
			\begin{aligned}
				& I_1(y)+\theta,&0<&y<U'(L-\theta),\\
				&L,&U'(L-\theta)\leqslant &y<\nu D'(\theta-z_{11}),\\
				&\theta-I_2(\frac{y}{\nu}),&\nu D'(\theta-z_{11})\leqslant,&y<\nu D'(\theta),\\
				& 0,&\nu D'(\theta)\leqslant &y.
			\end{aligned}\right.
		\end{equation}	We call this {\bf Case XII} ($ L>\theta $, $ k_{\nu,\lambda} \leqslant \nu D'(\theta) $ and $ U'(z_8-\theta)>\nu D'(\theta-z_{11}) \geqslant U'(L-\theta)$).
		\item If $ U'(z_8-\theta)<\nu D'(\theta-z_{11})$ and $ z_2<z_8 $ hold, then the schematic diagram now is as shown in Fig.~\ref{fig:xfig11}.
		\begin{figure}[htbp] 
			\centering
			\includegraphics[width=0.6\linewidth]{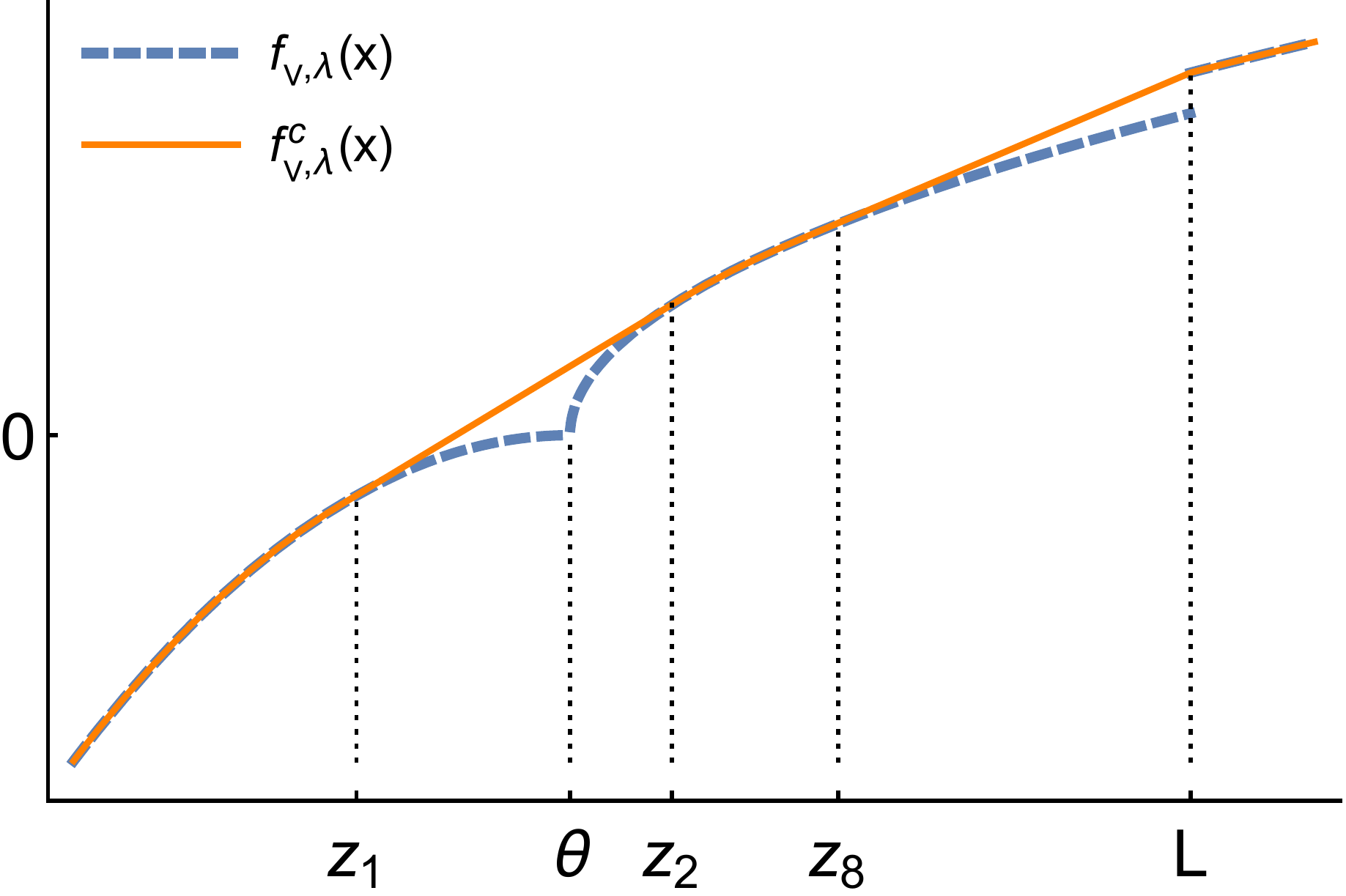}
			\caption{Schematic diagram of function $ f_{\nu,\lambda}(x) $ and its concave envelope function.}
			\label{fig:xfig11}
		\end{figure}
	
		The optimal solution in this case is \begin{equation}\nonumber x_{\nu,\lambda}^*(y)=\left\{
			\begin{aligned}
				& I_1(y)+\theta,&0<&y<U'(L-\theta),\\
				& L,&U'(L-\theta)\leqslant &y<U'(z_8-\theta),\\
				& I_1(y)+\theta,&U'(z_8-\theta)\leqslant&y<U'(z_2-\theta),\\
				&\theta-I_2(\frac{y}{\nu}),&\nu D'(\theta-z_1)\leqslant &y<\nu D'(\theta),\\
				& 0,& \nu D'(\theta)\leqslant &y.
			\end{aligned}\right.
		\end{equation}
\vskip 5pt
We call this {\bf Case XIII} ($ L>\theta $, $ k_{\nu,\lambda} \leqslant \nu D'(\theta) $ and $ U'(z_8-\theta)<\nu D'(\theta-z_{11})$).
		\item If $ U'(L-\theta)>\nu D'(\theta-z_{11}) $, then there exist unique $ z_{12}$ and $z_{13} $: $ z_{12}<\theta<L<z_{13} $ satisfying\begin{equation}\nonumber
			\nu D'(\theta-z_{12})=U'(z_{13}-\theta)=\frac{U(z_{13}-\theta)+\lambda+\nu D(\theta-z_{12})}{(z_{13}-\theta)+(\theta-z_{12})}.
		\end{equation}If $ z_{12}\leqslant0 $, it is reduced to the situation of Fig.~\ref{fig:xfig8} and the optimal solution is Eq.~(\ref{key8}). However, if $0< z_{12}<L $, the schematic diagram is as shown in Fig.~\ref{fig:xfig12}.
		\begin{figure}[htbp] 
			\centering
			\includegraphics[width=0.6\linewidth]{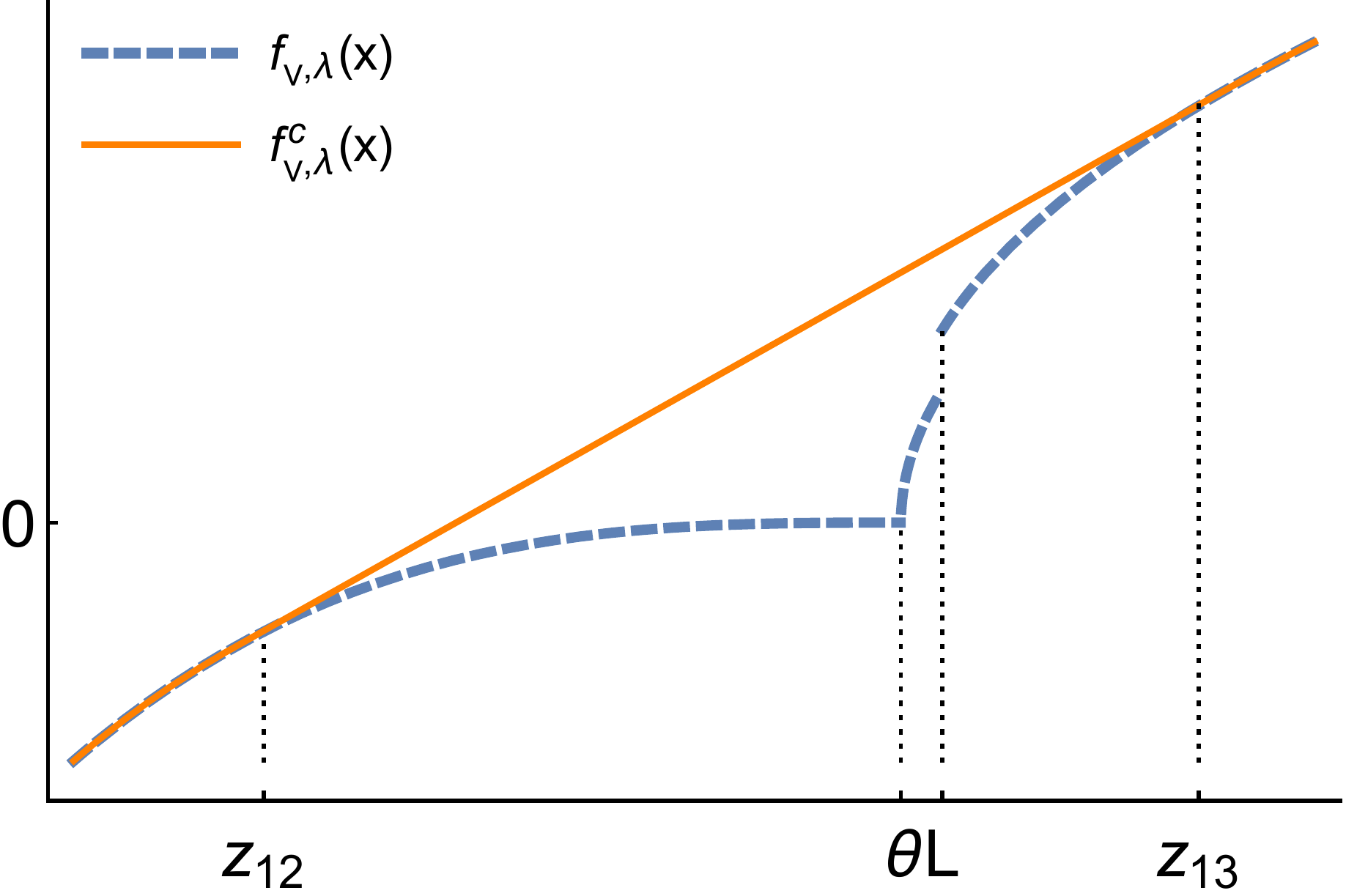}
			\caption{Schematic diagram of function $ f_{\nu,\lambda}(x) $ and its concave envelope function.}
			\label{fig:xfig12}
		\end{figure}
	
		The optimal solution in this case is \begin{equation}\nonumber x_{\nu,\lambda}^*(y)=\left\{
			\begin{aligned}
				& I_1(y)+\theta,&0<&y<U'(z_{13}-\theta),\\
				&\theta-I_2(\frac{y}{\nu}),&\nu D'(\theta-z_{12})\leqslant&y<\nu D'(\theta),\\
				& 0,&\nu D'(\theta)\leqslant &y.
			\end{aligned}\right.
		\end{equation}
\vskip 5pt
We call this {\bf Case XIV} ($ L>\theta $, $ k_{\nu,\lambda} \leqslant \nu D'(\theta) $, $ U'(L-\theta)>\nu D'(\theta-z_{11}) $ and $0< z_{12}<L $).
	\end{enumerate}
\end{enumerate}
\vskip 5pt
We see from the above paragraph that in the case of piece-wise concave function with VaR constraint, the optimal solution is very complicated and relies on the relationships among the reward function, penalty function and VaR constraint. There are in total fourteen cases. The optimal solution takes two-, three-, four- or five-region form depending on the relationships among $U(\cdot)$, $D(\cdot)$, $L$ and $\theta$.
\vskip 5pt
\subsubsection{\bf Concave penalty function}\label{ConcavePenaltyFunction}
When the penalty function is concave, the objective function of the optimization problem (\ref{LiVaR}) is the ``S-shaped" utility function which has been widely discussed in many work, see \cite{OIWS} for example. The manager is risk aversion towards gains while risk seeking towards losses. We only show the results and the proofs are omitted.
\vskip 5pt
In this case, there exists a unique solution $ z>\theta $ satisfying \begin{equation}\nonumber
	U(z-\theta)+\nu D(\theta)=zU'(z-\theta),
\end{equation}
where $ k_{\nu,\lambda}  $ is given in Eq.~(\ref{knulambda}).

\vskip 5pt
When $ L\geqslant z $,  we have $ k_{\nu,\lambda}\geqslant U'(L) $. And we consider the relationship between $ k_{\nu,\lambda} $ and $ U'(z) $.\begin{enumerate}
	\item[{\bf [Case I]}] If $ k_{\nu,\lambda}> U'(z-\theta) $, then we have \begin{equation}
		\nonumber x_{\nu,\lambda}^*(y)=\left\{
		\begin{aligned}
			& I_1(y)+\theta,&0<&y<U'(L-\theta),\\
			&L,&U'(L-\theta)\leqslant&y<k_{\nu,\lambda},\\
			& 0,&k_{\nu,\lambda}\leqslant &y.
		\end{aligned}\right.
	\end{equation}
	\item[{\bf [Case II]}] If $ U'(z)\geqslant k_{\nu,\lambda}\geqslant U'(L) $, then there exists a unique $ L_0\in[z,L] $ such that \begin{equation}\nonumber
		U(L-\theta)+\lambda-U(L_0-\theta)=(L-L_0)U'(L_0-\theta).
	\end{equation}
	The optimal solution is \begin{equation}
		\nonumber x_{\nu,\lambda}^*(y)=\left\{
		\begin{aligned}
			& I_1(y)+\theta,&0<&y<U'(L-\theta),\\
			&L,&U'(L-\theta)\leqslant,&y<U'(L_0-\theta),\\
			& I_1(y)+\theta,&U'(L_0-\theta)\leqslant&y<U'(z-\theta),\\
			& 0,&U'(z-\theta)\leqslant &y.
		\end{aligned}\right.
	\end{equation}
\end{enumerate}
\vskip 5pt
When $ z>L\geqslant \theta $, consider the relationship between $ k_{\nu,\lambda}  $ and $ U'(L-\theta) $.
\begin{enumerate}
	\item[{\bf [Case III]}] If $ k_{\nu,\lambda}\geqslant U'(L-\theta) $, then we have \begin{equation}
		\nonumber x_{\nu,\lambda}^*(y)=\left\{
		\begin{aligned}
			& I_1(y)+\theta,&0<&y<U'(L-\theta),\\
			&L,&U'(L-\theta)\leqslant&y<k_{\nu,\lambda},\\
			& 0,&k_{\nu,\lambda}\leqslant &y.
		\end{aligned}\right.
	\end{equation}
	\item[{\bf [Case IV]}] If $ k_{\nu,\lambda}< U'(L-\theta) $, then there exists a unique $ \tilde{z}_0\in(L,z] $ such that \begin{equation}\nonumber
		U(\tilde{z}_0-\theta)+\lambda+\nu D(\theta)=\tilde{z}_0U'(\tilde{z}_0-\theta).
	\end{equation}
	And we have \begin{equation}
		\nonumber x_{\nu,\lambda}^*(y)=\left\{
		\begin{aligned}
			& I_1(y)+\theta,&0<&y<U'(\tilde{z}_0-\theta),\\
			& 0,&U'(\tilde{z}_0-\theta)\leqslant &y.
		\end{aligned}\right.
	\end{equation}
\end{enumerate}
\paragraph{When $ L< \theta $}
Then there exists a unique $ \hat{z}\in (\theta, z) $ such that \begin{equation}\nonumber
	U(\hat{z}-\theta)+\nu D(\theta\-L)=(\hat{z}-L)U'(\hat{z}-\theta).
\end{equation}
\begin{enumerate}
	\item[{\bf [Case V]}] If $ k_{\nu,\lambda}>U'(\hat{z}-\theta) $, then we have \begin{equation}
		\nonumber x_{\nu,\lambda}^*(y)=\left\{
		\begin{aligned}
			& I_1(y)+\theta,&0<&y<U'(\hat{z}-\theta),\\
			&L,&U'(\hat{z}-\theta)\leqslant&y<k_{\nu,\lambda},\\
			& 0,&k_{\nu,\lambda}\leqslant &y.
		\end{aligned}\right.
	\end{equation}
	\item[{\bf [Case VI]}] If $ k_{\nu,\lambda}\leqslant U'(\hat{z}-\theta) $, then there exists a unique $ \hat{z}_0\in[\hat{z},z] $ such that \begin{equation}\nonumber
		U(\hat{z}_0-\theta)+\lambda+\nu D(\theta)=\tilde{z}_0U'(\hat{z}_0-\theta).
	\end{equation}
	The optimal solution is given by \begin{equation}
		\nonumber x_{\nu,\lambda}^*(y)=\left\{
		\begin{aligned}
			& I_1(y)+\theta,&0<&y<U'(\hat{z}_0-\theta),\\
			& 0,&U'(\hat{z}_0-\theta)\leqslant &y.
		\end{aligned}\right.
	\end{equation}
\end{enumerate}
\vskip 5pt
In the case of concave penalty function, the linearized problem has a utility function as in the prospect theory and there are in total six cases. Besides, the optimal solution takes two-, three- or four-region form depending on the relationships among the reward function, penalty function and the VaR constraint.
\vskip 10pt
\subsection{Summary}\label{conclusion}
We have presented the solution of the original problem (\ref{PSV}). The original problem contains a non-linear optimization goal and a VaR constraint. We first linearize the problem to an equivalent one with the expected utility goal under VaR constraint. Then, based on Lagrange dual method and concavification method, we obtain the optimal terminal wealth under different cases.  The reward function is concave and the optimal terminal wealth has different forms when the penalty function is convex or concave.
\vskip 5pt
Once the assumptions (H1)-(H4) hold and the current wealth value $ {X}^{\pi}(\cdot) $ is allowed to be negative. Using Theorem \ref{Theorem1}, Proposition \ref{Propos1}, Theorem \ref{Theorem2}, Corollary \ref{Corollary1}, Theorem \ref{VaR}, Theorem \ref{DeTr}, Theorem \ref{DET1}, Corollary \ref{Corollary2} and Eq.~(\ref{defined}), we can obtain the optimal wealth of the
 original optimization problem (\ref{PSV}) based on the following procedure:
%
\begin{enumerate}
	\item [Step 1:]Use the results  presented in Theorem \ref{DET1} to obtain the optimal solution $ x_{\nu,\lambda}^*(y) $ to the non-random problem (\ref{L}).
	\item [Step 2:]For the optimal solution $ x_{\nu,\lambda}^*(y) $ to non-random problem (\ref{L}), use Corollary \ref{Corollary2} to find the multiplier $ \beta^* =B_{\nu}(\lambda,\tilde{x}_{0})$ satisfying $ \mathbb{E}[H(T)Z_{\nu,\lambda,\beta^*}]=\tilde{x}_{0}  $\footnote{The $ \beta^* $ obtained here is implicit, but due to the monotonicity of $ R_{\nu}(\lambda,\beta) $, we can easily obtain  numerical solutions using numerical methods.}. Then the optimal solution of Problem (\ref{Li}) is $ Z_{\nu,\lambda} =Z_{\nu,\lambda,\beta^*}=x_{\nu,\lambda}^*(\beta^* H(T)) $.
	\item [Step 3:]Calculate $\displaystyle p\triangleq\sup\limits_{\lambda\geqslant0}S_{\nu}(\lambda,B_{\nu}(\lambda,\tilde{x}_{0})) $ to check whether $ p>1-\varepsilon $ or not. If not, the parameter should be adjusted. If $ p>1-\varepsilon $, find
	$ \lambda^*\geqslant0$\footnote{The $ \lambda^* $ obtained here is also implicit and is derived by numerical methods.} satisfying Eqs.~(\ref{TranCond1}) and (\ref{TranCond2}). Then $ Z_{\nu,\lambda^*} $ is the optimal solution to the linearized problem (\ref{LiVaR}).	
	\item [Step 4:]For the solution $ Z_{\nu} = Z_{\nu,\lambda^*} $ to the optimization problem (\ref{LiVaR}), find $ \nu^* $\footnote{The $ \nu^* $ obtained here is also implicit, and due to the dependence of $ \nu $ with $ \lambda^* $ and $ \beta^* $,  the numerical approach to find $ \nu^* $ is not an easy task.} to make the optimal value of the problem (\ref{LiVaR}) equal to 0. Then the corresponding solution $Z^*= Z_{\nu^*} $ is the solution to Problem (\ref{PVD}) and $ \nu^* $ is the optimal value of  Problems (\ref{PVD}), (\ref{PSVD}) and (\ref{PSV}).
	\item [Step 5:]By using Eqs.~(\ref{defined}), (\ref{Process1}) and  (\ref{Process2}) in Proposition \ref{Propos1}, the real terminal  wealth $X^{\pi^*}(T)/I(T)=\tilde{X}^{\pi^*}(T)=Z^* $ and optimal investment strategy $ \pi^* $  of Problem (\ref{PSV}) can be obtained theoretically.
\end{enumerate}
\vskip 10pt
\section{\bf Optimal investment strategies}
In the case that $U$ and $D$ are power functions, the optimal investment strategies have closed forms. Based on Proposition \ref{Propos1}, the optimal real wealth can be replicated by using the optimal terminal wealth:
%
%
\begin{equation}\nonumber
	\tilde{X}^{\pi^*}(t)=H(t)^{-1}
	\mathbb{E}[H(T)Z^*|\mathcal{F}_t],\quad 0\leqslant t\leqslant T,
\end{equation}
where $  Z^*=x_{\nu^*,\lambda^*}^*(\beta^* H(T))$.

Based on the results in Section \ref{4}, we know that $ Z^* $ can be expressed in the following form
\begin{equation}\label{equ:z*}
	Z^*=\sum_{i,k} \left[g^{(i)}I_k^{(i)}(\beta^*H(T))+b^{(i)}\right]\chi_{[h_1^{(i)},h_2^{(i)})}(\beta^*H(T)),\quad (k=1,2),
\end{equation}
where $g^{(i)}$, $ b^{(i)}$,  $h_1^{(i)}$ and $h_2^{(i)} $ are constants. $ I_k^{(i)}(\cdot),(k=1,2), $
are the inverse functions of $U'(x)$ and $D'(x)$, as such, they are also power functions. Without loss of generality, assume \begin{equation}\nonumber
	\tilde{X}^{\pi^*}(t) =\sum_{i,k}	H(t)^{-1}\mathbb{E}\left[H(T)[g^{(i)}H(T)^{\alpha^{(i)}}+b^{(i)}]\chi_{[h_1^{(i)},h_2^{(i)})}(\beta^*H(T))|\mathcal{F}_t\right].
\end{equation}
\vskip 5pt
 To calculate $ \tilde{X}^{\pi^*}(t) $, we only need to derive
\begin{equation}\nonumber
	H(t)^{-1}\mathbb{E}\left[H(T)[gH(T)^\alpha+b]\chi_{[h_1,h_2)}(\beta^*H(T))|\mathcal{F}_t\right].
\end{equation}
Noticing  Eq.~(\ref{DeFofH}), we know \begin{equation}\nonumber
	H(T)=H(t)\exp\left\{R(t)-\frac12\Sigma(t)+M(t)\right\},
\end{equation}
where \begin{equation}\nonumber
	\left\{	\begin{aligned}
		R(t)=&-\int_{t}^{T}r_r(s)\rd s,\\
		\Sigma(t)=&((\lambda_I-\sigma_{I})^2+\lambda_S^2)(T-t),\\
		M(t)=&(\lambda_I-\sigma_{I})(W_I(t)-W_I(T))+\lambda_S(W_S(t)-W_S(T)).
	\end{aligned}\right.
\end{equation}
$ H(t) $ is $ \mathcal{F}_t- $measurable,  $ M(t) $ is independent of  $ \mathcal{F}_t $ and $ M(t)\sim N(0,\Sigma(t)) $.
 As such, we have \begin{equation}\nonumber
	\begin{aligned}
		&H(t)^{-1}\mathbb{E}\left[H(T)gH(T)^\alpha\chi_{[h_1,h_2)}(\beta^*H(T))|\mathcal{F}_t\right]\\
		=&gH(t)^{-1}H(t)^{\alpha+1}\mathbb{E}\left[\exp\left((\alpha+1)R(t)-\frac{\alpha+1}{2}\Sigma(t)+(\alpha+1)M(t)\right)\cdot\right.\\&\qquad\qquad\qquad\qquad\left.\mathbf{1}_{\ln(\frac{h_1}{\beta^*H(t)})\leqslant R(t)-\frac12\Sigma(t)+M(t)<\ln(\frac{h_2}{\beta^*H(t)}) }\bigg|\mathcal{F}_t\right]\\
		=&gH(t)^{\alpha}\exp\left((\alpha+1)R(t)-\frac{\alpha+1}{2}\Sigma(t)\right)\\&\times\mathbb{E}\left[\exp\left((\alpha+1)\sqrt{\Sigma(t)}\cdot\frac{M(t)}{\sqrt{\Sigma(t)}}\right)\mathbf{1}_{H_1\leqslant \frac{M(t)}{\sqrt{\Sigma(t)}}<H_2 }\bigg|\mathcal{F}_t\right],
	\end{aligned}
\end{equation}
where \begin{equation}\nonumber
	H_1=\frac{\ln(\frac{h_1}{\beta^*H(t)})+\frac{1}{2}\Sigma(t)-R(t)}{\sqrt{\Sigma(t)}},\quad H_2=\frac{\ln(\frac{h_2}{\beta^*H(t)})+\frac{1}{2}\Sigma(t)-R(t)}{\sqrt{\Sigma(t)}}
\end{equation}are $\mathcal{F}_t- $measurable.
\vskip 5pt
Because \begin{equation}\nonumber
	\begin{aligned}
		&\mathbb{E}\left[\exp\left((\alpha+1)\sqrt{\Sigma(t)}\cdot\frac{M(t)}{\sqrt{\Sigma(t)}}\right)\mathbf{1}_{H_1\leqslant \frac{M(t)}{\sqrt{\Sigma(t)}}<H_2 }\bigg|\mathcal{F}_t\right]\\
		=&\int_{H_1}^{H_2}\exp((\alpha+1)\sqrt{\Sigma(t)}x)\frac{1}{\sqrt{2\pi}}\exp(-\frac{x^2}{2})\rd x\\
		=&\exp\left(\frac12(\alpha+1)^2\Sigma(t)
\right)\left[\Phi\left(H_2-(\alpha+1)\sqrt{\Sigma(t)}\right)
-\Phi\left(H_1-(\alpha+1)\sqrt{\Sigma(t)}\right)\right],
	\end{aligned}
\end{equation}
we obtain \begin{equation}\nonumber
	\begin{aligned}
		&H(t)^{-1}\mathbb{E}\left[H(T)[gH(T)^\alpha+b]\chi_{[h_1,h_2)}(\beta^*H(T))|\mathcal{F}_t\right]\\
		=&gH(t)^{\alpha}\exp\left((\alpha+1)R(t)+\frac12\alpha(\alpha+1)\Sigma(t)\right)\\&\times	\left[\Phi\left(H_2-(\alpha+1)\sqrt{\Sigma(t)}\right)-\Phi\left(H_1-(\alpha+1)\sqrt{\Sigma(t)}\right)\right]\\
		&+b\exp\left(R(t)\right)\times	\left[\Phi\left(H_2-\sqrt{\Sigma(t)}\right)-\Phi\left(H_1-\sqrt{\Sigma(t)}\right)\right].
	\end{aligned}
\end{equation}
Thus, the optimal real wealth at time $t$ can be expressed by
 \begin{equation}\nonumber
	\begin{aligned}
		\tilde{X}^{\pi^*}(t)
		=&\sum_{i,k}\left\{g^{(i)}H(t)^{\alpha^{(i)}}\exp\left((\alpha^{(i)}+1)R(t)+\frac12\alpha^{(i)}(\alpha^{(i)}+1)\Sigma(t)\right)\right.\\&\times	\left[\Phi\left(H_2^{(i)}-(\alpha^{(i)}+1)\sqrt{\Sigma(t)}\right)-\Phi\left(H_1^{(i)}-(\alpha^{(i)}+1)\sqrt{\Sigma(t)}\right)\right]\\
		&\left.+b^{(i)}\exp\left(R(t)\right)\times	\left[\Phi\left(H_2^{(i)}-\sqrt{\Sigma(t)}\right)-\Phi\left(H_1^{(i)}-\sqrt{\Sigma(t)}\right)\right]\right\}.
	\end{aligned}
\end{equation}
Using martingale method, we can derive the optimal strategies. We need to obtain the differential of $\tilde{X}^{\pi^*}$ first.  Define \[ \Psi(\alpha,b,g,h_1,h_2)\triangleq\frac{\partial}{\partial H(t)}\left(H(t)^{-1}\mathbb{E}\left[H(T)[gH(T)^\alpha+b]\chi_{[h_1,h_2)}(\beta^*H(T))|\mathcal{F}_t\right]\right)\cdot (-H(t)).\]
Then we have
\begin{equation}\nonumber
	\begin{split}&\Psi(\alpha,b,g,h_1,h_2)\\
		=&\frac{\partial}{\partial H(t)}\left(H(t)^{-1}\mathbb{E}\left[H(T)[gH(T)^\alpha+b]\chi_{[h_1,h_2)}(\beta^*H(T))|\mathcal{F}_t\right]\right)\cdot (-H(t))\\
		=&gH(t)^{\alpha}\exp\left((\alpha+1)R(t)+\frac{1}{2}\alpha(\alpha+1)\Sigma(t)
		\right)
		\\&\times\left\{	\alpha\left[\Phi\left(H_1-(\alpha+1)\sqrt{\Sigma(t)}\right)-\Phi\left(H_2-(\alpha+1)\sqrt{\Sigma(t)}\right)\right]\right.
\\&\left.+\frac{1}{\sqrt{\Sigma(t)}}\left[\phi\left(H_2-(\alpha+1)\sqrt{\Sigma(t)}\right)-\phi\left(H_1-(\alpha+1)\sqrt{\Sigma(t)}\right)\right]
\right\}\\
		&+b\exp\left(R(t)\right)\times	\frac{1}{\sqrt{\Sigma(t)}}\left[\phi\left(H_2-\sqrt{\Sigma(t)}\right)-\phi\left(H_1-\sqrt{\Sigma(t)}\right)\right].
	\end{split}
\end{equation}
As such, we obtain the differential of $\tilde{X}^{\pi^*}$:
\begin{equation}\label{dt}
	\rd \tilde{X}^{\pi^*}(t)=\square \rd t+\sum_{i,k}\Psi(\alpha^{(i)},b^{(i)},g^{(i)},h_1^{(i)},h_2^{(i)})\left[(\lambda_I-\sigma_I)\rd W_I(t)+\lambda_S\rd W_S(t)\right].
\end{equation}
In order to derive the optimal strategies, we need to compare the diffusion terms. As such, we only calculate the diffusion term of  $\tilde{X}^{\pi^*}$.

Comparing Eqs.~(\ref{dt}) and (\ref{SDE2}), we obtain the optimal investment strategies as follows: \begin{equation}\nonumber\left\{\begin{aligned}
		\pi_S(t)=\frac{1}{\sigma_{S_2}}&\left[\lambda_SI(t)\sum_{i,k}\Psi(\alpha^{(i)},b^{(i)},g^{(i)},h_1^{(i)},h_2^{(i)})-\sigma_{C_2}F(t)\right],\\
		\pi_P(t)=\frac{1}{\sigma_{S_1}}&\left[(\lambda_I-\sigma_I)I(t)\sum_{i,k}\Psi(\alpha^{(i)},b^{(i)},g^{(i)},h_1^{(i)},h_2^{(i)})\right.\\
		&\left.+\sigma_II(t)\tilde{X}^{\pi^*}(t)-\sigma_{S_1}\pi_S(t)-\sigma_{C_1}F(t)\right].
	\end{aligned}\right.
\end{equation}
The optimal terminal wealth is presented in Subsection \ref{ss:o} under different cases.  Subsection \ref{ss:o} shows that the optimal terminal wealth can be expressed by Eq.~(\ref{equ:z*}). Combining with the last equation, we have the closed forms of the optimal investment strategies.
\vskip 10pt
\section{\bf Sensitivity analysis}
In this section, we present numerical results of the pension manager under performance ratio and VaR constraint. Unless otherwise stated, the parameters we adopt are: $ T=40$, $ r_n=0.04$, $ r_r=0.02$, $  \sigma_{I}=0.4 $, $\sigma_{C_1}=0.2$, $\sigma_{C_2}=0.3$, $\sigma_{S_1}=0.3$, $\sigma_{S_2}=0.4$, $ \mu=0.1$, $ \lambda_I=0.2$, $\lambda_S=0.3$, $ i_0=1$, $ c_0=0.8$, $ x_0=1$. Furthermore, suppose that the reward and penalty functions are both power functions:  $ U(x)=x^{\gamma_1} $ and $D(x)= Ax^{\gamma_2}$, $ A=1$, $\gamma_1=0.3$, $\gamma_2=2.2 $. In this case, the reward function is concave while the penalty function is convex. Then the inverse function of $U'(x)$ is $I_1(x)=(\frac{x}{\gamma_1})^{\frac{1}{\gamma_1-1}}$, and the inverse function of $D'(x)$ is $I_2(x)=(\frac{x}{A\gamma_2})^{\frac{1}{\gamma_2-1}}$.
\vskip 5pt

%
\vskip 5pt
We are interested in the impact of the reference level $\theta$ on the optimal wealth. First set $ \varepsilon=0.01, L=6.5$ to ensure the condition  (\ref{condition}).
Fig.~\ref{theta} depicts the evolutions of  $ x_{\nu^*,\lambda^*}^*(\beta^*s) $ w.r.t. $s$ for $ \theta=6,~6.5,~7 $, respectively.  $ \theta=6,~6.5,~7 $ belongs to Case IV, Case IV and Case III, respectively. Fig.~\ref{theta} means the evolution of  $ Z^* $ versus the pricing kernel $ H(T) $. As shown in Fig.~\ref{theta}, the optimal wealth $Z^*$ takes four- or five-region form for different $\theta$. In these three cases, the bad-states regions are almost the same. When $\theta$ increases, the optimal wealth increases first and then decreases. In the good-states region,  the optimal wealth  has a positive relationship with the reference point. Besides, when $\theta$ increases, the intermediate-states region shrinks and the optimal wealth in this region increases.


\begin{figure}[htbp] 
	\centering
	\includegraphics[width=0.7\linewidth]{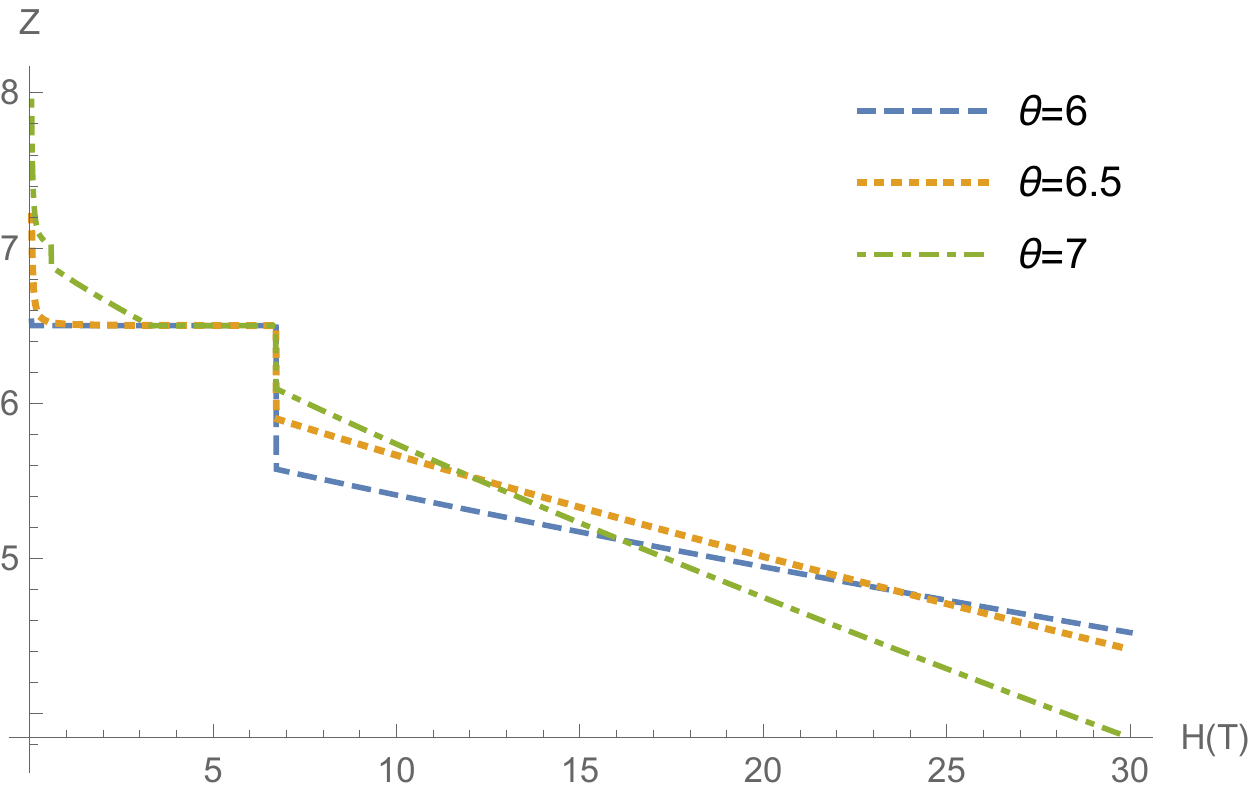}
	\caption{Graph of $ Z^* $ versus $ H(T) $ for different $ \theta\ $ when $ L=6.5$, $\varepsilon =0.01 $. }
	\label{theta}
\end{figure}
\vskip 5pt
We also vary $ \gamma_1$ and $\gamma_2 $ to illustrate the impacts of $ \gamma_1$ and $\gamma_2 $ on the optimal wealth. We set $ L=6.5$, $\theta =7 $ in the VaR constraint. For $\gamma_1$, we consider  $ \gamma_1=0.25,~0.3,~0.35 $ and $ \gamma_2=2.2 $, which is shown in Fig.~\ref{gamma1}. For $\gamma_2$, we choose $ \gamma_1=0.3 $ and $ \gamma_2=2,~2.2,~2.25 $, which is depicted  in Fig.~\ref{gamma2}.
\begin{figure}[htbp] 
	\centering
	\includegraphics[width=0.7\linewidth]{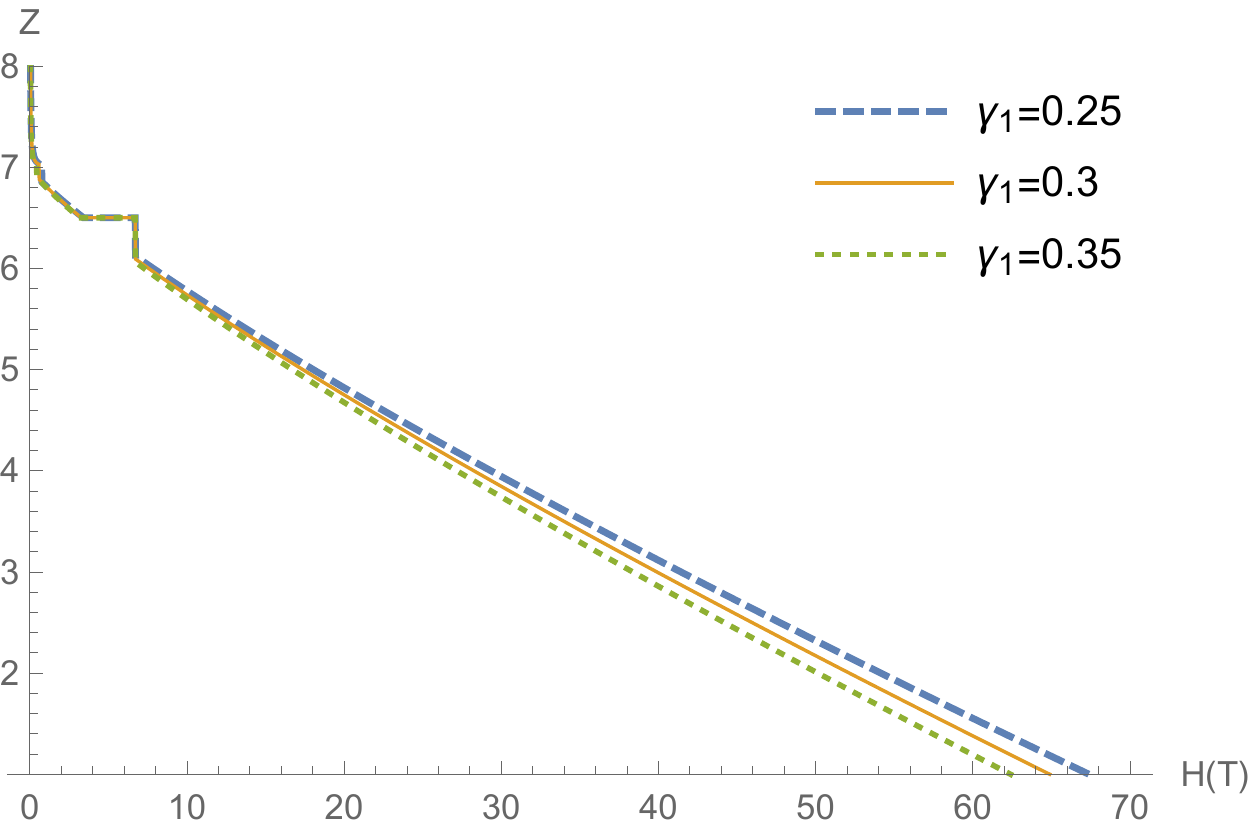}
	\caption{Graph of $ Z^* $ versus $ H(T) $ under different $ \gamma_1$. }
	\label{gamma1}
\end{figure}
In Fig.~\ref{gamma1}, we see that in the good-states region, the optimal wealth is almost the same for different $\gamma_1$. $\gamma_1$ characterizes the manager's risk attitude towards gains and larger $\gamma_1$ means less risk aversion. In the bad-states region, the optimal wealth decreases with $\gamma_1$, which means that the manager will give up some wealth in the bad-states region to ensure the wealth in the good-states region for larger $\gamma_1$.
\begin{figure}[htbp] 
	\centering
	\includegraphics[width=0.7\linewidth]{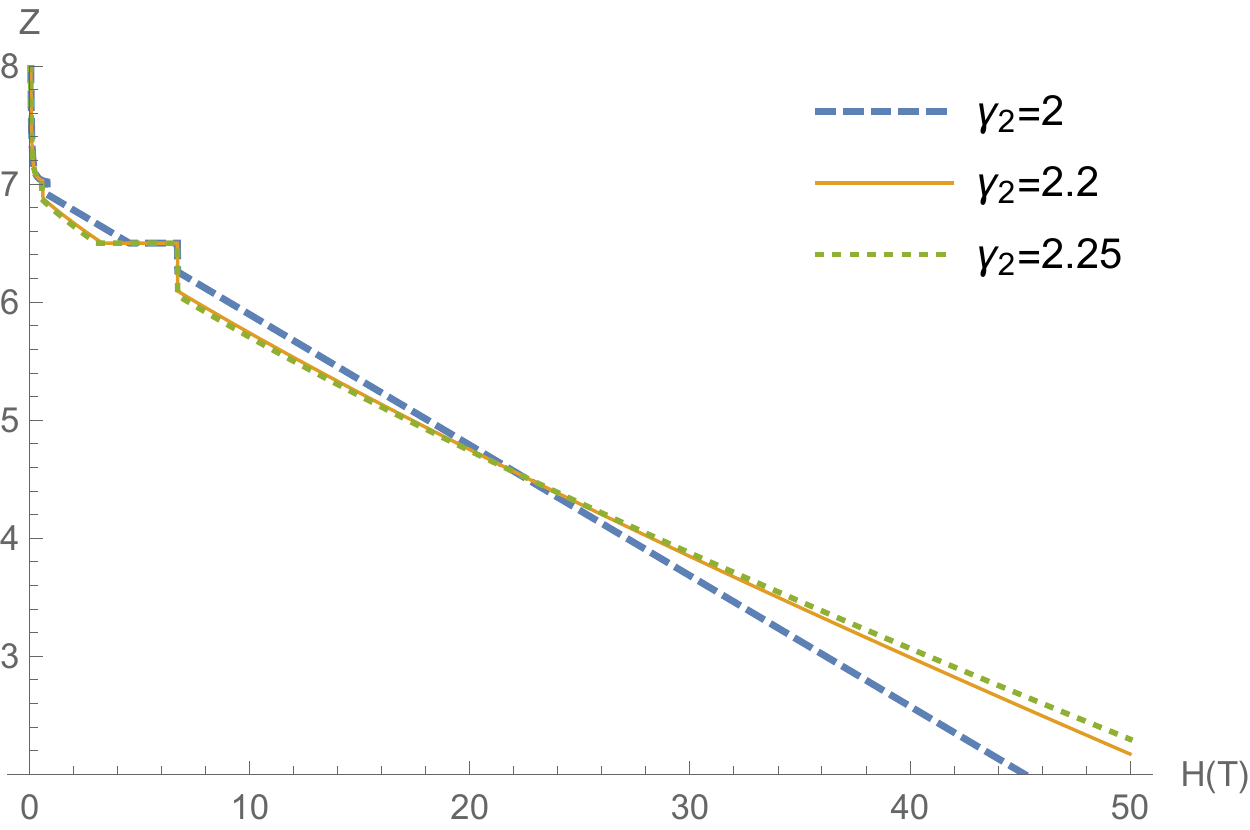}
	\caption{Graph of $ Z^* $ versus $ H(T) $ under different $ \gamma_2$. }
	\label{gamma2}
\end{figure}
$\gamma_2$ shows the degree of risk aversion attitude towards losses. When $\gamma_2$ increases, the manager is less risk averse towards the absolute wealth below $\theta$. We can observe from Fig.~\ref{gamma2} that the optimal wealth in the bad-states region increases with $\gamma_2$. Besides, in the good-states region, the optimal wealth decreases with $\gamma_2$.
\vskip 5pt
We are also concerned  with the impact of the VaR constraint $L$ on the optimal wealth. Set  $ \varepsilon=0.01, \theta=6$. Fig.~\ref{LL} reveals the evolutions of $Z^*$ for different  $ L=5,~5.5,~6,~6.5 $. The optimal wealth takes two- or three-region form according to the value of $L$. As the optimal solution when $ L=5.5 \text{~and~} \varepsilon=1$ satisfies the VaR constraint $ P(Z^*\geqslant L)\geqslant1-0.01 $, the VaR constraint is not binding. Similarly, when   $ L=5 $, the VaR constraint is also not binding. As such, the optimal solution for $L=5.5$ and $L=5$ are the same, which is illustrated in Fig.~\ref{LL}. We see that when $L$ increases from 6 to 6.5, the intermediate-states region enlarges and the optimal wealth in this region also increases. Meanwhile, the good-states region shrinks and the related optimal wealth increases. The bad-states region almost does not change. However, the optimal wealth in the bad-states region decreases with $L$.
\begin{figure}[htbp] 
	\centering
	\includegraphics[width=0.7\linewidth]{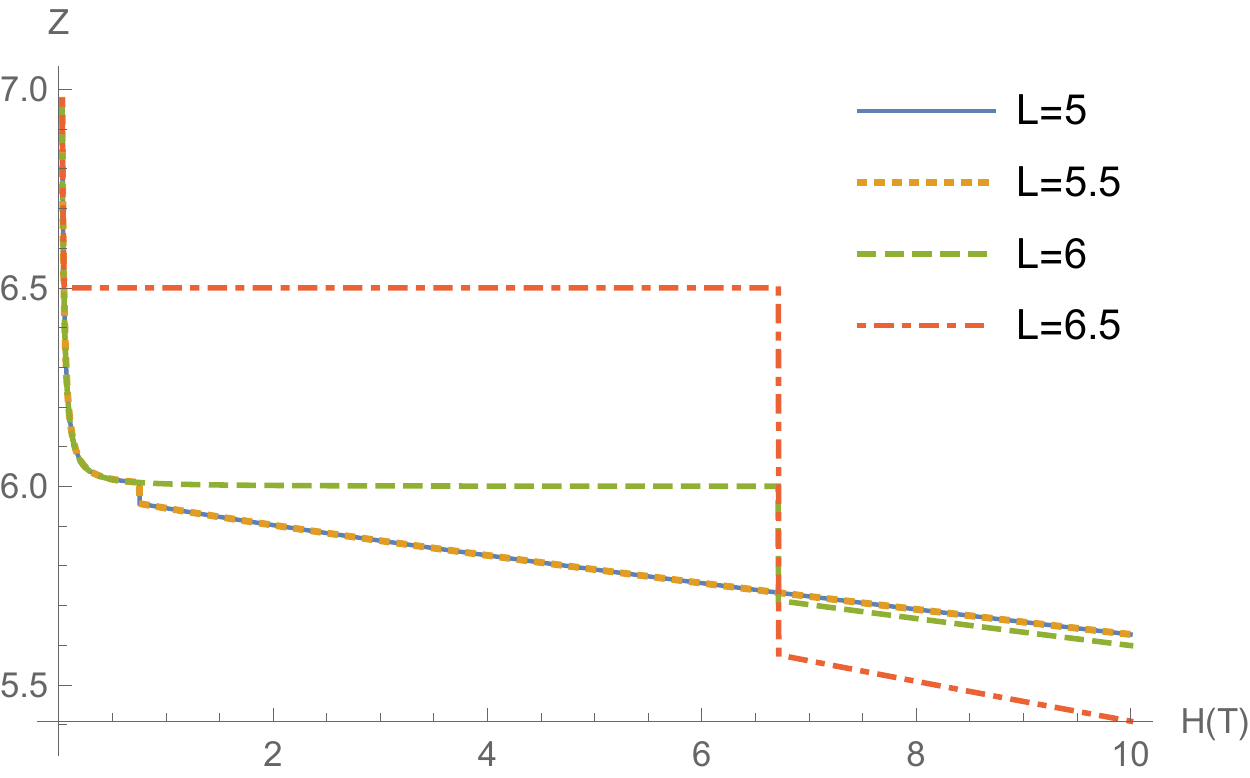}
	\caption{Graph of $ Z^* $ versus $ H(T) $ for different $ L\ $ when $\theta =6$, $\varepsilon=0.01$. }
	\label{LL}
\end{figure}
\vskip 5pt
To illustrate the impact of $ \varepsilon $ on the optimal wealth, we choose $ L=6$, $\theta =7.5 $ and consider $ \varepsilon=0.1,~0.01,~0.003$ and $0.001 $.  Fig.~\ref{vare} depicts the evolutions of optimal wealth corresponding to different $ \varepsilon$. The optimal wealth takes two- or three-region form depending on $ \varepsilon$. When $ \varepsilon$ decreases, the pension manager expects that the optimal terminal wealth higher than $L$ with a larger probability. When $\varepsilon=0.1,~0.01$, the VaR constraint is not binding and has no influence on the terminal wealth. As such, the optimal wealth coincides for $\varepsilon=0.1,~0.01$ in Fig.~\ref{vare}. When $ \varepsilon$ decreases from 0.003 to 0.001, the good-states region becomes bigger and the optimal wealth in the region also increases. Meanwhile, the intermediate-states region shrinks and the bad-states region enlarges. The optimal wealth in the  bad-states region increases with $\varepsilon$.
%
\begin{figure}[htbp] 
	\centering
	\includegraphics[width=0.7\linewidth]{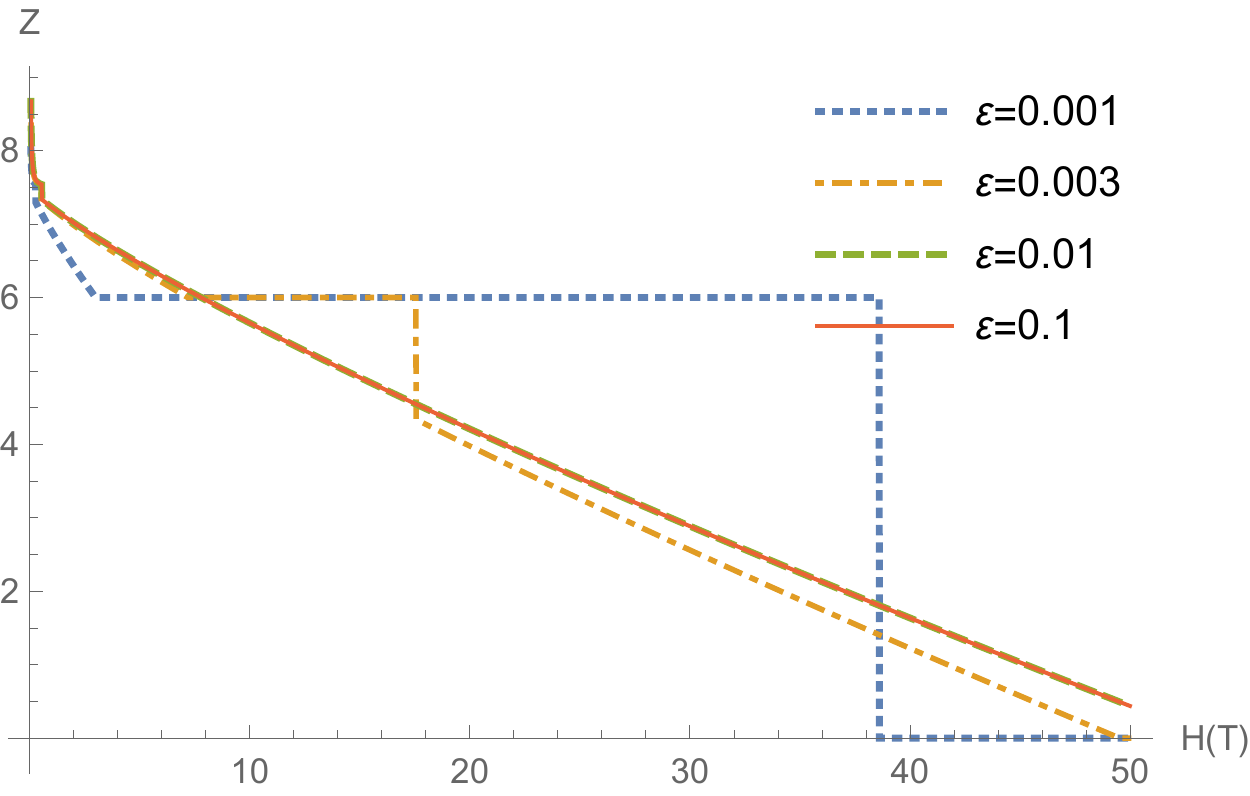}
	\caption{Graph of $ Z^* $ versus $ H(T) $ for different $ \varepsilon\ $ when $L=6,\theta =7.5 $. }
	\label{vare}
\end{figure}
\vskip 5pt
We also compare our results with the optimization goals in \cite{OIWS} and \cite{POWPR}. When $\varepsilon=1$, there is no VaR constraint and the optimization goal is the same as \cite{POWPR}. When $\varepsilon=0.01$, $ \nu=1$, the utility function is a piece-wise function  and has a similar form as in \cite{OIWS}. In Fig.~\ref{comp}, we see that the good-states region enlarges with VaR constraint. Besides, the optimal wealth in the region also increases. However, in the bad-states region, the optimal wealth becomes small with VaR constraint. Comparing with the optimization rule as in \cite{OIWS}, the good-states region shrinks and the bad-states region almost does not change. The optimal wealth in the bad-states region also increases. As such, we see that in the optimization problem with performance ratio, the manager sacrifices some gains in the good-states region to ensure the gains in the bad-states region.
\begin{figure}[htbp] 
	\centering
	\includegraphics[width=0.7\linewidth]{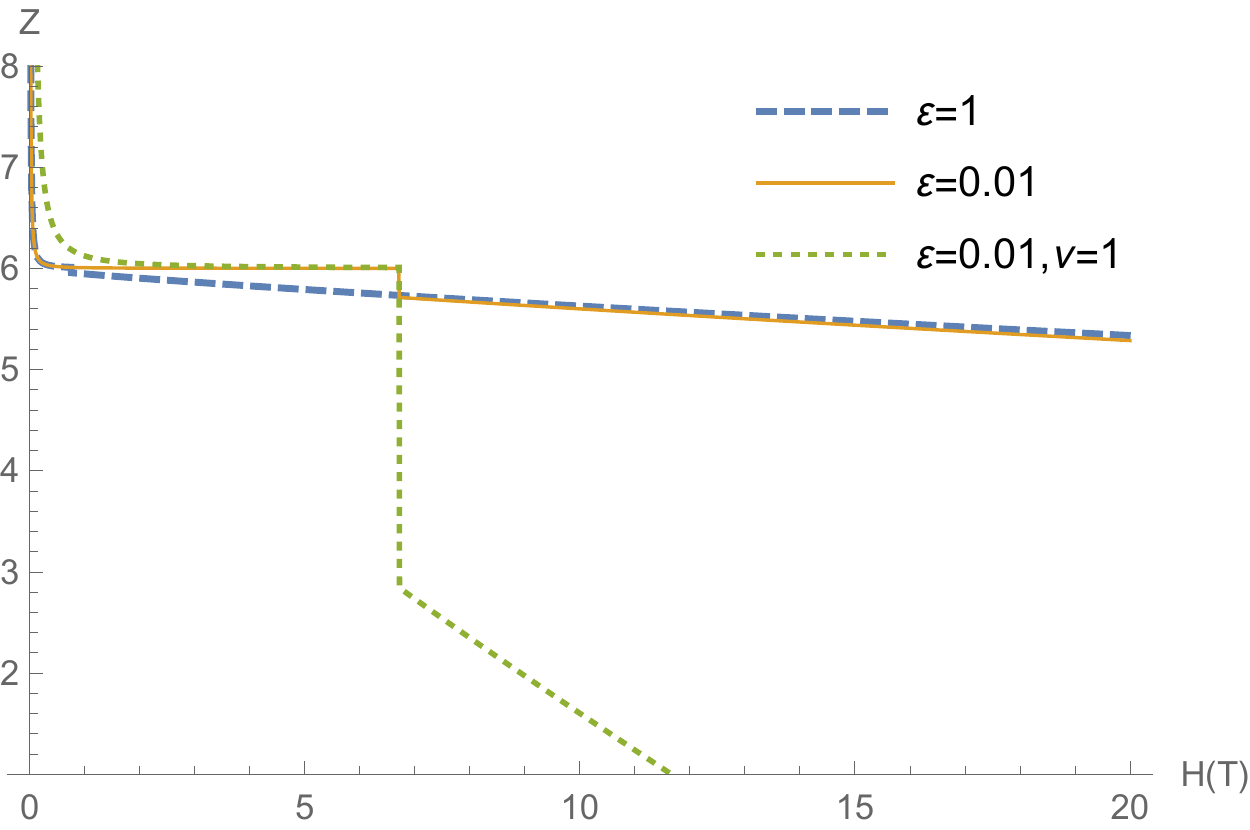}
	\caption{Graph of $ Z^* $ versus $ H(T) $ under different situations. }
	\label{comp}
\end{figure}

\vskip 10pt
\section{\bf Conclusion}
This paper studies the optimal management of DC pension fund under inflation risk. Different from many previous work, the optimization goal of the manager is the performance ratio defined as the reward function for over-performance divided by the penalty function for under-performance. Besides, to ensure the solvency of the fund, we consider VaR constraint at retirement time. The optimization problem is non-linear, non-self-financing and with additional constraint. We first transform the original problem into an equivalent self-financing model based on an auxiliary process. Then, by using martingale method, fractional programming method, Lagrange dual method and concavification  method, the optimal terminal wealth is obtained. To ensure the well-posedness of the problem, the reward function is concave. The optimal terminal wealth is presented for the concave and convex penalty function, separately.  For the convex penalty function, the manager has different risk aversions towards gains and losses in the linearized optimization problem  and there are in  total fourteen different cases. For the concave penalty function, the manager is risk seeking towards losses in the linearized optimization problem  and there are in total six cases, which is consistent with \cite{OIWS}.
\vskip 5pt
The optimal terminal wealth of the fund is obtained and the optimal investment strategies can be derived by replicating. Particularly, for the power reward and penalty functions, the explicit forms of the optimal investment strategies are obtained. There are two Lagrange multipliers in our paper, and we show the existence of them strictly in this paper. In some cases, the Lagrange multipliers may not exist and there is no optimal solution. In the end of this paper, we present sensitivity analysis to depict the economic behaviors of the pension manager.

\vskip 15pt
{\bf Acknowledgements.}
The authors acknowledge the support from the National Natural Science Foundation of China (Grant  No.11901574, No.11871036). The authors thank Dr. Litian Zhang  and
the members of the group of Mathematical Finance and Actuarial Science at the Department of Mathematical Sciences, Tsinghua University for their feedbacks and useful conversations.
\vskip 10pt

\appendix
\renewcommand{\theequation}{\thesection.\arabic{equation}}

\bibliographystyle{plainnat}
\bibliography{wpref}

\begin{thebibliography}{30}
\providecommand{\natexlab}[1]{#1}
\providecommand{\url}[1]{\texttt{#1}}
\expandafter\ifx\csname urlstyle\endcsname\relax
  \providecommand{\doi}[1]{doi: #1}\else
  \providecommand{\doi}{doi: \begingroup \urlstyle{rm}\Url}\fi

\bibitem[Basak(1995)]{Basak}
Suleyman Basak.
\newblock A general equilibrium model of portfolio insurance.
\newblock \emph{The Review of Financial Studies}, 8\penalty0 (4):\penalty0
  1059--1090, 1995.

\bibitem[Battocchio and Menoncin(2004)]{battocchio2004optimal}
Paolo Battocchio and Francesco Menoncin.
\newblock Optimal pension management in a stochastic framework.
\newblock \emph{Insurance: Mathematics and Economics}, 34\penalty0
  (1):\penalty0 79--95, 2004.

\bibitem[Bernard et~al.(2019)Bernard, Vanduffel, and Ye]{2019OptimalB}
Carole Bernard, Steven Vanduffel, and Jiang Ye.
\newblock Optimal strategies under {O}mega ratio.
\newblock \emph{European Journal of Operational Research}, 275\penalty0
  (2):\penalty0 755--767, 2019.

\bibitem[Boulier et~al.(2001)Boulier, Huang, and Taillard]{boulier2001optimal}
Jean-Francois Boulier, ShaoJuan Huang, and Gregory Taillard.
\newblock Optimal management under stochastic interest rates: the case of a
  protected defined contribution pension fund.
\newblock \emph{Insurance: Mathematics and Economics}, 28\penalty0
  (2):\penalty0 173--189, 2001.

\bibitem[Cairns et~al.(2006)Cairns, Blake, and Dowd]{CBD}
Andrew~JG Cairns, David Blake, and Kevin Dowd.
\newblock Stochastic lifestyling: Optimal dynamic asset allocation for defined
  contribution pension plans.
\newblock \emph{Journal of Economic Dynamics and Control}, 30\penalty0
  (5):\penalty0 843--877, 2006.

\bibitem[Cox and Huang(1989)]{Cox}
John~C Cox and Chi-fu Huang.
\newblock Optimal consumption and portfolio policies when asset prices follow a
  diffusion process.
\newblock \emph{Journal of Economic Theory}, 49\penalty0 (1):\penalty0 33--83,
  1989.

\bibitem[Dong and Zheng(2020)]{OIWS}
Yinghui Dong and Harry Zheng.
\newblock Optimal investment with {S}-shaped utility and trading and {V}alue at
  {R}isk constraints: An application to defined contribution pension plan.
\newblock \emph{European Journal of Operational Research}, 281\penalty0
  (2):\penalty0 341--356, 2020.

\bibitem[Dos~Reis et~al.(2010)Dos~Reis, Gaspar, and Vicente]{dos2010solvency}
Alfredo D~Eg{\'\i}dio Dos~Reis, Raquel~M Gaspar, and Ana~T Vicente.
\newblock Solvency {II}-an important case in applied {V}a{R}.
\newblock \emph{The VaR Modeling Handbook: Practical Applications in
  Alternative Investing, Banking, Insurance, and Portfolio Management}, page
  267, 2010.

\bibitem[Gao(2009)]{gao2009optimal}
Jianwei Gao.
\newblock Optimal portfolios for {DC} pension plans under a {CEV} model.
\newblock \emph{Insurance: Mathematics and Economics}, 44\penalty0
  (3):\penalty0 479--490, 2009.

\bibitem[Guan and Liang(2014{\natexlab{a}})]{GL}
Guohui Guan and Zongxia Liang.
\newblock Optimal reinsurance and investment strategies for insurer under
  interest rate and inflation risks.
\newblock \emph{Insurance: Mathematics and Economics}, 55:\penalty0 105--115,
  2014{\natexlab{a}}.

\bibitem[Guan and Liang(2014{\natexlab{b}})]{guan2014optimal}
Guohui Guan and Zongxia Liang.
\newblock Optimal management of {DC} pension plan in a stochastic interest rate
  and stochastic volatility framework.
\newblock \emph{Insurance: Mathematics and Economics}, 57:\penalty0 58--66,
  2014{\natexlab{b}}.

\bibitem[Guan and Liang(2016)]{guan2016optimal}
Guohui Guan and Zongxia Liang.
\newblock Optimal management of {DC} pension plan under loss aversion and
  {V}alue-at-{R}isk constraints.
\newblock \emph{Insurance: Mathematics and Economics}, 69:\penalty0 224--237,
  2016.

\bibitem[Han and Hung(2012)]{han2012optimal}
Nan~Wei Han and Mao~Wei Hung.
\newblock Optimal asset allocation for {DC} pension plans under inflation.
\newblock \emph{Insurance: Mathematics and Economics}, 51\penalty0
  (1):\penalty0 172--181, 2012.

\bibitem[He and Kou(2018)]{He2018PROFIT}
Xue~Dong He and Steven Kou.
\newblock Profit sharing in hedge funds.
\newblock \emph{Mathematical Finance}, 28\penalty0 (1):\penalty0 50--81, 2018.

\bibitem[He and Chen(2020)]{he2020optimal}
Yong He and Peimin Chen.
\newblock Optimal investment strategy under the {CEV} model with stochastic
  interest rate.
\newblock \emph{Mathematical Problems in Engineering}, 2020, 2020.

\bibitem[Jin and Zhou(2008)]{JinZhou}
Hanqing Jin and Xunyu Zhou.
\newblock Behavioral portfolio selection in continuous time.
\newblock \emph{Mathematical Finance}, 18\penalty0 (3):\penalty0 385--426,
  2008.

\bibitem[Jin et~al.(2008)Jin, Xu, and Zhou]{JinZhouConv}
Hanqing Jin, ZuoQuan Xu, and XunYu Zhou.
\newblock A convex stochastic optimization problem arising from portfolio
  selection.
\newblock \emph{Mathematical Finance}, 18\penalty0 (1):\penalty0 171--183,
  2008.

\bibitem[Kahneman and Tversky(1979)]{tversky1974judgment}
Daniel Kahneman and Amos Tversky.
\newblock Prospect theory: An analysis of decision under risk.
\newblock \emph{Econometrica}, 47\penalty0 (2):\penalty0 263--291, 1979.

\bibitem[Keating and Shadwick(2002)]{OmegaRatio}
Con Keating and William~F Shadwick.
\newblock A universal performance measure.
\newblock \emph{Journal of Performance Measurement}, 6\penalty0 (3):\penalty0
  59--84, 2002.

\bibitem[Lin et~al.(2017)Lin, Saunders, and Weng]{SaundersLin}
Hongcan Lin, David Saunders, and Chengguo Weng.
\newblock Optimal investment strategies for participating contracts.
\newblock \emph{Insurance: Mathematics and Economics}, 73:\penalty0 137--155,
  2017.

\bibitem[Lin et~al.(2019)Lin, Saunders, and Weng]{POWPR}
Hongcan Lin, David Saunders, and Chengguo Weng.
\newblock Portfolio optimization with performance ratios.
\newblock \emph{International Journal of Theoretical and Applied Finance},
  22\penalty0 (05):\penalty0 1950022, 2019.

\bibitem[Ma et~al.(2020)Ma, Zhao, and Rong]{ma2020optimal}
Jie Ma, Hui Zhao, and Ximin Rong.
\newblock Optimal investment strategy for a {DC} pension plan with mispricing
  under the {H}eston model.
\newblock \emph{Communications in Statistics-Theory and Methods}, 49\penalty0
  (13):\penalty0 3168--3183, 2020.

\bibitem[Markowitz(1952)]{Markowitz1952Portfolio}
Harry~M Markowitz.
\newblock Portfolio selection.
\newblock \emph{Journal of Finance}, 7\penalty0 (1):\penalty0 77--91, 1952.

\bibitem[Mudzimbabwe(2019)]{mudzimbabwe2019simple}
Walter Mudzimbabwe.
\newblock A simple numerical solution for an optimal investment strategy for a
  {DC} pension plan in a jump diffusion model.
\newblock \emph{Journal of Computational and Applied Mathematics},
  360:\penalty0 55--61, 2019.

\bibitem[Njoku et~al.(2017)Njoku, Osu, Akpanibah, and Ujumadu]{njoku2017effect}
KNC Njoku, Bright~O Osu, Edikan~E Akpanibah, and Rosemary~N Ujumadu.
\newblock Effect of extra contribution on stochastic optimal investment
  strategies for {DC} pension with stochastic salary under the affine interest
  rate model.
\newblock \emph{Journal of Mathematical Finance}, 7\penalty0 (4):\penalty0
  821--833, 2017.

\bibitem[Sun et~al.(2016)Sun, Li, and Zeng]{sun2016precommitment}
Jingyun Sun, Zhongfei Li, and Yan Zeng.
\newblock Precommitment and equilibrium investment strategies for defined
  contribution pension plans under a jump-diffusion model.
\newblock \emph{Insurance: Mathematics and Economics}, 67:\penalty0 158--172,
  2016.

\bibitem[Vigna(2009)]{vigna2009mean}
Elena Vigna.
\newblock Mean-variance inefficiency of {CRRA} and {CARA} utility functions for
  portfolio selection in defined contribution pension schemes.
\newblock \emph{Collegio Carlo Alberto Notebook 108, or CeRP wp 89}, 9, 2009.
\newblock
  \url{https://pdfs.semanticscholar.org/63bd/43588f4f03ee2829ee20156e8847c747145e.pdf}.

\bibitem[Zhang and Ewald(2010)]{ZE}
Aihua Zhang and Christian~Oliver Ewald.
\newblock Optimal investment for a pension fund under inflation risk.
\newblock \emph{Mathematical Methods of Operations Research}, 71\penalty0
  (2):\penalty0 353--369, 2010.

\bibitem[Zhang et~al.(2007)Zhang, Korn, and Ewald]{ZKE}
Aihua Zhang, Ralf Korn, and Christian~Oliver Ewald.
\newblock Optimal management and inflation protection for defined contribution
  pension plans.
\newblock \emph{Bl{\"a}tter der DGVFM}, 28\penalty0 (2):\penalty0 239--258,
  2007.

\bibitem[Zhang and Rong(2013)]{zhang2013optimal}
Chubing Zhang and Ximing Rong.
\newblock Optimal investment strategies for {DC} pension with stochastic salary
  under the affine interest rate model.
\newblock \emph{Discrete Dynamics in Nature and Society}, 2013, 2013.

\end{thebibliography}

\end{document}